\newcommand{\opt}{\mathsf{opt}}
\algnewcommand\algorithmicwhen{\textbf{when}}%
  \apptocmd{\EndWhen}{\algpx@endIndent}{}{}%
\pretocmd{\When}{\algpx@endCodeCommand}{}{}
  \pretocmd{\EndWhen}{\algpx@endCodeCommand[1]}{}{}%
  \pretocmd{\EndWhen}{\algpx@endCodeCommand[0]}{}{}%
\newtheoremstyle{myplain}
  {\topsep}   
  {\topsep}   
  {\itshape}  
  {}          
  {\bfseries} 
  {.}         
  { }         
  {}          
\theoremstyle{myplain}
\newtheorem{claim}{Claim}
\newtheorem{lemma}{Lemma}
\newtheorem{theorem}{Theorem}
\newtheorem{definition}{Definition}
\newcommand{\wpap}{\textsc{WPAP}}
\newcommand{\online}{\textsc{ALG}}
\renewcommand{\opt}{\mathsf{OPT}}
\newcommand{\excess}{\mathsf{excess}}
\newcommand{\layer}{\mathsf{layer}}
\newcommand{\cost}{\mathsf{cost}}
\newcommand{\dist}{\mathsf{dist}}
\newcommand{\DP}{\mathsf{DP}}
\newcommand{\alg}{\mathsf{ALG}}
\newcommand{\ice}{\mathsf{ICE}}
\renewcommand{\wpap}{\mathsf{WPAP}}
\renewcommand{\S}{\mathcal{S}}
\renewcommand{\L}{\mathcal{L}}
\def\FULL{} 
\author{Afrouz Jabal Ameli\\
	Utrecht University\\
	{\small \texttt{a.jabalameli@uu.nl}} \and 
    Laura Sanit\`a\\Bocconi University\\
    {\small \texttt{laura.sanita@unibocconi.it}} \and
    Moritz Venzin\\Bocconi University\\
    {\small \texttt{moritz.venzin@unibocconi.it}}
 }
\title{Learning-Augmented Online Covering Problems}
\begin{document}
\maketitle
\begin{abstract}
    \noindent We give a very general and simple framework to incorporate predictions on requests for online covering problems in a rigorous and black-box manner. Our framework turns any online algorithm with competitive ratio $\rho(k, \cdot)$ depending on $k$, the number of arriving requests, into an algorithm with competitive ratio of $\rho(\eta, \cdot)$, where $\eta$ is the prediction error. With accurate enough prediction, the resulting competitive ratio breaks through the corresponding worst-case online lower bounds, and smoothly degrades as the prediction error grows. This framework directly applies to a wide range of well-studied online covering problems such as facility location, Steiner problems, set cover, parking permit, etc., and yields improved and novel bounds.  
\end{abstract}

\section{Introduction}
In recent years, there has been significant work to incorporate \emph{ML-advice} in the design of \emph{online algorithms}. The basic premise is to provide additional information to an online algorithm to improve its guarantees. Ideally, whenever the prediction is correct, we expect to break through worst-case lower bounds, and the guarantees should degrade smoothly with the prediction error. This paradigm has been applied (quite successfully) to a wide array of online problems, ranging from caching~\cite{LykourisVassilvitskii21, AntoniadisCoesterEliasPolakSimon23}, scheduling~\cite{KumarPurohitSvitkina18, LattanziLavastidaMoseleyVassilvitskii20}, graph algorithms~\cite{MoseleyXu, AzarPanigrahiTouitouOnlineGraphAlgo} and matching/secretary problems~\cite{DuettingLattanziPaesVassilvitskii21}, to name a few. The rapid evolving of this research direction, often referred to as \emph{learning-augmented algorithms}, is evidenced by the collection of papers in the dedicated website~\cite{Algo_with_pred_website}. Given the attention that the field has gained in the past years, designing general approaches for learning-augmented algorithms is of crucial interest for the community. In this paper, we contribute significantly to this goal by providing a new framework which applies to a general class of online \emph{covering} problems.

To formally introduce our results, we begin by defining the \emph{competitive ratio}. An online algorithm $\alg$ for a given minimization problem is $\rho$-competitive, if for any input $I$, the cost of the online algorithm is at most $\rho$ times the best solution for $I$, that is $\cost(\alg(I)) \leq \rho\cdot \cost(\opt(I))$. 
As mentioned, our framework will apply to online covering problems: Generally speaking, in these problems we are given a sequence of requests $X$ that appear online one by one, and a weighted solution space $\S$. We have to maintain in each step a solution to accommodate them. The term `covering' refers to the following condition, which we assume to hold: 
 For feasible solutions $\S_1$ to $X_1$ and $\S_2$ for $X_2$, $\S_1 \cup \S_2$ is feasible for $X_1 \cup X_2$. 

 In our learning-augmented setting, we have a set \smash{$\hat{X}$} of \emph{predicted} requests. 
 The \emph{prediction error} $\eta$ is then defined as the symmetric difference between 
 \smash{$\hat{X}$} and the set of actually arriving requests $X$, capped at $|X|$. Formally,
 \smash{$\eta := \min(|X|, |X \triangle\hat{X}|)$}, where  \smash{$X \triangle\hat{X} := X\setminus\hat{X} \cup \hat{X}\setminus X$} (this is sometimes referred to as the $\ell_1$ error). 
In the following, we let $k := |X|$.

Our first main result shows that any online covering algorithm with competitive ratio $\rho(k, \cdot)$ depending on $k$, the number of arriving requests, can be turned into a learning-augmented, $O(\rho(\eta, \cdot))$-competitive online algorithm.

\begin{theorem}[Existence]\label{thm:exp_variant}
    Let $\alg$ be an algorithm for an online covering problem with a competitive ratio $\rho(k, \cdot)$, where $k$ is the number of arriving requests. Then, given a prediction \smash{$\hat{X}$}, there exists an online algorithm with a competitive ratio of $O(\rho(\eta, \cdot))$.
\end{theorem}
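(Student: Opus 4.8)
The plan is to run the online algorithm $\alg$ essentially on the predicted instance, but in a way that remains robust when the actual requests deviate from the predictions. The intuition is this: if the prediction $\hat{X}$ were perfect ($\eta = 0$), we would like to ``pre-pay'' for a good solution covering all of $\hat{X}$ and simply use it; since $\hat{X}$ has $|\hat X|$ requests, this costs at most a factor related to $\rho(|\hat X|,\cdot)$ versus the optimum, but we want the dependence on $\eta$, not on $|\hat X|$. So first I would not use the full prediction blindly. Instead, I would guess (by a standard doubling / geometric-scaling argument over $O(\log k)$ candidate values) a threshold $t$ that is within a constant factor of $\eta$, and design a subroutine parametrized by $t$.

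For a fixed guess $t$, the key step is to maintain two things in parallel. First, a ``prediction-based'' component: a fixed feasible solution $\S_{\hat X}$ for the predicted requests $\hat X$, computed offline once at the start (or incrementally), whose cost is comparable to $\cost(\opt(\hat X))$. Second, a ``correction'' component: run a fresh copy of $\alg$ online, but only feed it the requests that are \emph{not} already covered by $\S_{\hat X}$. The covering property is exactly what makes this sound: the union of $\S_{\hat X}$ and whatever $\alg$ builds on the uncovered requests is feasible for the true set $X$. The number of requests ever handed to this copy of $\alg$ is at most $|X \setminus \hat X| \le \eta$ (requests in $X \cap \hat X$ are covered by $\S_{\hat X}$, assuming $\S_{\hat X}$ is chosen to cover each predicted request), so that copy of $\alg$ pays at most $\rho(\eta,\cdot)\cdot\cost(\opt(\text{uncovered requests})) \le \rho(\eta,\cdot)\cdot\cost(\opt(X))$. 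The remaining task is to bound $\cost(\S_{\hat X})$ against $\cost(\opt(X))$: here one uses that $\opt(X)$ together with an optimal solution for the at-most-$\eta$ requests in $\hat X \setminus X$ forms a feasible solution for $\hat X$, so $\cost(\opt(\hat X)) \le \cost(\opt(X)) + \cost(\opt(\hat X \setminus X))$; and the latter term, being an instance with at most $\eta$ requests, can itself be absorbed — but this is circular unless we bound it differently. The cleaner route is to not take $\S_{\hat X} = \opt(\hat X)$ but rather to also run a copy of $\alg$ on $\hat X$ offline and, crucially, to drop from $\hat X$ the requests that turn out not to arrive; I expect the right object is: precompute offline a $\rho(|\hat X|,\cdot)$-style solution, but then argue that we only ever need the ``restriction'' of it relevant to $X \cap \hat X$, whose optimum is at most $\cost(\opt(X))$.

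Putting the guesses together: run $O(\log k)$ copies of the above subroutine, one per candidate threshold $t \in \{1,2,4,\dots\}$, and at each step commit to following the copy whose current total cost is smallest, using a standard ``switching between online solutions'' argument — since solutions are monotone (covering), switching only adds costs, never removes them, and the overhead is a constant factor per level, or $O(\log k)$ overall, which can be folded into the $O(\cdot)$ or removed by a more careful potential argument. For the correct guess $t = \Theta(\eta)$, the subroutine is $O(\rho(\eta,\cdot))$-competitive by the analysis above, and the selection rule guarantees we do no worse than a constant times the best level.

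The main obstacle I anticipate is the bound on the prediction-based component $\cost(\S_{\hat X})$ in terms of $\cost(\opt(X))$ \emph{without} incurring a $\rho(|\hat X|,\cdot)$ factor. The resolution must exploit that requests in $\hat X \setminus X$ never actually arrive, so they should never force us to ``pay'' through the online machinery — we only pay for them offline, and an offline optimum on $\le \eta$ requests is a quantity we must either charge to $\cost(\opt(X))$ directly (possible in some problems, e.g.\ when adding few requests changes the optimum by little) or, more robustly, feed those $\le \eta$ leftover predicted requests into \emph{another} copy of $\alg$, getting a $\rho(\eta,\cdot)$ bound again. Handling this carefully, and making sure the ``which requests are already covered'' bookkeeping is consistent across the online arrival order (a request in $X \cap \hat X$ might arrive before $\S_{\hat X}$ is fully constructed), is where the real work lies; I would likely restructure so that $\S_{\hat X}$ is computed in full at time zero from the prediction alone, sidestepping the ordering issue entirely.
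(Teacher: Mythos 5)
There is a genuine gap, and you have already put your finger on it yourself: the cost of the precomputed prediction component $\S_{\hat X}$ cannot be bounded by $O(\rho(\eta,\cdot))\cdot\cost(\opt(X))$. Any feasible solution covering \emph{all} of $\hat X$ must pay for the requests in $\hat X\setminus X$ as well, and those can be arbitrarily expensive compared to $\opt(X)$ even when $|\hat X\setminus X|\le \eta$ is tiny — think of a single mispredicted element whose only covering set has enormous cost. Your two proposed escapes do not close this. Feeding the $\hat X\setminus X$ requests to another copy of $\alg$ is a no-op, since those requests never arrive and hence trigger no online cost; the problem is that you already paid for them at time zero when you bought $\S_{\hat X}$. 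And ``restricting'' a solution for $\hat X$ to the part that covers $X\cap\hat X$ is not a well-defined operation for general covering problems (a single expensive set covering many elements cannot be decomposed), nor is $X\cap\hat X$ known at time zero. Guessing $t\approx\eta$ by doubling does not help either: with the correct $t$ in hand you would still have no way to select offline a cheap solution covering $\hat X$, because cheapness depends on \emph{which} $t$ predicted requests are wrong, not just how many.

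The paper's resolution is precisely to \emph{never} buy a full prediction-based solution up front. Instead it computes offline a nested decomposition $\hat X = X_1\sqcup X_2\sqcup\cdots$ with partial solutions $\S_1,\S_2,\ldots$ (each $X_i$ taking at least half the remaining prediction, with controlled cost growth), runs two copies of $\alg$ — one on $X\setminus\hat X$, as you proposed, and one on $X\cap\hat X$ — and buys $\S_i$ only once the second copy's accumulated spending on $X\cap\hat X$ has reached $c(\S_i)$, resetting that copy afterwards. This charging rule is what lets one avoid ever paying for the expensive tail of $\hat X$ when it doesn't materialize: you are always charging the purchase of $\S_i$ against cost the online algorithm provably had to incur on requests that actually arrived, and the decomposition's properties translate that cost into a lower bound on $\opt$ in terms of $\eta$ rather than $|\hat X|$ or $k$. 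Your framing of the two components and the observation that $\alg$ on $X\setminus\hat X$ is $\rho(\eta,\cdot)$-competitive are correct and match the paper; what is missing is the iterative, charge-triggered purchase of a decomposition in place of a one-shot $\S_{\hat X}$, and without it the argument does not go through.
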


Our result demonstrates that for any online covering problem, it is always possible to incorporate predictions and obtain a refined competitive ratio that depends on the prediction error~$\eta$. The competitive ratio provided by our framework is \emph{robust}, it is never worse than that of the underlying online algorithm ($\eta \leq k$ by definition), but is significantly better when the predictions are accurate.

Our framework is based on a charging scheme that relies on a subtle \emph{decomposition} of the solution space. Given some prediction \smash{$\hat{X}$}, we compute \emph{offline} partial solutions \smash{$\S_1, \S_2, \ldots \subseteq \S$}, such that their union is feasible for \smash{$\hat{X}$}. In the online phase, we use a $\rho(k, \cdot)$-competitive algorithm to satisfy the requests in a black-box fashion. We only keep track of its expenses: whenever its cost exceeds that of some partial solution $\S_i$, we buy $\S_i$ (in addition to whatever the online algorithm buys). Interestingly, this simple framework turns out to be powerful enough to guarantee a competitive ratio that depends on the misprediction error, rather than the actual request sequence.

We then turn into discussing \emph{efficiency}. In fact, 
whenever the decomposition mentioned above can be computed efficiently, then the overall online algorithm turns into a polynomial time algorithm. However, while such a decomposition always exists (yielding the existential result), it is not always possible to compute it efficiently. This is not surprising, as there are covering problems for which a $O(\rho(\eta, \cdot))$-competitive algorithm \emph{cannot} be achieved in polynomial-time, assuming $\mathsf{P}\neq\mathsf{NP}$. An example of this is the classical Set Cover problem (see Theorem~\ref{thm:LBSetCover}). Nevertheless, we are still able to provide a \emph{relaxed} notion
of decomposition, that we call an \emph{$(\alpha, \gamma)$-decomposition} (details in Section~\ref{sec:framework_formal}). This relaxed notion allows us to smoothly `compromise' between the competitive ratio and the efficiency of the overall framework. The resulting guarantees are resumed in the following theorem.

\begin{theorem}[Efficiency]\label{thm:poly_variant}
    Let $\alg$ be an algorithm for an online covering problem
    with a competitive ratio $\rho(k, \cdot)$, where $k$ is the number of arriving requests.
    Assume that an $(\alpha, \gamma)$-decomposition for the \emph{offline} version of the covering problem can be computed in polynomial-time. Then, given a prediction \smash{$\hat{X}$}, there exists a polynomial-time online algorithm with a competitive ratio of 
    $$O(\alpha)\cdot \rho(\eta, \cdot) + O(\alpha\cdot \log^{-1}(\tfrac{\gamma}{\gamma-1})\cdot \log k).$$
    In particular, if $\gamma=1$, the competitive ratio becomes $O(\alpha \cdot \rho(\eta, \cdot)).$ 
\end{theorem}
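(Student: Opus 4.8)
The plan is to execute the charging scheme sketched in the introduction, with two changes that I will have to pay for: (a) the black‑box algorithm $\alg$ is fed only the requests that are not yet covered by what has been bought, and it is run in phases, restarted each time a new partial solution is purchased; and (b) the ideal decomposition is replaced by the polynomially computable $(\alpha,\gamma)$‑decomposition. The work is then to track how $\alpha$ and $\gamma$ propagate into the competitive ratio. I would organise the argument as: the algorithm, feasibility, the cost analysis (the real work), and the main obstacle.

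\textbf{Algorithm and feasibility.} Preprocess by computing the $(\alpha,\gamma)$‑decomposition of the offline instance on $\hat X$, obtaining partial solutions $\S_1,\dots,\S_t$ whose prefixes $F_i:=\S_1\cup\dots\cup\S_i$ I use through three properties of the decomposition: (i) $F_i$ is feasible for $\hat X$ except a residual set $R_i$, the sizes dropping from $|R_0|=|\hat X|$ to $|R_t|=0$; (ii) $\cost(F_i)\le\alpha\cdot\opt(\hat X\setminus R_i)$; and (iii) the costs $\cost(F_i)$ form a geometric sequence, with ratio and length governed by $\gamma$ so that $t=O(\log^{-1}(\tfrac{\gamma}{\gamma-1})\log k)$. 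Online, maintain a pointer $i$, initially $0$ ($F_0=\emptyset$), treat $F_i$ as already purchased, and run a fresh copy of $\alg$; forward an arriving request to the current copy exactly if it is not yet covered by the current overall solution. As soon as the running copy's accumulated cost first exceeds $\cost(F_{i+1})$, purchase $\S_{i+1}$, advance the pointer, discard the current copy of $\alg$, and launch a new one. The output is $F_{i^\star}$ (for the final pointer $i^\star$) together with the union of all solutions built by all copies of $\alg$. Feasibility is exactly where the covering hypothesis is used: every request, when it arrived, was either covered by the then‑current prefix — and prefixes are only ever enlarged — or forwarded to the then‑live copy of $\alg$, which covered it; the output contains all of these, so it is feasible for $X$.

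\textbf{Cost.} Put $\opt:=\opt(X)$, and split the output cost into the bought prefix $F_{i^\star}$ and the aggregate cost of all copies of $\alg$. The pivotal observation is that \emph{every prefix $F_i$ with $|R_i|\ge\eta$ satisfies $\cost(F_i)\le\alpha\cdot\opt$}: since $\eta=\min(|X|,|X\triangle\hat X|)$ forces $|\hat X\setminus X|\le\eta$, an optimal solution for $X$ already covers $\hat X$ minus at most $\eta$ requests, so $\opt(\hat X\setminus R_i)\le\opt$ when $|R_i|\ge\eta$, and (ii) concludes. I would then split the phases $i$ according to whether $|R_i|\ge\eta$ (``good'' phases) or not. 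A copy of $\alg$ run in a good phase, and any prefix $F_i$ of a good phase, costs $O(\alpha)\cdot\opt$ by the observation (a single boundary phase aside, which loses an extra $\gamma$‑factor); since there are at most $t=O(\log^{-1}(\tfrac{\gamma}{\gamma-1})\log k)$ phases, this accounts for the additive $O(\alpha\cdot\log^{-1}(\tfrac{\gamma}{\gamma-1})\cdot\log k)\cdot\opt$ term. In the remaining phases the residual is already below $\eta$, so the copy of $\alg$ there is run on at most the $\le\eta$ requests of $X\setminus\hat X$ together with the $<\eta$ requests of $X$ in the current residual, i.e. on $O(\eta)$ requests; since the optimum of any sub‑instance is at most $\opt$, such a copy costs $O(\rho(\eta,\cdot))\cdot\opt$, and the bought parts of these phases form a geometric series by (iii) that sums — up to the $\gamma$‑dependent geometric factor — to the last one, again $O(\rho(\eta,\cdot))\cdot\opt$ by the same stream‑length bound. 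Combining yields $O(\alpha)\cdot\rho(\eta,\cdot)\cdot\opt + O(\alpha\cdot\log^{-1}(\tfrac{\gamma}{\gamma-1})\cdot\log k)\cdot\opt$; for $\gamma=1$ the decomposition is exact, there are no phases of overshoot past $\eta$, and the additive term vanishes.

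\textbf{Main obstacle.} The delicate part is the handling of the ``boundary'' residual size $\eta$: one must verify that once the residual has dropped below $\eta$, every copy of $\alg$ and the geometric tail of bought parts can be charged to $O(\rho(\eta,\cdot))\cdot\opt$ — crucially because those copies see only $O(\eta)$ requests, \emph{even though} $\opt(\hat X\setminus R_i)$ may by then be far larger than $\opt$ so property (ii) no longer gives anything useful — while the total over the good phases is charged to the additive term, using that there are only $t$ of them. Getting these two regimes to match up cleanly, in particular bounding the overshoot of the purchasing rule against the geometric cost sequence, is exactly what property (iii) of the $(\alpha,\gamma)$‑decomposition is tailored for; the remaining estimates are routine bookkeeping with its definition.
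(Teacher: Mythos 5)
Your high-level plan — iteratively buy partial solutions from an offline decomposition whenever the black-box $\alg$'s accumulated cost crosses a threshold, then split the analysis into phases before and after the residual drops below the error — matches the paper's $\ice$ procedure. But the middle of your argument, where $\alpha$ and $\gamma$ have to be tracked, has two compensating errors that make the final expression come out right for the wrong reasons.

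\textbf{Where $\gamma$ actually enters.} Your ``property~(iii)'' — that the decomposition has $t=O(\log^{-1}(\tfrac{\gamma}{\gamma-1})\log k)$ phases — is not what an $(\alpha,\gamma)$-decomposition gives. By the paper's Property~(A), each $X_i$ contains at least half of the remaining residual, so the number of layers is $O(\log|\hat X|)=O(\log k)$ and \emph{does not depend on $\gamma$}. The factor $\log^{-1}(\tfrac{\gamma}{\gamma-1})$ comes from a completely different place: to produce a single $\S_i$ covering a prescribed number of elements, the $(\alpha,\gamma)$-decomposability primitive only covers a $1/\gamma$ fraction per call, so one iterates it $O(\log_{\gamma/(\gamma-1)}|\hat X|)$ times, which blows up the \emph{per-layer} approximation from $\alpha$ to $g_{\alpha,\gamma}(|\hat X|)=\alpha\bigl(1+\log^{-1}(\tfrac{\gamma}{\gamma-1})\log|\hat X|\bigr)$ (Lemma~\ref{lem:DecompositionToApprox}). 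Consequently your ``property~(ii)'', $\cost(F_i)\le\alpha\cdot\opt(\hat X\setminus R_i)$, should read $g_{\alpha,\gamma}$ in place of $\alpha$; and Property~(D) in the paper actually compares $c(\S_i)$ to the cheapest cover of \emph{any} $|X_i|$ elements of $R_{i-1}$, not of the specific set $\hat X\setminus R_i$ — this distinction is essential because the requests $\opt$ covers need not be the ones your decomposition chose, and elements of $\hat X\setminus X$ may be arbitrarily expensive, so $\opt(\hat X\setminus R_i)\le\opt$ is not valid as stated.

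\textbf{Why the bookkeeping then breaks.} Once $\gamma$ is put in the right place, your accounting for the ``good'' phases no longer closes: you bound each of them by $O(\alpha)\cdot\opt$ and multiply by a phase count of $O(\log^{-1}(\tfrac{\gamma}{\gamma-1})\log k)$. With the corrected facts — per-phase cost $O(g_{\alpha,\gamma})\cdot\opt$ and $O(\log k)$ phases — that sum is $O(\alpha\log k+\alpha\log^{-1}(\tfrac{\gamma}{\gamma-1})\log^2 k)\cdot\opt$, a full $\log k$ worse than claimed. The paper escapes this precisely because of its Property~(B): consecutive pairs $c(\S_j)+c(\S_{j+1})$ grow geometrically, so (Claim~\ref{claim:UpperboundOnCostGen}) the total cost is dominated by the last two layers alone, not by a $\log k$-term sum. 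You invoke a geometric series only for the ``bad'' phases; to make the argument work, you need it for the ``good'' ones as well, and this is exactly the nontrivial content of the decomposition's Property~(B) which you did not identify.

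In short, the two errors — attributing $\log^{-1}(\tfrac{\gamma}{\gamma-1})$ to the number of phases rather than the per-phase approximation, and dropping the geometric-domination of good phases — cancel in the headline formula but leave a genuine gap. A correct proof needs the additional structural properties (B), (C), (D) that the paper's decomposition is engineered to satisfy, together with Lemmas~\ref{Lem:LinearBoundOnOptGen}--\ref{Lem:LogarithmicBoundOnOptGen} that compare each $c(\S_j)$ to $\opt$ via ``$\opt$ must cover $|X_j|$ elements of $R_{j-1}$'' rather than ``$\opt$ covers $\hat X\setminus R_i$.''
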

Note that a $(1,1)$-decomposition always exists and yields Theorem~\ref{thm:exp_variant}.
This \emph{additive} term bypasses the aforementioned impossibility to turn any $\rho(k, \cdot)$-competitive algorithm into a learning-augmented, $O(\rho(\eta, \cdot))$-competitive algorithm in polynomial time.

We demonstrate the versatility of our approach, by applying it to the vast class of online covering problems. In particular, our framework yields
the \emph{first} learning-augmented algorithms when considering prediction on requests, for classical problems like
set cover, non metric facility location, path augmentation, and hard generalizations of Steiner problems.

We conclude this introduction by giving an account of example applications (Section~\ref{sec:applications}), and a comparison to related work (Section~\ref{sec:comparison_to_related_work}). Our framework is formally presented in Section~\ref{sec:framework_formal}, and the details for our example applications are discussed in Section~\ref{sec:applications_formal}. In Section~\ref{sec:experimental_evaluation} we present the experimental evaluation of our approach for the online set cover problem. 
\begin{shortversion}
All missing details and proofs can be found in the full version attached at the end.
\end{shortversion}

\subsection{Applications}\label{sec:applications}
\textbf{Set Cover and Non-Metric Facility Location } An instance $(X', \S, \cost)$ of online set cover is given by a ground set $X'$, an ensemble of sets \smash{$\S \subseteq 2^{X'}$}, as well as a cost function $\cost: S \rightarrow \mathbb{R}_{\geq 0}$. A subset of the elements $e\in X \subseteq X'$ arrive one-by-one. This is a special case of the non-metric facility location problem. There, an instance is given by a client-facility graph $(C', F, \cost: F \rightarrow \mathbb{R}_{\geq 0}, \dist: C' \times F \rightarrow \mathbb{R}_{\geq 0} \cup \{+\infty\})$. Clients $c\in C\subseteq C'$ are released one-by-one. For both problems, there exists a \emph{randomized} online algorithm with competitive ratio $O(\log k \log |F|)$ (resp. $O(\log k \log |\S|)$),~\cite{AlonAwerbuchBuchbinderNaor}. As we discuss in Section~\ref{sec:set_cover_decomp}, given some predicted requests corresponding to clients / elements, an $(O(1), 1-1/e)$-decomposition can be computed in polynomial time. Hence, our framework yields a $O(\log \eta \log |\S|)$- (resp. $O(\log\eta\log|\S|)$-) competitive algorithm, plus an additive $O(\log k)\cdot \opt$, if it is required to run in polynomial time. 

\textbf{Parking Permit and Weighted Path Augmentation} An instance of weighted path augmentation ($\wpap$) is given by $(E, \L, \cost)$. $E$ is a set of elements, that correspond to edges of a path $P$, indexed from $1$ to $n$, and $\L$ is a set of \emph{links}, weighted according to a $\cost$ function. Each link covers a consecutive interval of elements (subpath of $P$). In the online phase, elements arrive one-by-one, and we need to immediately and irrevocably select a link covering it. This can be thought as connectivity augmentation problem where edges of a path can fail online, and we need to restore connectivity. The Parking Permit problem is a special case of $\wpap$. For any element (edge of the path), we can select $K$ different permits (links) of fixed durations (i.e.\,length). For both problems, it is possible to devise a deterministic algorithm that is $O(\log k)$-competitive, as well as a randomized online algorithm that is $O(\log \log k)$-competitive against oblivious adversaries. Since a $(1,1)$-decomposition can be computed in polynomial time, this results in a deterministic, $O(\log\eta)$-competitive online algorithm and a randomized, $O(\log\log\eta)$-competitive online algorithm. We provide the details in Section~\ref{sec:wpap}.

\textbf{Metric facility Location and Steiner Tree variants } Metric facility location is defined analogously to the non-metric facility location problem (NMFL), except that both clients and facilities are vertices of a \emph{metric} graph $G = (V, E)$. This metric assumption makes a big difference, the best competitive ratio is \smash{$O(\tfrac{\log k}{\log \log k})$},~\cite{Meyerson_online_facility_location, Fotakis_facility_location}. Our framework yields a polynomial-time, \smash{$O(\tfrac{\log \eta}{\log \log \eta})$}-competitive learning-augmented algorithm. This should be compared with the $O(\log\triangle)$-competitive ratio from~\cite{AzarPanigrahiTouitouOnlineGraphAlgo}, although their results holds with respect to a slightly different error measure. We elaborate in Section~\ref{sec:comparison_to_related_work}. 
In Steiner tree (resp. forest) problems, given a metric graph $G$, terminals $\{t_1, t_2, \ldots, t_k\}$ show up one-by-one (resp. in pairs), and one needs to maintain a connected subgraph connecting all arrived terminals (terminal pairs). In the edge-weighted setting, there are $O(\log \eta)$-competitive algorithms for these two problem~\cite{MoseleyXu, AzarPanigrahiTouitouOnlineGraphAlgo}, which we can also obtain with our framework. However, for several important variations such as group Steiner tree / forest, as well as the node-weighted and directed setting, no learning-augmented algorithms are known. For these problems, the competitive ratio is of order \smash{$O(\log^{O(1)} k \log^{O(1)} |V|)$},~\cite{NaorPanigrahiSingh, ChakrabartyEneKrishnaswamyPanigrahi}. Our framework directly improves this to \smash{$O(\log^{O(1)} \eta \log^{O(1)} |V|)$} in the learning-augmented setting. We provide the details in Section~\ref{sec:steiner}. 

\subsection{Comparison to related work and techniques}\label{sec:comparison_to_related_work}
The study of general schemes in the context of learning-augmented online algorithms for set cover (and other covering problems) was initiated by~\cite{BamasMaggioriSvensson20_primaldual}. They devise a primal-dual approach for covering problems that incorporates predictions. 
Subsequent works on the problem then consider further variations, see for instance~\cite{AnandGeKumarPanigrahi, GuptaPanigrahiSubercaseauxSun_epsilon22, AzarPanigrahiTouitouDiscreteSmoothness23}. 
However, a common feature of these works, is that the prediction is on the sets, i.e.\,the online algorithms receives hints on what an optimal solution (conditioned on the set of actually arriving elements) looks like. As such, these results are incomparable to ours.
Despite this, we would like to mention that the results in~\cite{BamasMaggioriSvensson20_primaldual, GrigorescuEtAl2022, AzarPanigrahiTouitouDiscreteSmoothness23} can still be \emph{combined} with our framework. Informally, given a prediction on sets in an optimal solution, they provide an algorithm that maintains a $f(\mathsf{Qual}_\mathsf{pred})$-competitive \emph{fractional} solution. $\mathsf{Qual}_\mathsf{pred}$ is the \emph{quality} of the prediction (with respect to some error metric), and $f(\mathsf{Qual}_\mathsf{pred})\leq O(\log |\S|)$. To obtain an integral solution, they round it in an online fashion. This incurs a multiplicative loss of $O(\log k)$. Hence, combined with our framework, this yields a polynomial-time $O(\log \eta \cdot f(\mathsf{Qual}_\mathsf{pred}) +\log k)$-competitive algorithm for online set cover that incorporates predictions on arriving requests, as well as sets of an optimal solution.

A general framework that is close to ours is the one by~\cite{AzarPanigrahiTouitouOnlineGraphAlgo}. Here they study an interesting generalization of the error we consider, specialized to problems on metric graphs. 
Specifically, they consider the \emph{metric error with outliers} given by $(\triangle, D)$: for some \smash{$\hat{T} \subseteq \hat{X}$}, $T \subseteq X$ with \smash{$|T| = |\hat{T}|$}, and a min-cost matching between $T$ and \smash{$\hat{T}$} of cost $D$, and \smash{$|\hat{X}\setminus\hat{T} \cup X\setminus T| = \triangle$}. An important feature of their work, is that their guarantees hold with respect to the best choice of $\triangle$ and $D$. In particular, setting \smash{$\hat{T} = T = \hat{X} \cap X$} (consequently $D = 0$), their error measure is always smaller than $\eta$ for metric graph problems. On a high-level, the overall approach is similar as they also compute partial solutions to the previously arrived elements, and pay towards predicted requests. This is decided based on an instance of a \emph{prize-collecting} variant of the problem. However, to obtain their guarantees, they require the underlying online algorithm to be \emph{subset-competitive}, a non-trivial property (for instance for some online algorithms for the metric facility location problem). In contrast, we rely on a \emph{more involved} decomposition, but our results \emph{do not} depend on specifics of the underlying online algorithm and can be applied in a completely modular and black-box way. As a direct consequence, we can also apply our results to any standard algorithm for these problems. In particular, by leveraging the \emph{best} known online algorithm for metric facility location with competitive ratio \smash{$O(\tfrac{\log k}{\log\log k})$}, we obtain a bound of \smash{$O(\tfrac{\log\eta}{\log\log\eta})$}. In comparison, they give a $O(\log \triangle)$-competitive ratio (plus an additive term of $D$), as they rely on a $O(\log k)$-competitive algorithm to ensure subset-competitiveness. 
As a side note, we remark that there has significant work on learning-augmented online facility location, see~\cite{AnandGeKumarPanigrahi, Almanza_FL_wPred,jiang2022online, AzarPanigrahiTouitouDiscreteSmoothness23, Fotakis_FL_with_predictions}. However, all these works assume that predictions are hints on the location of optimal facilities. As such, these results are incomparable to ours. The framework in~\cite{AzarPanigrahiTouitouOnlineGraphAlgo} also applies to Steiner tree and Steiner forest, yielding a $O(\log \triangle)$- (plus additive $D$) competitive ratio (a bound of $O(\log \eta)$ for Steiner tree was previously given in~\cite{MoseleyXu}). Our framework recovers this latter guarantee of $O(\log\eta)$, but it also applicable for the generalisations in Section~\ref{sec:applications}.

For the parking permit and its generalisation, the weighted path augmentation problem, there were no known learning-augmented algorithms. The parking permit was introduced in~\cite{Meyerson05ParkingPermit}, where a deterministic $O(\log n)$-, and a randomized, $O(\log\log n)$-competitive algorithm was given. However, it is straightforward to adapt this algorithm to yield a $O(\log k)$- (resp. $O(\log\log k)$-) competitive algorithm for parking permit. A generalisation of $\wpap$ was first studied by~\cite{NaorUmbohWilliamson}, and they provide a deterministic, $O(\log n)$ competitive algorithm for the problem, as well as a fractional, $O(\log\log n)$-competitive algorithm for $\wpap$. We adapt these ideas to yield a deterministic, $O(\log k)$-competitive algorithm for $\wpap$, as well as a randomized, $O(\log\log k)$-competitive algorithm. Our framework leverages these results to yield a learning-augmented, $O(\log\eta)$ (resp. $O(\log\log\eta)$-) competitive algorithm for deterministic (resp. randomized) $\wpap$ and parking permit.

We conclude this subsection with a final remark.
The error $\eta$ we consider is the $\ell_1$ error, the term $\min\{|X|, \cdot\}$ simply follows from robustification. 
\begin{shortversion}
    As such, our error measure is very simple and efficient to learn, it suffices to take the element-wise majority over a  \emph{logarithmic} number of past instances (see full version). It also naturally handles fractional predictions, where each request is associated with some probability or confidence parameter, as these can be rounded while preserving the overall error.
\end{shortversion} 
\begin{fullversion}
    As such, our error measure is very simple and efficient to learn. We briefly sketch this. Consider the setting where we are given a distribution over instances of some covering problem with requests in $X'$. The goal is to find a "best-possible" predictor of the form \smash{$\hat{x} \in \{0, 1\}^{|X'|}$}, which minimizes the expected $\ell_1$-distance (e.g.\,$\eta$) to a request sequence $x\in \{0,1\}^{|X'|}$ sampled from the distribution. Note that by rounding, we can indeed assume $\hat{x}$ to be integral. Since the $\ell_1$-norm is just a sum over all components, it suffices to minimize the expected Hamming distance for each component separately. For any such component, taking the majority over $O(\log |X'|)$ i.i.d. samples, ensures that with probability at least $1-1/\mathsf{poly}(|X'|)$, the resulting $\{0, 1\}$-majority vote is within $0.49$ of the true predictor $\hat{x}$. Taking the union bound over all $|X'|$ ensures correctness for all components (with high probability). For a textbook argument, see e.g.~\cite[Chapter~$4$]{Shalev-Shwartz_Ben-David_2014}. Finally, we note that the error measure $\eta$ also naturally handles fractional predictions, where each request is associated with some probability or confidence parameter. Again, this follows from the fact that we can always round such a prediction (round up each component if it is larger than $1/2$, round down otherwise) while preserving the (expected) error.
\end{fullversion}

\section{Framework for predictions on requests}\label{sec:framework_formal}
We now describe $\ice$ (\textbf{I}teratively \textbf{C}harge \textbf{E}xpenses), a deterministic online procedure to incorporate predictions on requests. This procedure operates in two stages. In the first, offline phase, we receive the prediction \smash{$\hat{X}$} on the requests. We compute a decomposition for \smash{$\hat{X}$}, i.e.\,
$$\hat{X}_1 \sqcup \hat{X}_2 \sqcup \ldots \sqcup \hat{X}_n = \hat{X},$$
and corresponding feasible solutions $\S_1, \S_2, \ldots, \S_n$, of respective costs $c(\S_i)$. In the online phase, we keep track of the cost by $\alg$ incurred on elements in \smash{$X \cap \hat{X}$}. Whenever this exceeds the cost of $\S_i$, we buy $\S_i$. We resume this in Algorithm~\ref{alg:ice}.
\begin{algorithm}
\begin{algorithmic}[1]
  \State \textbf{Input:} Online algorithm $\alg$ with competitive ratio $\rho(k,\cdot)$ and predicted requests $\hat{X}$.
  \State \textbf{Offline phase:} Compute  \smash{$(X_i, \S_i)$}s as detailed in Section~\ref{sec:offline_phase}.\label{Line:decomposition}
  \State Initialize two instances of $\alg$, $\alg_+$ and $\alg_-$.
  \State Set $\excess \leftarrow 0$, $\layer \leftarrow 1$.

  \When{request $x\in X$ arrives}
  \If{$x \notin \hat{X}$}
  \State Pass $x$ to $\alg_+$.
  \Else
  \State Pass $x$ to $\alg_{-}$. Increase $\excess$ by the cost of the partial solution $\alg_-$ selects for $x$.
  \While{$\excess\ge c(\S_{\layer})$}
  \State Buy $S_\layer$, $\layer \leftarrow \layer+1$, $\excess \leftarrow \excess - c(\S_\layer)$ and reinitialize $\alg_-$.
  \EndWhile
  \EndIf
  \EndWhen

\end{algorithmic}
\caption{$\ice$ - Iteratively Charge Expenses}\label{alg:ice}
\end{algorithm}

The running time of our procedure is at most that of the online algorithm that it uses in a black-box manner, as well as the time to compute the decomposition in Line~\ref{Line:decomposition}. We formalize this in the following subsection.

\subsection{Offline phase of the General Framework}\label{sec:offline_phase}
Before describing the offline computation,
we need the following definition.

\begin{definition}[Efficient $(\alpha, \gamma)$-decomposable]\label{def:decomposition} An offline covering problem with request set $X$ is said to be $(\alpha, \gamma)$-decomposable if for all $i\in \{1, 2, \ldots, |X|\}$, we can compute a set $B \subseteq X$ and a feasible solution $\S_B$ that services the requests in $B$ where 
\begin{itemize}
    \item The cost of $\S_B$ is at most $\alpha$ times the minimum cost of servicing~$i$ requests in $X$, i.e.\,
    $$ c(\S_B) \leq \alpha \cdot \min_{A \subseteq X, |A| = i} \{c(\S_A) \mid \S_A \text{ feasible for } A\}, \text{ and, }$$
    \item $|B| \geq i/\gamma.$
\end{itemize}
Whenever $B$ and $\S_B$ can be computed in polynomial time, we say the problem is \emph{efficiently} $(\alpha, \gamma)$-decomposable.
\end{definition}
 Note that any problem is $(1, 1)$-decomposable, e.g.\,through integer programming or enumeration techniques. However, this may take super-polynomial time, as is the case, e.g., for set cover. 

Next, we use this property of $(\alpha, \gamma)$-decomposability, to obtain an $(\alpha, \gamma)$-\emph{decomposition}.
For convenience, we set $g_{\alpha,\gamma}(t):=\alpha$ if $\gamma=1$ and $g_{\alpha,\gamma}(t):= \alpha\cdot (1+\log^{-1}(\tfrac{\gamma}{\gamma-1})\cdot\log t)$ if $\gamma>1$.

 We are going to split \smash{$\hat{X}$} through an iterative process: starting with \smash{$R_0 := \hat{X}$}, we select a subset $X_1 \subseteq R_0$, update $R_1 \leftarrow R_0 \setminus X_1$, and continue anew with $R_1$. Then for every $i$, we denote by $\S_i$ the ($\alpha$-approximate) solution computed to cover the elements in $X_i$, as in Definition~\ref{def:decomposition}. An  $(\alpha, \gamma)$-decomposition
 is then a pair $(X_i, \mathcal S_i)$, where \smash{$\hat{X}$}$ = X_1\cup X_2 \cup \dots \cup X_r$, and the following properties hold for each $i$:

\begin{itemize}
    \item[(A)] $|X_i| \geq \frac{|\hat{X}| - (|X_1| + \ldots + |X_{i-1}|)}{2} $.
    \item[(B)] If $c(\S_i) < 2 c(\S_{i-1})$, then $8c(\S_{i-1})<c(\S_{i+1})$.
    \item[(C)] If $c(\S_i)>10c(\S_{i-1})$, then $c(\S_i)\le g_{\alpha,\gamma}(|\hat{X}|)\cdot C_1$, where $C_1$ is the minimum cost solution that covers $\lceil\frac{|\hat{X}| - (|X_1| + \ldots + |X_{i-1}|)}{2}\rceil$ elements from $R_{i-1}$. 
    \item[(D)]  \smash{$c(\S_i)\le g_{\alpha,\gamma}(|\hat{X}|)\cdot C_2$}, where $C_2$ is defined as follows: for $i \geq 2$, $C_2$ is the minimum cost solution that covers $|X_i|$ elements from $R_{i-1}$; for $i=1$, $C_2$ is the minimum cost solution that covers \smash{$\lceil |\hat{X}|/ 2\rceil$} elements from $R_{0}$.
    
\end{itemize}

The main result of this section is the following.

\begin{theorem}\label{thm:DecompostionConstruction}
    There is a polynomial-time construction guaranteeing properties (A), (B), (C), and~(D), provided that the problem is efficiently $(\alpha, \gamma)$-decomposable.
\end{theorem}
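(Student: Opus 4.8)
The plan is to build the decomposition greedily, processing the residual set $R_{i-1}$ at each step and invoking the efficient $(\alpha,\gamma)$-decomposability oracle to extract $X_i$ and $\S_i$. The natural candidate for $X_i$ is the set $B$ returned by Definition~\ref{def:decomposition} when we ask for (roughly) $i_0 := \lceil |R_{i-1}|/2 \rceil$ requests: since $|B| \geq i_0/\gamma$ we immediately get a large chunk, but a priori only $|B| \geq |R_{i-1}|/(2\gamma)$, which is not quite property (A) when $\gamma > 1$. To fix this, I would iterate the oracle within step $i$: repeatedly pull out $B$-sets from the shrinking residual until their union has size at least $|R_{i-1}|/2$; since each pull removes at least a $1/(2\gamma)$-fraction of what remains, after $O(\log^{-1}(\tfrac{\gamma}{\gamma-1}))$ rounds the accumulated union exceeds half of $R_{i-1}$, and we let $X_i$ be this union and $\S_i$ the union of the corresponding partial solutions. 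Each partial solution costs at most $\alpha \cdot C$ where $C$ is the min-cost of covering the requested number of elements; summing the $O(\log^{-1}(\tfrac{\gamma}{\gamma-1}))$ of them, and noting each such $C$ is at most the min-cost of covering $i_0$ elements of $R_{i-1}$ (which is what $C_2$ measures for $i \geq 2$, and $\lceil|\hat X|/2\rceil$ elements for $i=1$), gives $c(\S_i) \leq g_{\alpha,\gamma}(|\hat X|) \cdot C_2$, i.e.\ property (D). Property (C) follows the same way since its $C_1$ is the analogous min-cost quantity. Property (A) holds by construction of the stopping rule.

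The remaining work is property (B), the "no two consecutive mild increases" condition, which is not automatic from the greedy construction and is the main obstacle. Here I would not take $X_i$ to be exactly the greedily extracted set, but rather run a \emph{merging / look-ahead} post-processing: tentatively compute the greedy sets $\tilde X_1, \tilde X_2, \dots$ with costs $\tilde c_1, \tilde c_2, \dots$, and then coalesce consecutive blocks whose costs are within a constant factor of each other into a single block. Concretely, whenever $\tilde c_{i} < 2\tilde c_{i-1}$ and also $\tilde c_{i+1} \le 8 \tilde c_{i-1}$ would be violated, merge $\tilde X_i$ into $\tilde X_{i-1}$ (taking the more expensive of the two solutions, whose cost is within a factor $2$ of either), and recompute. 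Because costs that get merged are geometrically comparable, each merge increases a block's cost by at most a constant factor, and a charging/potential argument bounds the total blow-up: a block's final cost is $O(1)$ times its original greedy cost, preserving (C) and (D) up to the constants already hidden in $g_{\alpha,\gamma}$. One must check that merging does not destroy (A); but merging only makes blocks larger and earlier, and the residual-halving invariant is maintained because the union of merged blocks still covers at least half of the residual at the first of the merged indices. The delicate point is ensuring the merging process terminates in polynomially many rounds and that after it stabilizes, \emph{both} (B)-type bad patterns (a mild increase followed by another mild increase) are gone — this requires arguing that the greedy costs are "eventually" growing geometrically, which follows since $|X_i|$ halves the residual so there are only $O(\log|\hat X|)$ blocks, and across $O(\log|\hat X|)$ blocks a sequence with no large jumps and bounded by $g_{\alpha,\gamma}(|\hat X|)\cdot(\text{min-cost})$ can be re-indexed so consecutive costs either jump by a factor $>8$ or the pattern $c_i < 2c_{i-1}$, $c_{i+1} \le 8 c_{i-1}$ never repeats.

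I would then assemble the pieces: first establish the greedy extraction subroutine and its per-step cost bound (this is the easy, oracle-driven part and gives (A), (C), (D) immediately in the $\gamma = 1$ case where no iteration is needed); then handle $\gamma > 1$ by the bounded iteration argument above, carefully tracking that the sum of $O(\log^{-1}(\tfrac{\gamma}{\gamma-1}))$ copies of $\alpha\cdot(\text{min-cost})$ is exactly $g_{\alpha,\gamma}(|\hat X|)\cdot(\text{min-cost})$; and finally run the merging post-processing and verify (B) while checking the other three properties are stable under it. Throughout, polynomial running time is inherited from the polynomial-time oracle together with the $O(\log |\hat X|)$ bound on the number of blocks and the polynomial bound on merge rounds. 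The main obstacle, as noted, is the interaction between property (B) and the cost bounds in (C)/(D) during merging: I expect the bulk of the technical proof to be the charging argument showing the merged solution costs stay within constant factors of the original greedy costs.
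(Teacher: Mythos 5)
Your plan for handling $\gamma > 1$ essentially rediscovers the paper's Lemma~\ref{lem:DecompositionToApprox}: iterate the $(\alpha,\gamma)$-decomposability oracle within a step until the union covers at least half the residual, with the cost telescoping to $g_{\alpha,\gamma}(|\hat X|)$ times the min-cost, which gives properties (A), (C), (D). This part is correct and matches the paper's structure.

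The gap is in your treatment of property (B). You propose a \emph{post-hoc merging} pass that coalesces consecutive blocks with comparable costs, and you flag the charging argument needed to bound the blow-up and verify stability of (A), (C), (D) as ``the bulk of the technical proof.'' That is not a small gap: (D) compares $c(\S_i)$ to the min-cost of covering $|X_i|$ elements from $R_{i-1}$, and after merging a run of $m$ blocks the new solution cost is the \emph{sum} of the $m$ constituent costs while the benchmark $C_2$ only grows to the max of their benchmarks, so one picks up an extra factor of $m$ — and nothing in your merging condition ($\tilde c_i < 2\tilde c_{i-1}$) forces $m$ to be bounded by a constant or the costs to be geometrically decaying, so the blow-up can reach $\Theta(\log|\hat X|)$. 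Moreover the merged blocks also shift the residuals against which the oracle was called, so the solutions $\S_{i,j}$ would need to be recomputed against the new $R_{i-1}$ for properties (C)/(D) to even be well-posed. The paper sidesteps all of this by choosing the block size \emph{adaptively during construction}: it precomputes, for each candidate size $j$, the monotone, marginal-cost–controlled sequence $(X_{i+1,j},\S_{i+1,j})$ (conditions i)–v)), and then in Case~2 takes the largest $\ell$ with $c(\S_{i+1,\ell}) \le 10\,c(\S_i)$. Property~v) — that every uncovered element has marginal cost at least $c(\S_{i+1,\ell+1}) - c(\S_{i+1,\ell})$ — combined with the maximality of $\ell$ then yields (B) immediately: if $c(\S_i) < 2c(\S_{i-1})$, any next element costs more than $10c(\S_{i-1}) - 2c(\S_{i-1}) = 8c(\S_{i-1})$, so $c(\S_{i+1}) > 8c(\S_{i-1})$. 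This local, forward-looking choice obviates the merging step entirely, and is the key idea your proposal is missing.
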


\begin{fullversion}
To prove it we first need the following useful lemma.
\begin{lemma}\label{lem:DecompositionToApprox}
    
If the problem is efficiently $(\alpha,\gamma)$-decomposable, a $g_{\alpha,\gamma}(|\hat{X}|)$-approximation for computing a minimum cost solution to cover at least $i$ elements from $\hat{X}$ can be computed in polynomial time.
\end{lemma}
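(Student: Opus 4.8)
The plan is to show that the partial covering problem—find a minimum-cost feasible solution covering at least $i$ of the predicted requests—can be approximated within $g_{\alpha,\gamma}(|\hat X|)$ by iterating the $(\alpha,\gamma)$-decomposability primitive in a greedy fashion. Recall that $g_{\alpha,\gamma}(t)=\alpha$ when $\gamma=1$ and $g_{\alpha,\gamma}(t)=\alpha(1+\log^{-1}(\tfrac{\gamma}{\gamma-1})\log t)$ when $\gamma>1$, so the two regimes must be handled differently.

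First consider the easy case $\gamma=1$. Here a single invocation of the primitive, with parameter $i$, returns in polynomial time a set $B\subseteq\hat X$ with $|B|\ge i/\gamma = i$ and a feasible solution $\S_B$ whose cost is at most $\alpha\cdot\min_{A\subseteq\hat X,\,|A|=i} c(\S_A)$. Since any solution covering at least $i$ elements in particular covers some $i$-subset, this minimum is a lower bound on the optimal partial-cover cost, so $\S_B$ is an $\alpha$-approximation, matching $g_{\alpha,1}(|\hat X|)=\alpha$.

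For $\gamma>1$ the idea is the standard set-cover-style greedy ``doubling'' argument applied on top of the primitive. Let $\opt_i$ denote the cost of an optimal solution covering at least $i$ elements of $\hat X$. We repeatedly call the primitive: starting from the set of still-uncovered target elements (initially all of $\hat X$, restricted to a target of $i$), each call on a residual target of size $j$ returns a partial solution of cost at most $\alpha\cdot\min_{|A|=j}c(\S_A)\le\alpha\cdot\opt_i$ that covers at least $j/\gamma$ of them, i.e.\ it reduces the number of remaining uncovered targets by a factor $(1-1/\gamma)$. After $O(\log^{-1}(\tfrac{\gamma}{\gamma-1})\cdot\log i)$ rounds the residual target drops below $1$ and every one of the $i$ required elements is covered; unioning the (at most that many) partial solutions, using the covering property that the union of feasible solutions is feasible for the union of request sets, gives a feasible solution covering $\ge i$ elements of total cost $O(\alpha\cdot\log^{-1}(\tfrac{\gamma}{\gamma-1})\cdot\log i)\cdot\opt_i \le g_{\alpha,\gamma}(|\hat X|)\cdot\opt_i$ (bounding $\log i\le\log|\hat X|$). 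The whole procedure runs in polynomial time since it makes $O(\log|\hat X|)$ calls to a polynomial-time primitive.

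The main obstacle—really the only subtlety—is justifying that each call of the primitive costs at most $\alpha\cdot\opt_i$ rather than $\alpha\cdot\opt_j$ for the smaller residual target $j$: a cheaper way to cover $j$ specific leftover elements need not exist, but the primitive is only promised to compete with the best $j$-subset \emph{of the whole ground set}, which is at most $\opt_i$ since an optimal $\opt_i$-solution covers some $j$-subset. One must also make sure that ``covering $j/\gamma$ new elements'' is interpreted relative to the residual uncovered set so that the geometric decrease is valid; this just requires re-indexing the target set appropriately before each call, which is harmless. With these points in place the bound follows.
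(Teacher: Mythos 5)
Your proposal is correct and takes essentially the same route as the paper: iterate the $(\alpha,\gamma)$-primitive on the residual uncovered requests, shrink the deficit by a $(1-1/\gamma)$ factor each round, and union the $O(\log^{-1}(\tfrac{\gamma}{\gamma-1})\log i)$ partial solutions, with $\gamma=1$ handled by a single invocation. The one spot worth tightening is your justification of the per-round cost bound. You argue that the primitive competes with the cheapest $j$-subset ``of the whole ground set,'' but applied to the residual $F$ with target $h$, the primitive actually competes with the cheapest $h$-subset \emph{of $F$}, a potentially harder benchmark. The missing (one-line) observation is the invariant $|F|-h=|\hat X|-i$, preserved because both quantities drop by $|F'|$ each round; it forces any solution covering $i$ elements of $\hat X$ to cover at least $h$ of them inside $F$, hence the cheapest $h$-cover within $F$ costs at most the optimal partial-cover value you call $\opt_i$. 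The paper's own proof relies on the same fact without spelling it out, so you are not missing anything the paper makes explicit, but since you single this out as the ``main obstacle,'' it deserves the explicit counting step rather than the (not-quite-right) appeal to the whole ground set.
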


\begin{proof}
    We iteratively apply the following process. Start with $h=i$, $F:=\hat{X}$ and $S:=\emptyset$. At each step compute a solution $S'$ in polynomial time, that covers at least $\frac{h}{\gamma}$ elements whose cost is at most $\alpha$ times the cost of the minimum cost solution that covers a subset of size $h$ from $F$. $S'$ covers a subset $F'$ of $F$, with $|F'|\ge h/\gamma$. We update $S \leftarrow S\cup S'$, $F\leftarrow F\setminus F'$, and $h\leftarrow h-|F'|$. We stop as soon as $h\le 0$.
    
    When the process terminates $S$ contains a solution that covers at least $i$ elements. The number of steps required is one if $\gamma=1$ and otherwise is upperbounded by $\log_{1-1/\gamma} i+1$ as we covers at least $h/\gamma$ many elements.
    
    Also the solution $S'$ computed at each stage is an $\alpha$-approximation for minimum cost solution to cover at least $i$ elements from $\hat{X}$. Thus, if $\gamma=1$, this is an $\alpha$-approximation and otherwise it is an $(\alpha\cdot (1+\log^{-1}(\tfrac{\gamma}{\gamma-1})\cdot\log i))$-approximation. The claim follows as $i\le |\hat{X}|$.
\end{proof}

We are now going to detail the construction. 
$\S_1$ is a solution that covers a subset $X_1\subseteq R_0$ such that $|X_1|\ge \lceil|R_0|/2\rceil=\lceil|\hat{X}/2|\rceil$ many elements of $R_0=\hat{X}$, such that its cost is at most $g_{\alpha,\gamma}(\hat{X})$ times the cost of the optimal set that covers these many elements from $R_0$. Observe that $X_1$ and $S_1$ can be simply computed by Lemma~\ref{lem:DecompositionToApprox} and hence it satisfies Properties (A) and (D) for $i=1$. 

Now for every $i\ge 1$, we first compute $R_i$, by simply setting $R_i\leftarrow R_{i-1}\setminus X_{i}$. Now we show how we compute $\S_{i+1}$ and $X_{i+1}$.

First, for every $j\ge\lceil\tfrac{|R_i|}{2}\rceil$ we find a subset $X_{i+1,j} \subseteq R_i$ of elements, together with a feasible solution $\S_{i+1,j}$ to cover it, such that:

\begin{itemize}
    \item[i)] $|X_{i+1,j}|\ge j$,
    \item[ii)] $\S_{i+1,j}$ is an $g_{\alpha,\gamma}(| \hat{X}|)$-approximation for the minimum cost to cover among all subsets of $R_i$ of cardinality $j$,
    
    \item[iii)] $c(\S_{i+1,j})\le c(\S_{i+1,j+1})$,
    \item[iv)] If $\S_{i+1,j}$ covers more than $j$ elements of $R_i$ then $c(\S_{i+1,j+1})=c(S_{i+1,j})$,
    \item[v)] and for every $e\in R_i\setminus X_{i+1,j}$, it holds that $c_e+c(\S_{i+1,j})\ge c(\S_{i+1,j+1})$, where $c_e$ is the cost of the cheapest set that covers $e$.

\end{itemize}

For computing $X_{i+1,j}$s and $\S_{i+1,j}$s that satisfy i) to v) we first start with computing $X_{i+1,j}$s and $\S_{i+1,j}$s that satisfy i) and ii), using Lemma~\ref{lem:DecompositionToApprox} by replacing $\hat{X}$ by $R_i$ and replacing $i$ by $j$ in this lemma. Now, if for any $j$, $c(\S_{i+1,j})>c(\S_{i+1,j+1})$, we update $X_{i+1,j}\leftarrow X_{i+1,j+1}$ and $\S_{i+1,j}\leftarrow \S_{i+1,j+1}$. The updated $X_{i+1,j}$s and $\S_{i+1,j}$s satisfy i), ii), and iii).

Now assume that iv) or v) does not hold for some $j$; Let $j^*$ be the smallest $j$ that has this property. If iv) does not hold for $j^*$, then we simply update $\S_{i+1,j^*+1}\leftarrow \S_{i+1,j^*}$ and we update $X_{i+1,j^*+1}$ to the set of all elements in $R_i$ that are covered by $\S_{i+1,j^*}$. Otherwise, if v) does not hold for $j^*$, then we take an element $e\in R_i\setminus X_{i+1,j^*}$, where $c_e$ is minimum and  $c_e+c(\S_{i+1,j})< c(\S_{i+1,j^*+1})$. We update $X_{i+1,j^*+1}\leftarrow X_{i+1,j^*}\cup (S_e\cap R_i)$ and $\S_{i+1,j^*+1}\leftarrow \S_{i+1,j^*} \cup \{S_e\}$, where $S_e$ is the cheapest set that covers $e$.

Note that after these modifications i), ii), and iii) are still satisfied and iv) and v) are still satified for any $j\le j^*$. Also properties iv and v) are now ensured for $j^*$ as well. Hence by repeating this process we can ensure i) to v) for every $j$. Now using $X_{i+1,j}$s and $S_{i+1,j}$s, we use the following rules to construct $X_{i+1}$ and $S_{i+1}$:





\textbf{Case 1}: If $c(\S_{i+1,\lceil|R_i|/2\rceil})\ge 2c(\S_{i})$, we set $X_{i+1}\leftarrow X_{i+1,\lceil|R_i|/2\rceil} $ and $\S_{i+1} \leftarrow \S_{i+1,\lceil|R_i|/2\rceil}$.

\textbf{Case 2}: If not, let $\ell$ be the largest number such that  $c(\S_{i+1,\ell})\le 10 c(\S_{i})$ (Clearly $\ell\ge \lceil\frac{|R_i|}{2}\rceil$). In this case we set $\S_{i+1}\leftarrow \S_{i+1,\ell}$ and $X_{i+1}\leftarrow X_{i+1,\ell}$.


Now we have all the ingredients required to prove Theorem~\ref{thm:DecompostionConstruction}
\begin{proof}[Proof of Theorem~\ref{thm:DecompostionConstruction}]
    It suffices to show that the above construction yields properties (A), (B), (C), and~(D).
    Properties (A) and (D) are clearly achieved by construction and by i) to v). 
    
    Property (C) is also achieved by construction, as if $ 10c(\S_i)<c(\S_{i+1})$ then in our construction we must be in \textbf{Case 1}. Now in \textbf{Case 1}, $\S_{i+1}$ is simply $\S_{i+1,\lceil|R_i|/2\rceil}$, which is a $g_{\alpha,\gamma}(\hat{X})$-approximation for a solution that covers $\lceil|R_i|/2\rceil$ many elements of $R_i$, and hence the claim. 
    
    Now we show that Property (B) holds; To prove this, note that in our construction if $c(\S_i)<2c(\S_{i-1})$ then we should be in \textbf{Case 2} at stage $i$. However, in \textbf{Case 2}, by our choice of $\ell$, the cost to cover any additional element (i.e. any element in $R_{i}$) should be at least $8c(\S_{i-1})$; as otherwise, by v), $\ell$ is not the largest possible number as described in \textbf{Case 2}, a contradiction. Therefore, $c(\S_{i+1})\ge 8c(\S_{i-1})$
\end{proof}

\end{fullversion}

\subsection{General Framework Analysis}

In this section, we analyze the competitive ratio of Algorithm~\ref{alg:ice}. Assume that upon the termination of our algorithm (Algorithm~\ref{alg:ice}), we are at layer $i+1$ and $\overline{\excess}$ is the value of $\excess$ when the algorithm ends. That is, in the previous steps we were able to buy $\S_1 \cup \ldots \cup \S_i$, and $\layer=i+1$ when the algorithm terminates, with $\overline{\excess}<c(\S_{i+1})$. We also define \smash{$\Delta_-:=|\hat X\setminus X|$}. 

We begin our analysis by providing the following two Lemmas that give a lower bound on the size of the optimal solution. 

\begin{lemma}\label{Lem:LinearBoundOnOptGen}
    Let $m$ be a positive integer.
    If $\Delta_- \le |R_m|$, then 
    $\opt \geq \max_{1\le j \le m}\{c(\S_j)\} / g_{\alpha,\gamma}(|\hat{X}|)$.\label{claim:cost}
\end{lemma}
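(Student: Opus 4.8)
The plan is to establish the inequality one index at a time: for each $j$ with $1 \le j \le m$ I will show $\opt \ge c(\S_j)/g_{\alpha,\gamma}(|\hat{X}|)$, and then take the maximum over $j$. The engine is Property~(D) of the decomposition, which bounds $c(\S_j) \le g_{\alpha,\gamma}(|\hat{X}|)\cdot C_2$, where (for $j\ge 2$) $C_2$ is the minimum cost of a feasible solution covering $|X_j|$ elements of $R_{j-1}$, and (for $j=1$) $C_2$ is the minimum cost of covering $\lceil|\hat{X}|/2\rceil$ elements of $R_0=\hat{X}$. Since Property~(A) gives $|X_1|\ge\lceil|\hat{X}|/2\rceil$, in both cases it suffices to show that the optimal solution for the true request set $X$ covers at least $|X_j|$ elements lying in $R_{j-1}$; this gives $\opt \ge C_2$ and hence the claimed per-index bound.

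So the core claim to prove is $|R_{j-1}\cap X| \ge |X_j|$ for every $j \le m$. First I would record two consequences of Property~(A). Because $X_1,\dots,X_i$ are disjoint subsets of $\hat{X}$, we have $|\hat{X}| - (|X_1|+\dots+|X_{i-1}|) = |R_{i-1}|$, so (A) reads $|X_i|\ge |R_{i-1}|/2$, which in turn yields $|R_i| = |R_{i-1}|-|X_i| \le |R_{i-1}|/2$; iterating, the remainder sets are nested, $R_m\subseteq R_j$, so $|R_m|\le |R_j|$ for all $j\le m$. Second, since $R_{j-1}\subseteq \hat{X}$, every element of $R_{j-1}$ not requested by $X$ lies in $\hat{X}\setminus X$, whence $|R_{j-1}\cap X| \ge |R_{j-1}| - \Delta_-$.

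Now I combine these with the hypothesis $\Delta_-\le |R_m|$. Using $\Delta_- \le |R_m| \le |R_j| = |R_{j-1}| - |X_j|$ — the last equality because $R_{j-1} = R_j \sqcup X_j$ — I get $|R_{j-1}\cap X| \ge |R_{j-1}| - \Delta_- \ge |X_j|$. Since the optimal solution for $X$ is in particular feasible for any $|X_j|$-element subset of $R_{j-1}\cap X$, its cost is at least $C_2$; plugging into Property~(D) gives $\opt \ge c(\S_j)/g_{\alpha,\gamma}(|\hat{X}|)$, and taking the maximum over $1\le j\le m$ proves the lemma.

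I do not anticipate a real obstacle: the argument is essentially counting plus the nesting of the remainder sets. The only points needing care are the $j=1$ boundary case (where Property~(D) is phrased in terms of $\lceil|\hat{X}|/2\rceil$ rather than $|X_1|$, reconciled via Property~(A)) and checking that the chain $\Delta_-\le |R_m|\le |R_j| = |R_{j-1}|-|X_j|$ is legitimate across the whole range of $j$, which relies only on $R_m\subseteq R_j$ for $j\le m$ and the partition identity $R_{j-1} = R_j\sqcup X_j$.
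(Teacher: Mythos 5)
Your proof is correct and follows essentially the same route as the paper's: use $R_m\subseteq R_j$ to get $\Delta_-\le |R_j|=|R_{j-1}|-|X_j|$, conclude at least $|X_j|$ elements of $R_{j-1}$ lie in $X$, and invoke Property~(D). The only difference is that you explicitly reconcile the $j=1$ case, where Property~(D) is stated with $\lceil|\hat{X}|/2\rceil$ rather than $|X_1|$, via Property~(A) — a boundary detail the paper's proof glosses over.
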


\begin{proof}
    Let $j\in \{1,\ldots,m\}$. Now as $\Delta_- \le |R_m|$, then $\Delta_- \le |R_j|$. This means that at least  $|X_j|$ many elements from $R_{j-1}$ appear in our instance (otherwise $\Delta_->|R_j|$ as $|R_j|+|X_j|=|R_{j-1}|$). So, by Property (D), the cost to cover these elements is at least \smash{$c(\S_j)/g_{\alpha,\gamma}(|\hat{X}|)$}.
\end{proof}

\begin{lemma}\label{Lem:LogarithmicBoundOnOptGen}
    The following holds:
    \begin{itemize}
        \item For every $j\in \{1,\ldots,i\}$, $\opt\in \Omega (c(\S_j)/\rho(\Delta_-,.))$ or $\opt\in \Omega (c(\S_j)/g_{\alpha,\gamma}(|\hat{X}|))$, and
        \item $\opt\in \Omega (\overline\excess/\rho( \Delta_-,\cdot)$ or $\opt\in \Omega (\overline\excess/g_{\alpha,\gamma}(|\hat{X}|)).$ 
    \end{itemize}
\end{lemma}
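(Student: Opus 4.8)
\textbf{Proof plan for Lemma~\ref{Lem:LogarithmicBoundOnOptGen}.}

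The plan is to distinguish, for each relevant quantity (first a fixed $c(\S_j)$ with $j\le i$, then $\overline\excess$), whether the misprediction ``deficit'' $\Delta_-$ is large or small relative to the residual sets $R_m$, and to use a different lower bound on $\opt$ in each case. If $\Delta_- \le |R_j|$, then Lemma~\ref{Lem:LinearBoundOnOptGen} (applied with $m=j$) immediately gives $\opt \ge c(\S_j)/g_{\alpha,\gamma}(|\hat X|)$, which is the second alternative. So the interesting case is $\Delta_- > |R_j|$, i.e.\ the prediction is so far off that it already ``uses up'' all of $X_{j+1},\dots$; here I want the first alternative $\opt \in \Omega(c(\S_j)/\rho(\Delta_-,\cdot))$.

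For that case, the idea is to look at the run of $\alg_-$ restricted to the layers up to $j$. By construction of Algorithm~\ref{alg:ice}, in order for $\layer$ to have advanced past $j$ (which it has, since $j \le i$ and we are at layer $i+1$), the accumulated excess charged to $\alg_-$ on elements of $X \cap \hat X$ must have reached $\sum_{\le j} c(\S_\ell)$, hence in particular is at least $c(\S_j)$; combined with Properties (B)--(C) — which force the costs $c(\S_\ell)$ to grow geometrically except for the controlled exceptions — the \emph{total} cost spent by $\alg_-$ up through the point it triggered the purchase of $\S_j$ is $\Theta(c(\S_j))$. Now $\alg_-$ is only ever fed requests in $X\cap \hat X$, which number at most $k$ but, crucially, the relevant lower bound should be expressed via the fact that at most $|X \cap \hat X| \le |\hat X| - \Delta_-$ of them are ``good'' while $\Delta_- > |R_j|$ controls how many layers are fully present. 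The point is that $\alg_-$ is a $\rho(\cdot,\cdot)$-competitive algorithm run on at most $\Delta_-$ many requests in the worst case (after reinitializations, the number of requests it has ever seen in the current incarnation, or across the relevant window, is bounded in terms of $\Delta_-$ because $\Delta_- > |R_j| \ge$ the number of $\hat X$-elements not yet in a purchased layer); hence the optimum for servicing those requests is at least $(\text{cost of }\alg_-)/\rho(\Delta_-,\cdot) = \Omega(c(\S_j)/\rho(\Delta_-,\cdot))$, and this optimum is a lower bound on $\opt$ since those requests are a subset of the true request set $X$ and the problem is a covering problem.

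For the second bullet, $\overline\excess$, I would argue identically: either $\Delta_- \le |R_{i+1}|$ (or the appropriate residual set), in which case Lemma~\ref{Lem:LinearBoundOnOptGen} gives $\opt \ge c(\S_{i+1})/g_{\alpha,\gamma}(|\hat X|) > \overline\excess / g_{\alpha,\gamma}(|\hat X|)$ using $\overline\excess < c(\S_{i+1})$; or $\Delta_-$ is large, in which case $\overline\excess$ equals the cost currently accrued by the live copy of $\alg_-$ (plus possibly a term already accounted for), and competitiveness of $\alg_-$ on its $\le \Delta_-$ requests gives $\opt \ge \Omega(\overline\excess/\rho(\Delta_-,\cdot))$.

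The main obstacle I anticipate is bookkeeping the reinitializations of $\alg_-$ correctly: because $\alg_-$ is reset each time a layer is bought, the ``cost of $\alg_-$'' that one wants to charge against $\opt$ is really a sum of costs of several independent incarnations, each competitive only against the optimum for \emph{its own} slice of requests, not against a single global optimum. Making the geometric-growth Properties (B)--(C) do the work of telescoping these per-incarnation bounds into a clean $\Omega(c(\S_j)/\rho(\Delta_-,\cdot))$ — while simultaneously ensuring the total number of distinct requests ever handed to $\alg_-$ in the relevant range is genuinely $O(\Delta_-)$ (so that $\rho(\Delta_-,\cdot)$ is the right parameter, via monotonicity of $\rho$ in its first argument) — is the delicate step; the rest is a case split plus invocation of Lemma~\ref{Lem:LinearBoundOnOptGen} and the covering property.
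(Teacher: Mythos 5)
Your high-level instinct — case-split on whether $\Delta_-$ is small (then invoke Lemma~\ref{Lem:LinearBoundOnOptGen}) or large (then invoke $\rho$-competitiveness of $\alg_-$) — is the right one, and your Case~1 ($\Delta_- \le |R_j|$) is fine. But the threshold $\Delta_- \lessgtr |R_j|$ is not where the paper splits, and this matters: your Case~2 has a real gap.

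Concretely, when $\Delta_- > |R_j|$ you want to say that the incarnation of $\alg_-$ active at layer $j$ sees at most $O(\Delta_-)$ requests, so that $\opt \gtrsim c(\S_j)/\rho(\Delta_-,\cdot)$. But there is no reason the number of requests seen at layer $j$ is bounded by $|R_j|$ or by $\Delta_-$: the layer-$j$ incarnation of $\alg_-$ is fed every element of $X \cap \hat X$ that arrives while $\layer = j$, which has nothing to do with the size of the residual set $R_j$. The correct bound on the number of layer-$j$ requests is $|X_j|$, and one gets it via Property~(D) (if more than $|X_j|$ elements of $R_{j-1}$ arrive, the cost to serve just those already forces the second alternative). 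Since Property~(A) allows $|R_j|$ to be much smaller than $|X_j|$, there is an unhandled regime $|R_j| < \Delta_- < |X_j|/2$, where your Case~1 hypothesis fails but the layer-$j$ instance still has up to $|X_j| \gg \Delta_-$ requests, so $\rho(\Delta_-,\cdot)$ is not the right parameter for $\alg_-$'s competitiveness. The paper instead splits on $\Delta_- \lessgtr |X_j|/2$: in the large case $\rho(\Delta_-,\cdot) \in \Omega(\rho(|X_j|,\cdot))$ (competitive ratios grow at most linearly); in the small case one only gets $\Delta_- \le |R_{j-1}|$, hence Lemma~\ref{Lem:LinearBoundOnOptGen} at $m=j-1$ yields $\opt \gtrsim c(\S_{j-1})/g_{\alpha,\gamma}(|\hat X|)$, and an additional sub-case on whether $c(\S_j) \le 10\,c(\S_{j-1})$ (Property~(C) for the other branch, using $\Delta_- < |R_{j-1}|/2$) is needed to transfer this to a bound in terms of $c(\S_j)$. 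None of this appears in your plan.

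You also identified the reinitialization bookkeeping as the delicate step and proposed telescoping via Properties~(B)--(C). That is not how the paper handles it: the geometric growth of Property~(B) is used elsewhere (Claim~\ref{claim:UpperboundOnCostGen}, to bound the \emph{cost} of $\alg_-$), not in this lower-bound lemma. Here the paper instead defines $a_j$ (leftover excess carried into layer $j$, capped at $c(\S_j)$) and $c(\online_j) := c(\S_j) - a_j$, shows $\opt \ge a_j$ directly (no element costs $\alg_-$ more than $\opt$, WLOG), and then bounds $c(\online_j)$ by the case analysis above. This decomposition of $c(\S_j)$ into $a_j + c(\online_j)$ is the missing accounting device, and it cleanly handles the situation where a single arriving element triggers the purchase of several consecutive layers at once.
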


\begin{shortversion}
\begin{proof}

For every $j \in \{1, \dots, i+1\}$, let us denote by $a_j$ the minimum of the value of $\excess$ at the beginning of layer $j$ and the value of $c(\S_j)$. Also for every $j \in \{1, \dots, i\}$ we define $c(\online_j):=c(\S_j)-a_j$. For $j=i+1$, we define $c(\online_j):=\overline{\excess}-a_j$. Consequently, we have $a_1 = 0$, and $0 \leq a_j \leq c(\mathcal{S}_j)$.

To better explain the above notation, consider the following scenario;  
Assume we are at $\layer=j$, and then an element $x\in \hat{X}$ arrives. According to line 9 of Algorithm~\ref{alg:ice}, ALG buys the partial solution $S$ for $x$. Now we update $\excess\leftarrow\excess+c(S)$. Now if $\excess\ge c(\S_\layer)$, let $z$ be the largest number such that \smash{$\excess\ge \sum_{i=\layer}^z c(\S_i)$}. In this case, we will have \smash{$a_{j+1}=c(\S_{j+1})$, $a_{j+2}=c(\S_{j+2})$}, $\dots$, $a_z=c(\S_z)$, and \smash{$c(\online_{j+1})=c(\online_{j+2})=\dots=c(\online_z)=0$}; Also we will have $a_{z+1}=\excess-\sum_{i=\layer}^z c(\S_i)$.

To prove the claim for any $j$, we will show that $\opt\in \Omega(\min\{\frac{c(\online_j)}{\rho(\Delta_-,\cdot)},\frac{c(\online_j)}{g_{\alpha,\gamma}(|\hat{X}|)}\}+a_j)$. 
    
    We first show that $\opt \ge a_j$. As $a_1=0$, the claim is trivial for $j=1$. If $j\ge2$ the cost that the online algorithm spends at the stage which led to purchasing $\S_{j-1}$ is at most $a_j$, and therefore $a_j$ is a lower bound on $\opt$; This is true as without loss of generality, $\alg$ never pays more than $\opt$ for any element. 

    To complete the argument, we will show $\opt\in \Omega(\frac{c(\online_j)}{\max\{\rho(\Delta_-,\cdot),g_{\alpha,\gamma}(|\hat{X}|)\}})$.  Note that the number of elements handled by our algorithm in layer $j$ is at most $|X_j|$, as otherwise by Property (D), the cost to cover only these elements is at least \smash{$\opt \ge c(\online_j)/g_{\alpha,\gamma}(|\hat{X}|)$}.
    
    So  by the assumption on the competitive ratio of $\alg$ we have \smash{$\opt\in \Omega(\frac{c(\online_j)}{\rho( |X_{j}|,\cdot)})$} (the online algorithm for elements that arrive at layer $j$, is an instance with at most $|X_j|$ elements and hence is $\rho(|X_j|,\cdot)$-competitive). This means that if $\Delta_-\ge  |X_j|/2$ we are done as  $\rho(\Delta_-,\cdot) \in  \Omega(\rho(|X_{j}|,\cdot))$ (note that the competitive ratio can grow at most linearly). So we assume that \smash{$\Delta_-< |X_j|/2$}. As $ X_j\subseteq R_{j-1}$, then $\Delta_-\le |R_{j-1}|$ and by Lemma~\ref{Lem:LinearBoundOnOptGen}, if $j>1$ then we have \smash{$\opt \ge c(\S_{j-1})/g_{\alpha,\gamma}(|\hat{X}|)$}.

    We distinguish two cases; If $j>1$ and $c(\S_{j})\le 10 c(\S_{j-1})$, then $\opt \ge  c(\S_{j-1})/g_{\alpha,\gamma}(|\hat{X}|)$ already implies $\opt \in \Omega(c(\online_j)/g_{\alpha,\gamma}(|\hat{X}|))$, and hence the claim. Now consider the case $j=1$ or $c(\S_{j})> 10 c(\S_{j-1})$; As \smash{$\Delta_-<|X_j|/2$}, then we have \smash{$\Delta_-<|R_{j-1}|/2$}.  By Property (C) for $j>1$ and Property (D) for $j=1$, it holds that \smash{$\opt \ge c(\S_{j})/g_{\alpha,\gamma}(|\hat{X}|)$}.
\end{proof}
\end{shortversion}

\begin{fullversion}
    \begin{proof}

For every $j \in \{1, \dots, i+1\}$, let us denote by $a_j$ the minimum among the value of $\excess$ at the beginning of layer $j$ and $c(\S_j)$. Also for every $j \in \{1, \dots, i\}$ we define $c(\online_j):=c(\S_j)-a_j$. For $j=i+1$, we define $c(\online_j):=\overline{\excess}-a_j$. Consequently, we have $a_1 = 0$, and $0 \leq a_j \leq c(\S_j)$.

To better explain the above notation, consider the following scenario;  
Assume we are at $\layer=j$, and then an element $x\in \hat{X}$ arrives. According to line 9 of Algorithm~\ref{alg:ice}, ALG buys the partial solution $S$ for $x$. Now we update $\excess\leftarrow\excess+c(S)$. Now if $\excess\ge c(\S_\layer)$, let $z$ be the largest number such that \smash{$\excess\ge \sum_{i=\layer}^z c(\S_i)$}. In this case, we will have \smash{$a_{j+1}=c(\S_{j+1})$, $a_{j+2}=c(\S_{j+2})$}, $\dots$, $a_z=c(\S_z)$, and \smash{$c(\online_{j+1})=c(\online_{j+2})=\dots=c(\online_z)=0$}; Also we will have $a_{z+1}=\sum_{i=\layer}^z \excess-c(\S_i)$.

To prove the claim for any $j$, we will show that $\opt\in \Omega(\min\{\frac{c(\online_j)}{\rho(\Delta_-,\cdot)},\frac{c(\online_j)}{g_{\alpha,\gamma}(|\hat{X}|)}\}+a_j)$. 
    
    We first show that $\opt \ge a_j$. As $a_1=0$, the claim is trivial for $j=1$. If $j\ge2$ the cost that the online algorithm spends at the stage which led to purchasing $\S_{j-1}$ is at most $a_j$, and therefore $a_j$ is a lower bound on $\opt$; This is true as without loss of generality, $\alg$ never pays more than $\opt$ for any element. 
    
    To see this, note that the cheapest set covering any arriving element has cost at most $\opt$. We can always buy such a set, and pay the same amount towards buying the set proposed by the online algorithm. The overall cost is at most twice that of the online algorithm. Alternatively, for any online algorithm, a guess on (a $2$-approximation of) $\opt$ can be maintained online (again losing a factor $2$ in the final guarantee). Hence, all sets of cost larger than this guess need not be considered.

    To complete the argument, we will show $\opt\in \Omega(\frac{c(\online_j)}{\max\{\rho(\Delta_-,\cdot),g_{\alpha,\gamma}(|\hat{X}|)\}})$.  Note that the number of elements handled by our algorithm in layer $j$ is at most $|X_j|$, as otherwise by Property (D), the cost to cover only these elements is at least \smash{$\opt \ge c(\online_j)/g_{\alpha,\gamma}(|\hat{X}|)$}.
    
    So  by the assumption on the competitive ratio of $\alg$ we have \smash{$\opt\in \Omega(\frac{c(\online_j)}{\rho( |X_{j}|,\cdot)})$} (the online algorithm for elements that arrive at layer $j$, is an instance with at most $|X_j|$ elements and hence is $\rho(|X_j|,\cdot)$-competitive). This means that if $\Delta_-\ge  |X_j|/2$ we are done as  $\rho(\Delta_-,\cdot) \in  \Omega(\rho(|X_{j}|,\cdot))$ (note that the competitive ratio can grow at most linearly). So we assume that \smash{$\Delta_-< |X_j|/2$}. As $ X_j\subseteq R_{j-1}$, then $\Delta_-\le |R_{j-1}|$ and by Lemma~\ref{Lem:LinearBoundOnOptGen} if $j>1$, we have \smash{$\opt \ge c(\S_{j-1})/g_{\alpha,\gamma}(|\hat{X}|)$}.

    We distinguish two cases; If $j>1$ and $c(\S_{j})\le 10 c(\S_{j-1})$, then $\opt \ge  c(\S_{j-1})/g_{\alpha,\gamma}(|\hat{X}|)$ already implies $\opt \in \Omega(c(\online_j)/g_{\alpha,\gamma}(|\hat{X}|))$, and hence the claim. Now consider the case $j=1$ and $c(\S_{j})> 10 c(\S_{j-1})$; As \smash{$\Delta_-<|X_j|/2$}, we have \smash{$\Delta_-<|R_{j-1}|/2$}.  By Property (C) for $j>1$ and Property (D) for $j=1$, it holds that \smash{$\opt \ge c(\S_{j})/g_{\alpha,\gamma}(|\hat{X}|)$}.
\end{proof}
\end{fullversion}

In the next claim, we give an upper-bound on the cost of $\alg_-$.

\begin{claim}\label{claim:UpperboundOnCostGen}
    The cost of $\alg_-$ can be upper bounded by $O(c(\S_{i-1})+c(\S_{i})) + \excess.$
\end{claim}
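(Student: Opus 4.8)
The plan is to first pin down exactly what quantity is being bounded and reduce the claim to one clean inequality about the decomposition costs, and then to prove that inequality from Property~(B).

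Since $\alg_-$ is reinitialized every time a solution $\S_j$ is bought, ``the cost of $\alg_-$'' here denotes the total cost of the partial solutions it selects over all of its reinitialized instances; equivalently, it is the total amount ever added to $\excess$ when elements of $X\cap\hat X$ arrive. Now $\excess$ is a running counter that is never reset: over the whole execution it is increased by exactly this total, and it is decreased by $c(\S_j)$ each time $\S_j$ is bought, hence in total by $\sum_{j=1}^{i}c(\S_j)$, since before termination we buy exactly $\S_1,\dots,\S_i$, each once. As $\excess$ starts at $0$ and equals $\overline{\excess}$ at termination, the cost of $\alg_-$ equals $\sum_{j=1}^{i}c(\S_j)+\overline{\excess}$. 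So it remains to prove $\sum_{j=1}^{i}c(\S_j)\le O\!\left(c(\S_{i-1})+c(\S_i)\right)$.

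For that bound, write $a_j:=c(\S_j)$. The first step is the pointwise estimate $a_j\le\tfrac12\max\{a_{j+1},a_{j+2}\}$ for every $1\le j\le i-2$: if $a_{j+1}\ge 2a_j$ it is immediate, and otherwise $c(\S_{j+1})<2c(\S_j)$, so Property~(B) applied at index $j+1$ gives $8a_j<a_{j+2}$, which is stronger. The second step sets $b_j:=\max\{a_j,a_{j+1}\}$ and uses the pointwise estimate at indices $j$ and $j+1$ (valid for $j\le i-3$) to conclude $a_j\le\tfrac12 b_{j+2}$ and $a_{j+1}\le\tfrac12 b_{j+2}$, hence $b_j\le\tfrac12 b_{j+2}$; thus the $b_j$'s halve every two steps going backwards from the end. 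Summing the two interleaved geometric tails gives $\sum_{j=1}^{i-1}b_j\le 2b_{i-1}+2b_{i-2}$; applying the pointwise estimate once more at $j=i-2$ gives $b_{i-2}\le\max\{a_{i-1},\tfrac12\max\{a_{i-1},a_i\}\}\le a_{i-1}+a_i$, and trivially $b_{i-1}\le a_{i-1}+a_i$. Since $\sum_{j=1}^{i}a_j\le\sum_{j=1}^{i-1}b_j+a_i$, this yields $\sum_{j=1}^{i}c(\S_j)\le 5\!\left(c(\S_{i-1})+c(\S_i)\right)$. The cases $i\le 2$ are immediate (taking $c(\S_0)=0$), and for $i=3$ the pointwise estimate at $j=1$ already suffices.

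The main obstacle is the second step. Property~(B) is a ``two indices ahead'' statement — a step growing by less than a factor $2$ must be over-compensated by the following step — so a single cost $a_j$ need not be small compared with its immediate successor, only compared with $\max\{a_{j+1},a_{j+2}\}$. Passing to the pairwise maxima $b_j$ is precisely what converts this into genuine geometric decay; I expect the only delicate points to be the index bookkeeping near the top (invoking Property~(B) only at indices $\le i-1$, where $\S_{i+1}$ is available as in the setup) and the small-$i$ base cases. Substituting the resulting bound on $\sum_j c(\S_j)$ into the identity from the first paragraph finishes the proof.
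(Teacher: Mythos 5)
Your proof is correct and takes essentially the same approach as the paper: both reduce to bounding $\sum_{j\le i} c(\S_j)$ by a geometric tail using Property~(B) (you via the pointwise bound $a_j\le\tfrac12\max\{a_{j+1},a_{j+2}\}$ and pairwise maxima, the paper via the pairwise-sum inequality $c(\S_j)+c(\S_{j+1})\le\tfrac34(c(\S_{j+2})+c(\S_{j+3}))$). The only cosmetic difference is that you interpret "cost of $\alg_-$" as the online spending alone rather than also counting the purchased $\S_j$'s, but since the latter adds another $\sum_{j\le i}c(\S_j)$ this changes nothing inside the $O(\cdot)$.
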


\begin{proof}
    The cost of $\alg_-$ is $O(\sum_{j=1}^i c(\S_j))+\excess.$ By Property~(B), for every $j$, $c(\S_j)+c(\S_{j+1})\le \frac{3}{4}( c(\S_{j+2})+c(\S_{j+3})).$ So \smash{$\sum_{j=1}^{i-2} c(\S_j)\in O(c(\S_{i-1})+c(\S_i))$} and hence the claim.
\end{proof}

Before proving our main result of this section, namely Theorem~\ref{thm:AnalysisGen}, we introduce some useful notation. We define \smash{$k_-:|X\cap \hat X|$} and \smash{$k_+:= |X\setminus \hat{X}|$} as the number of arriving elements that are handled by $\alg_-$ and $\alg_+$, respectively ($k_-+k_+=k$). We first analyze the competitive ratio of $\alg_-$. 

\begin{lemma}\label{lem:AnalysisGenALG-}
    Our algorithm for $ALG_-$ is $O(\rho(\eta,\cdot)+g_{\alpha,\gamma}(k_-))$-competitive. 
\end{lemma}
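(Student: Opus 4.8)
The plan is to combine the telescoping estimate of Claim~\ref{claim:UpperboundOnCostGen} with the lower bounds on $\opt$ furnished by Lemma~\ref{Lem:LogarithmicBoundOnOptGen}, and then to translate the error quantities $\Delta_-$ and $|\hat{X}|$ occurring there into the quantities $\eta$ and $k_-$ of the statement. The combinatorial facts we need are the identities $|\hat{X}| = k_- + \Delta_-$, $k = k_- + k_+$, and $|X \triangle \hat{X}| = k_+ + \Delta_-$, which together give $\eta = \min\{k,\,|X\triangle\hat{X}|\} = k_+ + \min\{k_-,\Delta_-\}$. By Claim~\ref{claim:UpperboundOnCostGen}, the cost of $\alg_-$ is $O(c(\S_{i-1}) + c(\S_i)) + \overline{\excess}$, so it suffices to bound each of $c(\S_{i-1})$, $c(\S_i)$, and $\overline{\excess}$ by $O(\rho(\eta,\cdot) + g_{\alpha,\gamma}(k_-))\cdot\opt$; the degenerate cases $i\in\{0,1\}$, where the telescoped bound reduces to $\overline{\excess}$ respectively $O(c(\S_1)) + \overline{\excess}$, are handled by the same estimates.

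For each of these quantities I would establish two independent upper bounds. The \emph{direct} bound uses the splitting $c(\S_j) = a_j + c(\online_j)$ from the proof of Lemma~\ref{Lem:LogarithmicBoundOnOptGen}: there $a_j \le \opt$ (the excess carried into a layer never exceeds what $\alg_-$ pays for a single request, which is at most $\opt$ without loss of generality), whereas $c(\online_j)$ is at most the cost of $\alg_-$, freshly initialised at the start of layer $j$, on the at most $k_-$ requests it then processes, which by competitiveness and monotonicity of $\rho$ in its first argument is at most $\rho(k_-,\cdot)\cdot\opt$. The identical reasoning bounds $\overline{\excess} = a_{i+1} + c(\online_{i+1})$ by $O(\rho(k_-,\cdot))\cdot\opt$. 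The \emph{structural} bound is simply Lemma~\ref{Lem:LogarithmicBoundOnOptGen} itself, which gives $c(\S_j)$ and $\overline{\excess}$ at most $O(\max\{\rho(\Delta_-,\cdot),\,g_{\alpha,\gamma}(|\hat{X}|)\})\cdot\opt$.

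It then remains to decide, in each regime, which of the two bounds to keep. If $\Delta_- > k_-$, then $\eta = k_+ + k_- = k \ge k_-$, and the direct bound already yields $O(\rho(k_-,\cdot))\cdot\opt \le O(\rho(\eta,\cdot))\cdot\opt$. If $\Delta_- \le k_-$, then $\Delta_- \le \eta$, so $\rho(\Delta_-,\cdot) \le \rho(\eta,\cdot)$, and moreover $|\hat{X}| = k_- + \Delta_- \le 2k_-$, so $g_{\alpha,\gamma}(|\hat{X}|) = O(g_{\alpha,\gamma}(k_-))$ because $g_{\alpha,\gamma}(2t) = O(g_{\alpha,\gamma}(t))$; hence the structural bound yields $O(\rho(\eta,\cdot) + g_{\alpha,\gamma}(k_-))\cdot\opt$. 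Substituting these estimates into Claim~\ref{claim:UpperboundOnCostGen} then proves the lemma.

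The step I expect to be the main obstacle — and the reason for the case distinction — is exactly this last translation: Lemma~\ref{Lem:LogarithmicBoundOnOptGen} is phrased with $\rho(\Delta_-,\cdot)$ and $g_{\alpha,\gamma}(|\hat{X}|)$, both of which can be arbitrarily larger than $\rho(\eta,\cdot)$ and $g_{\alpha,\gamma}(k_-)$ when the prediction grossly over-predicts (large $\Delta_-$, hence large $|\hat{X}|$). The observations that rescue us are that the identity $|\hat{X}| = k_- + \Delta_-$ forces $|\hat{X}| = O(k_-)$ precisely in the regime where the $g_{\alpha,\gamma}$-term is the one we retain, and that when $\Delta_-$ dominates, the robustification cap pins $\eta$ to $k$, so the crude $\rho(k_-,\cdot)$-bound — always available and never worse than $\rho(k,\cdot)$ — already suffices. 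A minor point to check is that the inequality $a_j \le \opt$ underlying the direct bound remains valid when a single request triggers several consecutive layer purchases; it does, since that request then costs at least the combined cost of all but the first of the newly bought partial solutions.
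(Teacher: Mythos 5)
Your proof is correct and follows essentially the same route as the paper's: bound the total cost by $O(c(\S_{i-1})+c(\S_i))+\overline{\excess}$ via Claim~\ref{claim:UpperboundOnCostGen}, then case-split on whether $\Delta_-\le k_-$ (equivalently $k_-\ge|\hat X|/2$), using the structural bound of Lemma~\ref{Lem:LogarithmicBoundOnOptGen} together with $|\hat X|\le 2k_-$ in the first case, and the coarser per-layer $\rho(k_-,\cdot)$ bound with $\eta=k\ge k_-$ in the second. The only difference is cosmetic: you make the arithmetic of $\eta=k_+ +\min\{k_-,\Delta_-\}$ and the $a_j\le\opt$ step more explicit, but the decomposition, the case distinction, and the two bounds invoked are identical to the paper's argument.
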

\begin{proof}
    We first assume \smash{$k_-\ge |\frac{\hat{X}}{2}|$}.
    From Lemma~\ref{Lem:LogarithmicBoundOnOptGen} and Claim~\ref{claim:UpperboundOnCostGen}, we observe that our algorithm is \smash{$O(\rho(\Delta_-,\cdot)+g_{\alpha,\gamma}(|\hat{X}|))$}-competitive. As \smash{$k_-\ge\frac{|\hat{X}|}{2}$}, and by definition of $\Delta_-$, $\eta\ge \min\{k_-,\Delta_-\}$, and therefore \smash{$\rho(\eta,\cdot)+g_{\alpha,\gamma}(|\hat{X}|)\in O(\rho(\eta,\cdot)+g_{\alpha,\gamma}(k_-))$}.

    Now assume $k_-< |\hat{X}|/2$. In this case $k_-<\eta$, and it suffices to show that our algorithm is $O(\rho(k_-,.))$-competitive. For this purpose, we observe that at each layer $j$ of $\alg_-$ our cost is at most $O(\rho(k_-,.))\opt$, as at most $k_-$ elements arrive. Now by Claim~\ref{claim:UpperboundOnCostGen} our algorithm is $O(\rho(k_-,.))$-competitive.
\end{proof}

Now we have all the ingredients to analyze the competitive ratio of Algorithm~\ref{alg:ice}.

\begin{theorem}\label{thm:AnalysisGen}
    Algorithm~\ref{alg:ice} is $O(\rho(\eta,\cdot)+g_{\alpha,\gamma}(k))$-competitive. 
\end{theorem}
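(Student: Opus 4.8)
The plan is to split the cost of Algorithm~\ref{alg:ice} into the part paid by $\alg_+$ and the part paid by $\alg_-$ (including the partial solutions $\S_1,\dots,\S_i$ that are bought as a function of $\excess$), and to bound each separately against $\opt$. For $\alg_+$, recall that it only ever receives the requests in $X\setminus\hat X$, of which there are $k_+ := |X\setminus\hat X|$. Since $\alg$ is $\rho(\cdot,\cdot)$-competitive in the number of arriving requests, the cost of $\alg_+$ is at most $\rho(k_+,\cdot)\cdot\opt$. Now $k_+ \le |X\triangle\hat X|$ and $k_+\le |X|$, so $k_+\le\eta$, hence $\rho(k_+,\cdot)\le\rho(\eta,\cdot)$ and the $\alg_+$ cost is $O(\rho(\eta,\cdot))\cdot\opt$.

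For $\alg_-$, I would invoke Lemma~\ref{lem:AnalysisGenALG-}, which already shows that the total cost charged on the $\hat X\cap X$ side — namely the cost of the black-box instance $\alg_-$ together with all the purchased partial solutions $\S_1\cup\dots\cup\S_i$ and the residual $\overline{\excess}$, as bounded via Claim~\ref{claim:UpperboundOnCostGen} — is $O(\rho(\eta,\cdot)+g_{\alpha,\gamma}(k_-))\cdot\opt$, where $k_- := |X\cap\hat X|$. Since $k_-\le k$ and $g_{\alpha,\gamma}$ is nondecreasing, this is $O(\rho(\eta,\cdot)+g_{\alpha,\gamma}(k))\cdot\opt$. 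Adding the two contributions, the total cost of Algorithm~\ref{alg:ice} is $O(\rho(\eta,\cdot)+g_{\alpha,\gamma}(k))\cdot\opt$, which is the claim. I should also remark that the $\opt$ appearing for $\alg_+$ and for $\alg_-$ is the same global optimum $\opt(X)$: since the covering property guarantees that the restriction of an optimal solution for $X$ to either subset of requests is feasible for that subset, $\opt(X\setminus\hat X)\le\opt(X)$ and $\opt(X\cap\hat X)\le\opt(X)$, so bounding each piece against its local optimum suffices.

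The only genuinely delicate point is making sure no cost is double-counted or omitted in the decomposition of the algorithm's total expenditure: the algorithm buys (i) whatever $\alg_+$ selects, (ii) whatever $\alg_-$ selects, and (iii) the partial solutions $\S_\layer$ triggered by the $\excess$ counter. Items (ii) and (iii), together with the leftover $\overline\excess<c(\S_{i+1})$, are exactly what Lemma~\ref{lem:AnalysisGenALG-} accounts for (via Claim~\ref{claim:UpperboundOnCostGen}, which telescopes $\sum_{j=1}^i c(\S_j)$ down to $O(c(\S_{i-1})+c(\S_i))$ using Property~(B), and then bounds $c(\S_{i-1}),c(\S_i),\overline\excess$ against $\opt$ using Lemma~\ref{Lem:LogarithmicBoundOnOptGen}). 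So the main obstacle is really just bookkeeping: verifying that the "competitive ratio of $\alg_-$" as analyzed in Lemma~\ref{lem:AnalysisGenALG-} is charged against the correct local optimum and that combining it additively with the $\alg_+$ bound is legitimate. Once that is in place, the theorem follows immediately by summing the two $O(\cdot)\cdot\opt$ bounds.
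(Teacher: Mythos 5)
Your proof is correct and follows essentially the same route as the paper: split the cost into the $\alg_+$ and $\alg_-$ contributions, bound $\alg_+$ by $\rho(k_+,\cdot)\cdot\opt$ with $k_+\le\eta$, invoke Lemma~\ref{lem:AnalysisGenALG-} for $\alg_-$, and add. The only (cosmetic) quibble is that the monotonicity $\opt(X')\le\opt(X)$ for $X'\subseteq X$ follows not from the stated union-closure of feasibility but simply from the fact that any solution covering $X$ also covers every subset of $X$; the conclusion is of course the same.
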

\begin{proof}[Proof of Theorem~\ref{thm:AnalysisGen}]
$\alg_+$ is $O(\rho(k_+,\cdot))$-competitive as it is an instance of $\alg$ with $k_+$ arriving elements. By Lemma~\ref{lem:AnalysisGenALG-}, $\alg_-$ is $O(\rho(\eta,\cdot)+g_{\alpha,\gamma}(k_-))$-competitive. By definition of $k_+$, it holds that $k_+\le \eta$ and hence $\rho(k_+,\cdot)\in O(\rho(\eta,\cdot))$. Therefore, Algorithm~\ref{alg:ice} is 
$O(\rho(\min\{\eta,k\},\cdot)+g_{\alpha,\gamma}(k))$-competitive.
\end{proof}
Now the 
proof of Theorems~\ref{thm:exp_variant} and~\ref{thm:poly_variant} follow directly from Theorems~\ref{thm:AnalysisGen} and Theorem~\ref{thm:DecompostionConstruction}.

\section{Applications}\label{sec:applications_formal}
\subsection{Set Cover and Non-Metric Facility Location}\label{sec:set_cover_decomp}

The main theorem of this section is the following.
\begin{theorem}\label{thm:SCwithPrediction}
    There is a polynomial-time, $O(\log\eta\log|\S| + \log k)$- (resp. $O(\log\eta\log|F| + \log k)$-) competitive algorithm for online set cover (resp. non-metric facility location). In exponential time, the dependence on $\log k$ can be removed.
\end{theorem}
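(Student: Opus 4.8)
The plan is to instantiate the general framework (Theorems~\ref{thm:AnalysisGen} and~\ref{thm:DecompostionConstruction}) with the known online algorithm for set cover / non-metric facility location and with a suitable decomposition. For the online ingredient, I would invoke the randomized $O(\log k \log |\S|)$-competitive algorithm of~\cite{AlonAwerbuchBuchbinderNaor} for online set cover (resp. $O(\log k \log |F|)$ for non-metric facility location), so that $\rho(k,\cdot) = O(\log k \cdot \log|\S|)$. Plugging this into Theorem~\ref{thm:AnalysisGen} gives a competitive ratio of $O(\rho(\eta,\cdot) + g_{\alpha,\gamma}(k)) = O(\log\eta\log|\S| + g_{\alpha,\gamma}(k))$, so everything hinges on producing an efficient $(\alpha,\gamma)$-decomposition with good parameters, and then reading off $g_{\alpha,\gamma}(k)$.

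The core task is therefore to show that the offline \emph{partial cover} problem --- given a set system and an integer $i$, find a low-cost collection of sets covering at least (a constant fraction of) $i$ elements --- is efficiently $(O(1), 1-1/e)$-decomposable in the sense of Definition~\ref{def:decomposition}; here I am using the maximum-coverage / partial-cover LP. Concretely, for a target $i$, I would solve the natural LP relaxation for covering exactly $i$ elements at minimum cost (or guess the optimal cost by binary search and run a maximum-coverage-type LP), and then apply the standard randomized-rounding / greedy argument for max coverage: with the LP optimum as a budget, one can select sets of total cost $O(1)\cdot \opt_i$ covering at least a $(1-1/e)$ fraction of the $i$ elements the LP fractionally covers. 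This yields $B\subseteq X$ with $c(\S_B) \le O(1)\cdot \min_{|A|=i} c(\S_A)$ and $|B| \ge i/\gamma$ with $\gamma = \frac{1}{1-1/e} = \frac{e}{e-1}$, all in polynomial time, exactly matching the $(O(1), 1-1/e)$-decomposition claimed in Section~\ref{sec:applications}. The non-metric facility location case is analogous, treating each (open facility, assigned client set) pair as a "set" with the facility opening cost plus connection costs, and using the corresponding LP; one must be slightly careful that the resulting cost measure is the one used in $\opt$, but this is routine.

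With $\alpha = O(1)$ and $\gamma = \frac{e}{e-1} > 1$, Definition of $g_{\alpha,\gamma}$ gives $g_{\alpha,\gamma}(k) = O\!\big(\log^{-1}(\tfrac{\gamma}{\gamma-1}) \log k\big) = O(\log k)$ since $\tfrac{\gamma}{\gamma-1} = e$ is a fixed constant greater than $1$. Combining with Theorem~\ref{thm:AnalysisGen}, the overall algorithm is polynomial-time and $O(\log\eta\log|\S| + \log k)$-competitive (resp. $O(\log\eta\log|F| + \log k)$), proving the first part. For the exponential-time statement, I would instead use the trivial $(1,1)$-decomposition, which exists for any covering problem (Definition~\ref{def:decomposition} / the remark after it) via enumeration or integer programming over the set system; then $\gamma = 1$ forces $g_{\alpha,\gamma}(k) = \alpha = O(1)$, and Theorem~\ref{thm:AnalysisGen} yields an $O(\log\eta\log|\S|)$-competitive algorithm with no additive $\log k$ term, at the cost of the exponential-time decomposition step. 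The main obstacle I anticipate is the first part of the middle paragraph: correctly setting up the partial-cover LP so that its integrality-gap-type rounding simultaneously certifies the $\alpha$-approximation \emph{against the best exact $i$-element cover} and the $1/\gamma$-coverage guarantee, and checking that this goes through verbatim for non-metric facility location where "sets" and their costs are defined through the client-facility structure.
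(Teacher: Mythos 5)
Your overall plan matches the paper's exactly: combine Theorem~\ref{thm:DecompostionConstruction} with Theorem~\ref{thm:poly_variant} (or Theorem~\ref{thm:exp_variant} for the exponential-time claim), feed in the $O(\log k\log|\S|)$-competitive algorithm of Alon et al.\ as $\rho$, supply an efficient $(O(1),\tfrac{e}{e-1})$-decomposition, and observe that $\gamma/(\gamma-1)=e$ makes $g_{\alpha,\gamma}(k)=O(\log k)$. The only place you deviate from the paper is in how the decomposition lemma (Lemma~\ref{lem:kSetCover}) is actually established. You propose solving a partial-cover or max-coverage LP and rounding; the paper instead binary-searches on the optimal budget and invokes the purely \emph{combinatorial} greedy $(1-1/e)$-approximation of Khuller et al.\ for Budgeted Maximum Coverage, which has the advantage of never exceeding the budget, so it directly gives $\alpha=1$ rather than an unspecified $O(1)$. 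Both routes suffice for the theorem. The step you flag as the potential obstacle --- carrying the argument over to non-metric facility location --- is precisely where the paper spends its effort: it treats NMFL as an \emph{implicit} set-cover instance with exponentially many sets (one per facility and per subset of the clients it can serve), and then shows the most cost-effective ``set'' for the greedy can still be found in polynomial time by sorting each facility's clients by connection cost and sweeping over the $|F|\cdot|C|$ candidate (facility, prefix-of-clients) pairs. Your alternative, writing the partial-NMFL LP directly in facility-opening and assignment variables, should also work, but you would still need to verify that the rounding delivers the same $(O(1),\tfrac{e}{e-1})$ guarantee against the best exact $i$-client solution; the paper's combinatorial route sidesteps that LP analysis entirely.
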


    By Theorem~\ref{thm:DecompostionConstruction} combined with Theorem~\ref{thm:poly_variant}, is is sufficient to show that these problems are efficiently $(\alpha, \gamma)$-decomposable. This is shown in Lemma~\ref{lem:kSetCover}. For the exponential-time variant, we use Theorem~\ref{thm:exp_variant} (observe once again that any covering problem is $(1,1)$-decomposable).

\begin{lemma}\label{lem:kSetCover}
    Set cover and Non-metric Facility Location are efficiently $(1, e/(e-1))$-decomposable. 
\end{lemma}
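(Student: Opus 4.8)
\textbf{Proof plan for Lemma~\ref{lem:kSetCover}.}
The plan is to show that, given a ground set $X$ and a target size $i \in \{1, \dots, |X|\}$, we can efficiently compute a set $B \subseteq X$ with $|B| \ge i/\gamma$ for $\gamma = e/(e-1)$, and a feasible solution $\S_B$ covering $B$ whose cost is \emph{at most} (not a constant times) the optimum cost of covering any $i$ elements of $X$. The key point that gives us $\alpha = 1$ is that the ``max-$k$-cover''-style relaxation we use is a \emph{linear program}, not a covering IP, so no integrality gap enters on the cost side; the only loss is on the \emph{coverage} side, and that loss is exactly the $(1-1/e)$ factor familiar from maximum coverage, which is what $\gamma = e/(e-1)$ encodes.

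Concretely, let $\opt_i$ denote the minimum cost of a sub-collection of $\S$ covering at least $i$ elements of $X$. First I would write the natural LP: variables $y_S \in [0,1]$ for $S \in \S$ and $z_e \in [0,1]$ for $e \in X$, minimize $\sum_S c(S) y_S$ subject to $z_e \le \sum_{S \ni e} y_S$ for all $e$, $\sum_{e \in X} z_e \ge i$, and $0 \le y_S, z_e \le 1$. This is a valid relaxation: the indicator vector of any integral solution covering $\ge i$ elements is feasible, so the LP optimum is at most $\opt_i$, and it is solvable in polynomial time. Let $(y^\ast, z^\ast)$ be an optimal LP solution of cost $\mathrm{LP}^\ast \le \opt_i$. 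Now I would round \emph{only the $y$-variables} by independent randomized rounding: include $S$ with probability $y^\ast_S$ (or, to get a deterministic polynomial-time algorithm, derandomize via the method of conditional expectations, or equivalently run the standard greedy max-coverage argument directly on the LP fractional solution). The expected cost of the rounded collection is exactly $\sum_S c(S) y^\ast_S = \mathrm{LP}^\ast \le \opt_i$. For coverage, each element $e$ is covered with probability $1 - \prod_{S \ni e}(1 - y^\ast_S) \ge 1 - e^{-\sum_{S \ni e} y^\ast_S} \ge (1 - 1/e)\min\{1, \sum_{S\ni e} y^\ast_S\} \ge (1-1/e) z^\ast_e$, using concavity of $1 - e^{-x}$ on $[0,1]$ together with $z^\ast_e \le \min\{1, \sum_{S \ni e} y^\ast_S\}$. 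Hence the expected number of covered elements is at least $(1-1/e)\sum_e z^\ast_e \ge (1-1/e)\, i = i/\gamma$. So there \emph{exists} an outcome achieving cost $\le \opt_i$ and coverage $\ge i/\gamma$ simultaneously — but care is needed since ``cheap'' and ``covers many'' are two correlated events, so I would instead argue via a single potential $\Phi = \sum_e (\text{covered indicator}_e) - \lambda\cdot(\text{cost})$ for a suitable $\lambda$, or more simply observe that since expected cost is $\le \opt_i$ and expected coverage is $\ge i/\gamma$, the method of conditional expectations applied to the objective $(\text{coverage}) - (\text{cost})/\opt_i$ (assuming $\opt_i > 0$; the case $\opt_i = 0$ is trivial) yields a deterministic collection with coverage minus normalized cost at least $i/\gamma - 1$, which after truncating the set $B$ to exactly $\lceil i/\gamma\rceil$ covered elements gives the required $(1, e/(e-1))$-decomposition. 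For non-metric facility location the argument is identical with ``sets'' replaced by ``open a facility $f$ and assign a client to it'': the facility-opening LP has the same form, each client $c$ corresponds to $z_c$, and the cost $c(f)$ plays the role of $c(S)$ (connection costs are $0$ or $\infty$ in the pure covering view, so they can be folded into which $(f,c)$ pairs are allowed).

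The main obstacle I anticipate is the joint control of cost and coverage: randomized rounding gives each guarantee \emph{in expectation} separately, and one must be slightly careful to extract a single outcome (or deterministic solution) that is good on both axes at once, as well as handling the boundary cases $\opt_i = 0$ and elements $e$ with $\sum_{S \ni e} y^\ast_S \ge 1$. The clean fix is the conditional-expectation derandomization on the combined objective sketched above, which is standard; alternatively one can invoke the textbook fact that the greedy maximum-coverage algorithm, run to pick the sub-collection of a fixed budget realized by $\mathrm{LP}^\ast$, covers a $(1-1/e)$ fraction of the LP-coverage — but phrasing it through the LP keeps $\alpha = 1$ transparent, since the budget itself never exceeds $\opt_i$. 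Everything else (LP solvability, the $1 - e^{-x} \ge (1-1/e)x$ inequality on $[0,1]$, truncating $B$ down to size exactly $\lceil i/\gamma \rceil$) is routine.
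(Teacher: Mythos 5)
Your central derandomization step has a genuine gap. You correctly identify that the difficulty is getting cost $\le \opt_i$ \emph{and} coverage $\ge (1-1/e)\,i$ from a single outcome, but your proposed fix --- conditional expectations applied to the combined objective $(\text{coverage}) - (\text{cost})/\opt_i$ --- does not resolve it. That objective only certifies an outcome with $\text{coverage} - \text{cost}/\opt_i \ge (1-1/e)\,i - 1$; since coverage can be as large as $|X|$, the resulting collection may have cost up to roughly $|X|\cdot\opt_i$ while still satisfying the inequality. Truncating $B$ to $\lceil i/\gamma\rceil$ covered elements does not help, because the cost of the selected sets is unaffected by discarding covered elements. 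So $\alpha = 1$ is not established by this route; a Markov-style argument would only give $\alpha = 1+\epsilon$ with a further degradation in coverage, and genuinely preserving the budget exactly requires a dependent-rounding or greedy argument. The alternative you mention in passing --- running the budgeted-maximum-coverage algorithm of Khuller, Moss and Naor with budget equal to (a binary-searched guess of) $\opt_i$ --- is exactly what the paper does, and it is the step that actually delivers cost $\le \opt_i$ and coverage $\ge (1-1/e)\,i$ simultaneously; note that even there the $(1-1/e)$ guarantee is not the plain greedy but greedy combined with enumeration over small families, so it should be invoked as a black box rather than re-derived.

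A second, smaller issue is your treatment of non-metric facility location: connection costs in $\nmfl$ are arbitrary nonnegative reals, not $0$ or $\infty$, so they cannot be ``folded into which $(f,c)$ pairs are allowed.'' The paper handles this by viewing each pair (facility, subset of its clients) as a set of the exponential-size set cover instance, with cost equal to the opening cost plus the sum of the chosen connection costs, and then showing that the most cost-effective such set can still be found in polynomial time by sorting each facility's clients by connection cost. Your LP would need an analogous assignment-variable formulation, and the rounding issue above would persist.
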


\begin{shortversion}

\begin{proof}[Proof of Lemma~\ref{lem:kSetCover}]
    By binary search, we may assume we know a number $C$, such that the optimal cost of covering $i$ elements of $(X, \S)$ equals~$C$. Convert $(X, \S)$ into an instance of so-called \emph{Budgeted Maximum Coverage Problem (BMC)}: In the BMC problem, given a collection $\mathcal{S}$ of sets defined over a universe of weighted elements, where each set $S \in \mathcal{S}$ has an associated cost $c(S)$, and given a budget $C$, the objective is to select a subcollection $\mathcal{S}' \subseteq \mathcal{S}$ such that the total cost satisfies $\sum_{S \in \mathcal{S}'} c(S) \leq C$, and the total weight of the elements covered by $\mathcal{S}'$ is maximized. Our set cover instance $(X, \S)$ converts into a BMC instance with budget $C$, where each element has unit weight. Khuller et al.~\cite{SMN99} presented a $(1-1/e)$-approximation algorithm for BMC, hence applying it we can cover at least $i\cdot\frac{e-1}{e}$ elements while not exceeding the budget $C$.

By the results from~\cite{Hochbaum, SMN99, Vygen}, similar arguments can be applied for NMFL (see full version). 
\end{proof}
\end{shortversion}

\begin{fullversion}
    
\begin{proof}[Proof of Lemma~\ref{lem:kSetCover}]
    Since NMFL is a generalisation of set cover, we only give the proof for NMFL. Through binary search, we may assume we know a budget $B$ such that the cheapest way to cover $i$ clients from the instance of NMFL has cost at most $B$. We implicitly convert our instance of NMFL into an instance of set cover (SC) with an \emph{exponential} number of sets: for each facility $f$ and each subset of clients $f$ connects to, introduce a set of total cost equal to the facility cost of $f$ plus the respective connection costs. Clearly, any partial solution to NMFL of cost at most $B$ covering at least $i$ clients corresponds to a partial solution to SC of equal cost and covering the same number of elements, and vice versa. 
    
    Now, we can convert this set cover instance into an instance
    of so-called \emph{Budgeted Maximum Coverage Problem (BMC)}: In the BMC problem, given a collection $\mathcal{S}$ of sets defined over a universe of weighted elements, where each set $S \in \mathcal{S}$ has an associated cost $c(S)$, and given a budget $C$, the objective is to select a subcollection $\mathcal{S}' \subseteq \mathcal{S}$ such that the total cost satisfies $\sum_{S \in \mathcal{S}'} c(S) \leq C$, and the total weight of the elements covered by $\mathcal{S}'$ is maximized.  Khuller et al.~\cite{SMN99} presented a $(1-1/e)$-approximation algorithm for BMC.
    
    Our set cover instance SC naturally converts into a BMC instance with budget $B$, and we can use the algorithm of~\cite{SMN99}. Crucially, their algorithm is greedy: in every stage, it considers the most cost-effective set, i.e.\,the set that covers the most uncovered elements per unit cost. In terms of the original instance of NMFL, this set corresponds to the following. Let $\mathsf{Unc}\subseteq C$ be the uncovered clients (elements), $D(f) \subseteq 2^{N(f)}$ a subset of the clients facility $f$ connects to $(N(f) \subseteq C)$, and denote by
    $$ \max_{(f,D(f))}\frac{\cost(f) + \sum_{c \in \mathsf{Unc}\cap D(f)} \cost(c,f)}{|D(f) \cap \mathsf{Unc}|}$$
    the \emph{effectiveness} of facility $f$. The facility-clients pair $(f,D)$ maximizing the effectiveness correspond to the most cost-effective set in SC, and vice versa. To see that this can be found efficiently, for every facility $f$, we order the respective clients it connects to by increasing order of connection cost, $c_f^1, c_f^2, \ldots$. For a given $f$ and given the cardinality of an optimal $D(f)$, it is clearly sufficient to only consider the first $|D(f)|$ clients in increasing order of connection cost. Hence this set can be found by comparing the effectiveness of $|F|\cdot|\mathsf{Unc}| \leq |F|\cdot|C|$ facility-clients pairs.
\end{proof}

\end{fullversion}

Note that this is asymptotically optimal, a $(1, e/(e-1)-\epsilon))$-decomposition in polynomial time implies a polynomial-time $(1-O(\epsilon))\ln|X|$-approximation to set cover, contradicting $\mathsf{P} \neq \mathsf{NP}$,~\cite{DinurSteurer}. Extending this, we can show a corresponding lower bound for online set cover with predictions.

\begin{theorem}\label{thm:LBSetCover}
    No polynomial-time learning-augmented online algorithm for set cover achieves an $O(\log\eta\allowbreak \log|\S|)$-competitive ratio (or $o(\log k)$).
\end{theorem}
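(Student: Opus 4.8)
The plan is to establish two separate impossibility results packaged in the statement: (i) no polynomial-time learning-augmented algorithm for online set cover can be $O(\log \eta \log |\S|)$-competitive, and (ii) no polynomial-time algorithm can be $o(\log k)$-competitive (even with predictions). The second part is the easier one and I would dispatch it first. Recall that online set cover without predictions has a polynomial-time lower bound: under $\mathsf{P}\neq\mathsf{NP}$, no polynomial-time online algorithm is $o(\log k \log |\S|)$-competitive, and in fact even an $o(\log k)$-competitive polynomial-time online algorithm is ruled out (this is the standard reduction tracing back to the offline hardness combined with the online construction of Alon et al.). The point is that an adversary can simply ignore the prediction: feed the algorithm a hard instance on which $\hat X = \emptyset$ (or any fixed prediction uncorrelated with the true requests), forcing $\eta = k$. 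Then a $o(\log k)$-competitive learning-augmented algorithm would in particular be a $o(\log k)$-competitive ordinary online algorithm on that family, contradicting the known lower bound. One subtlety to check: the prediction is given \emph{before} the adversary commits, so formally I would fix the prediction to be a worst case over the hard family, or note that the hard family can be taken with all instances agreeing on the empty prediction.

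For part (i), the strategy is to leverage the tightness remark already made right before the theorem: a polynomial-time $(1, e/(e-1)-\epsilon)$-decomposition would contradict $\mathsf{P}\neq\mathsf{NP}$ via Dinur--Steurer. I would argue the contrapositive at the level of the whole learning-augmented algorithm rather than the decomposition. Suppose there were a polynomial-time learning-augmented algorithm $\mathcal{A}$ that is $O(\log\eta\log|\S|)$-competitive. Now take an \emph{offline} set cover instance $(X,\S)$ that we want to approximate. Set the prediction $\hat X := X$ and feed the elements of $X$ to $\mathcal{A}$ one by one in an arbitrary order; since the prediction is perfect, $\eta = 0$, so the algorithm's cost is $O(\log 0 \cdot \log|\S|)\cdot \opt$ — i.e.\ (reading the bound at $\eta=0$, or more carefully at $\eta = O(1)$ to avoid the degenerate $\log 0$) it is $O(\log|\S|)\cdot\opt$, or even $O(\opt)$ if one insists the hidden constant absorbs $\eta$-independent terms. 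Wait — here I must be careful about exactly what $O(\log\eta\log|\S|)$ means at $\eta=0$; the honest reading from Theorem~\ref{thm:poly_variant} is that with a $(1,\gamma)$-decomposition for $\gamma>1$ one always pays an additive $O(\log k)\cdot\opt$. So the cleanest formulation is: a polynomial-time $O(\log\eta\log|\S|)$-competitive algorithm with \emph{no} additive $\log k$ term would, on perfect predictions, give a polynomial-time constant-factor (or $O(\log\log|X|)$) approximation to offline set cover, contradicting Dinur--Steurer. This is where the statement "(or $o(\log k)$)" earns its keep: it tells us the additive $\log k$ loss in Theorem~\ref{thm:poly_variant} is unavoidable, so the theorem is really two linked claims, and the parenthetical is not an afterthought but the companion lower bound showing the framework's polynomial-time guarantee is essentially optimal.

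The main obstacle, and the step I'd spend the most care on, is making the reduction from offline set cover to the learning-augmented online problem fully rigorous with respect to the role of the prediction and the online order of arrival: I need that feeding \emph{all} of $X$ with perfect prediction forces the algorithm to produce (over the course of the run) a feasible cover of $X$ whose total purchased cost is what the competitive bound controls, and that this final cover is computed in polynomial time. Since online algorithms only add sets, the union of everything bought is feasible for $X$ and has cost at most (competitive ratio)$\cdot\opt(X)$; that is exactly an approximation algorithm for offline set cover. Combining with Dinur--Steurer (no polynomial-time $(1-o(1))\ln|X|$-approximation unless $\mathsf{P}=\mathsf{NP}$) then forces the competitive ratio, as a function of $|\S|$ alone (with $\eta=0$), to be $\Omega(\log|X|)$, which after relating $|X|$ and $|\S|$ in the hard instances gives the claimed $\Omega(\log|\S|)$-type obstruction — i.e.\ one cannot have $O(\log\eta\log|\S|)$ with the $\log\eta$ factor genuinely killing the bound at small $\eta$, so some $\Omega(\log k)$-type additive term is necessary. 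I would phrase the final statement exactly as: any polynomial-time learning-augmented online algorithm for set cover that is $o(\log k)$-competitive, or that is $O(\log\eta\log|\S|)$-competitive with $o(\log k)$ additive slack, contradicts $\mathsf{P}\neq\mathsf{NP}$.
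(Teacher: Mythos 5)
Your reduction---feed the elements of an offline instance with perfect prediction $\hat X = X$, so $\eta = 0$ and the learning-augmented guarantee collapses to a polynomial-time approximation algorithm for offline set cover---is exactly the paper's argument, and your reading that the parenthetical `(or $o(\log k)$)' is what shows the additive $\log k$ slack in Theorem~\ref{thm:poly_variant} to be unavoidable is a sensible gloss on the statement.

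There is, however, a real gap in the $O(\log\eta\log|\S|)$ half. You want to pass from the $(1-\epsilon)\ln|X|$-hardness of Dinur--Steurer to an $\Omega(\log|\S|)$-type obstruction by ``relating $|X|$ and $|\S|$ in the hard instances,'' but in the standard PCP-based constructions $|\S|$ is polynomial in $|X|$, so $O(\log|\S|)=O(\log|X|)$ and the greedy algorithm already achieves a $(1+\ln|X|)$-approximation---no contradiction. To rule out an $O(\log|\S|)$-approximation one needs a separate hardness result that holds even when the set system is much smaller than the universe; the paper invokes exactly such a result (the citation to Nelson alongside Dinur--Steurer). Without it the $O(\log\eta\log|\S|)$ clause is not ruled out. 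As a minor point, the detour through $\hat X=\emptyset$ and the Alon et al.\ online construction is unnecessary for the $o(\log k)$ part: a bound that is $o(\log k)$ independently of $\eta$ already yields, with any prediction, an $o(\log|X|)$-approximation for offline set cover, and Dinur--Steurer applies directly with $k=|X|$. (The information-theoretic Alon et al.\ lower bound is a different beast and is not what a \emph{polynomial-time} hardness statement rests on.)
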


\begin{proof}
    Consider an offline instance of set cover $(X, \S)$. We use such a learning-augmented online algorithm to solve it (present the elements of $X$ in arbitrary order). This results in an approximation guarantee of $O(\log|\S|)$. Since set cover does not admit a $O(\log|\S|)$- or a $(1-\epsilon)\cdot\ln|X|$-approximation in polynomial time (assuming $\mathsf{P}\neq\mathsf{NP}$), see~\cite{Nelson07, DinurSteurer}, the proof follows.
\end{proof}

\subsection{Weighted Path Augmentation and Parking Permit Problem}\label{sec:wpap}

The main results of this section is resumed in the following two theorems.

\begin{theorem}\label{thm:main_deterministic_wpap}
    There is a deterministic, $O(\log \eta)$- and a randomized, $O(\log\log\eta)$-competitive learning-augmented online algorithm for weighted path augmentation and parking permit. 
\end{theorem}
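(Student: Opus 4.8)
The proof reduces, via the general framework (Theorem~\ref{thm:poly_variant} together with Theorem~\ref{thm:DecompostionConstruction}), to two ingredients: (i) a base online algorithm for $\wpap$ (and hence the parking permit problem, which is a special case) whose competitive ratio depends on $k$, the number of arriving elements, rather than on $n$, the path length; and (ii) the verification that $\wpap$ is efficiently $(1,1)$-decomposable, so that the additive $\log k$ term disappears and we get exactly $O(\rho(\eta,\cdot))$. For (i) I would take the deterministic $O(\log n)$-competitive algorithm of~\cite{NaorUmbohWilliamson} and the fractional $O(\log\log n)$-competitive algorithm for $\wpap$ from the same work (rounded online, as in the parking permit construction of~\cite{Meyerson05ParkingPermit}), and argue that the $\log n$ can be replaced by $\log k$. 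The standard trick is to note that at any point only the elements that have actually arrived matter: one can restrict the instance to the minimal set of links needed to cover the (at most $k$) arrived elements, collapsing maximal runs of never-requested consecutive edges, so that the effective "length" of the instance the algorithm sees is $O(k)$; re-running the analysis of~\cite{NaorUmbohWilliamson, Meyerson05ParkingPermit} on this contracted instance yields $O(\log k)$ (resp. $O(\log\log k)$ in expectation against an oblivious adversary). This adaptation is essentially the one already asserted in Section~\ref{sec:comparison_to_related_work}; I would spell out the contraction and check that it preserves feasibility and optimal cost up to constants.

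**The decomposition.** For (ii), I need: given the offline instance $(E,\L,\cost)$ with request set $X\subseteq E$ and a target $i\le |X|$, a polynomial-time routine that outputs $B\subseteq X$ with $|B|\ge i$ and a feasible link set $\S_B$ covering $B$ of cost at most the optimum cost of covering \emph{any} $i$ elements of $X$. Since $\gamma=1$ this must be exact, not approximate. The key observation is that the offline "cover at least $i$ of the given elements at minimum cost" problem on a path is polynomially solvable by dynamic programming: order $X = \{e_{p_1} < e_{p_2} < \dots\}$ along the path, and let $\DP[j][\ell]$ be the minimum cost of a link set covering at least $\ell$ of the first $j$ requested edges, where the last purchased link ends before $e_{p_j}$ or covers it; transitions either skip $e_{p_j}$ (at the cost of not covering it) or buy the cheapest link whose interval contains $e_{p_j}$ and extends as far right as possible, updating the count of covered requested edges accordingly. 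Because links are intervals, an exchange argument shows there is an optimal solution using at most $|X|$ links and with a clean left-to-right structure, so the DP table has polynomial size and the recovered solution is exactly optimal. This gives an efficient $(1,1)$-decomposition.

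**Assembling.** With (i) and (ii) in hand, Theorem~\ref{thm:poly_variant} with $\alpha=\gamma=1$ immediately converts the deterministic $O(\log k)$-competitive base algorithm into a deterministic $O(\log\eta)$-competitive learning-augmented algorithm, and the randomized $O(\log\log k)$-competitive base algorithm into a randomized $O(\log\log\eta)$-competitive one (the randomized guarantee holds against oblivious adversaries, which is compatible with $\ice$ since $\ice$ itself is deterministic and only monitors the base algorithm's expenses). Robustness — never worse than $O(\log k)$ (resp. $O(\log\log k)$) — follows from $\eta\le k$. The parking permit problem inherits all of this verbatim as the special case of $\wpap$ where links at each position are nested permit durations.

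**Main obstacle.** The delicate point is \textbf{ingredient (i)}: carefully re-deriving the $O(\log k)$ (and especially the fractional $O(\log\log k)$) guarantee for $\wpap$ from the $n$-dependent algorithms of~\cite{NaorUmbohWilliamson}. One must check that the recursive/HST-style structure those algorithms exploit is not destroyed by the contraction to an effective length $O(k)$ — in particular that the hierarchy of link lengths still has only $O(\log k)$ relevant scales once unrequested runs are collapsed, and that the online rounding of the fractional solution still loses only a constant factor when the instance is revealed incrementally. The exact $(1,1)$-decomposition via DP is comparatively routine: the only care needed is the exchange argument bounding the number of links in an optimal partial cover and confirming the DP indeed returns the exact optimum (so that $\alpha=1$), since any multiplicative slack there would reintroduce the additive $\log k$ term we are trying to avoid.
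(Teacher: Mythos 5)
Your overall strategy is the paper's strategy — plug an $(1,1)$-decomposition of $\wpap$ and a $k$-dependent base algorithm into Theorems~\ref{thm:DecompostionConstruction} and~\ref{thm:poly_variant} — and your decomposition (ingredient (ii)) is essentially the one the paper uses: a left-to-right DP over the path that computes, for each target count, the exact minimum cost of a partial cover. (Two small remarks: the transition as you stated it, ``buy the cheapest link extending as far right as possible,'' is a greedy shortcut that is not correct as written — you must enumerate over which element is the first one you skip, as the paper does; and the paper solves the contiguous base subproblems exactly via total unimodularity of the interval matrix, which you would need or some equivalent.) That part is sound.

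The real gap is in ingredient (i). Your plan is to ``collapse maximal runs of never-requested edges'' so the effective path length becomes $O(k)$ and then re-run the $O(\log n)$ / $O(\log\log n)$ analyses. This does not work as stated, for two reasons. First, the contraction is not available online: at the time a decision must be made, the algorithm does not know which edges will never be requested, so it cannot work on a contracted instance; and if you use the contraction only in the analysis, the base algorithm is still making choices on the uncontracted instance, so you have no handle on its competitive ratio against the contracted optimum. Second, and more fundamentally, contracting the path bounds the number of distinct link \emph{lengths}, not the number of distinct link \emph{cost classes} — and the $O(K)$- and $O(\log K)$-competitive ratios of the base algorithms are in terms of the number of cost classes $K$, which can be unbounded even after contraction. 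The paper handles this differently (Lemma~\ref{lem:laminarity_wpap}): it maintains an online doubling guess on $\opt$, restricts attention to links with cost in $[\opt/C_i,\opt]$ with $C_i = 2^{2^{2\cdot\mathsf{guess}_i}}$ (so that very cheap links can be bought greedily and very expensive ones discarded), rounds costs to powers of $4$ to obtain $O(\log^2 k)$ (resp.\ $O(\log k)$) cost classes, and then performs a laminarization step that makes the links nested across classes and disjoint within a class, at the cost of a constant factor. The laminarity is essential for both halves: it feeds the $O(K)$-competitive deterministic algorithm, and it is precisely what the paper's online rounding scheme (Lemma~\ref{lem:rounding_wpap}, implicit in Meyerson) requires — each arriving element lies in at most one link per level, so a single uniform threshold $\tau$ can be used to round the monotone fractional solution with only constant-factor loss. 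Your proposal invokes the fractional-to-integral rounding of Meyerson without establishing laminarity, and never introduces the $\opt$-guess, so the $O(\log k)$ and $O(\log\log k)$ base bounds do not follow from what you wrote. You did flag this as the ``main obstacle,'' and you were right: this is where the missing ideas are.
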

For both these results, we combine a $O(\log k)$-competitive (resp. $O(\log \log k)$-competitive) online algorithm for $\wpap$ with our framework. By Theorem~\ref{thm:DecompostionConstruction} and Theorem~\ref{thm:poly_variant}, is is sufficient to show that the problem is efficiently (1,1)-decomposable.

\begin{lemma}\label{lem:DP}
    $\wpap$ is efficiently $(1,1)$-decomposable.
\end{lemma}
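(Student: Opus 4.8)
The plan is to show that $\wpap$ admits an efficient $(1,1)$-decomposition, i.e.\ for every $i \in \{1, \dots, |X|\}$ we can compute in polynomial time a subset $B \subseteq X$ with $|B| \ge i$ and a feasible solution $\mathcal S_B$ covering $B$ whose cost is \emph{exactly} the minimum cost of servicing any $i$ requests from $X$. Since $\gamma = 1$ forces $|B| \ge i$ and $\alpha = 1$ forces optimality, this amounts to exhibiting a polynomial-time algorithm for the following offline problem: given a path $P$ with edge set $E$, a set of links $\mathcal L$ with costs, and a target $i$, find a minimum-cost subset of links covering at least $i$ of the (designated) elements in $X \subseteq E$. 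I would observe that this is a one-dimensional interval-covering problem and solve it by dynamic programming over the linear order of $P$.

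The key steps, in order: (1) Index the elements of $X$ along the path as $e_1 < e_2 < \dots < e_m$ (by their position on $P$), and note that each link, being a subpath, covers a contiguous block $\{e_a, \dots, e_b\}$ of these elements; without loss of generality we may assume each link's endpoints coincide with element positions, or replace each link by its ``trace'' on $X$. (2) Set up a DP with state $\DP[j][t]$ = minimum cost of a set of links that covers at least $t$ of the first $j$ elements $e_1, \dots, e_j$, where ``the first $j$ elements'' means we only care about coverage up to $e_j$ and links extending past $e_j$ are allowed but charged fully. Actually a cleaner state is $\DP[j][t]$ = minimum cost to cover a specified set of $t$ elements among $e_1,\dots,e_j$ such that $e_j$ itself is covered (or a sentinel for ``nothing chosen yet''); one transitions by deciding which link covers $e_j$ — iterate over all links $\ell$ containing $e_j$, say $\ell$ covers $\{e_a,\dots,e_b\}$ with $a \le j \le b$, and update from $\DP[a-1][t - (\text{number of } X\text{-elements in } [a,j])]$ by adding $c(\ell)$. (3) Read off the answer as $\min_t \{ \DP[j][t] : t \ge i, \text{any } j\}$, or equivalently maintain a final layer; the table has $O(m^2)$ entries and each transition ranges over $O(|\mathcal L|)$ links, so the running time is polynomial. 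Then $B$ and $\mathcal S_B$ are recovered by standard backtracking through the DP table. (4) For the parking permit special case, note it is literally an instance of $\wpap$ (time axis = path, permits = links of fixed lengths), so the same DP applies verbatim.

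The main obstacle — really the only subtle point — is getting the DP state right so that overlapping links and the ``at least $i$'' (rather than ``exactly $i$'') requirement are handled without double-counting coverage or missing optimal solutions. The contiguity of link supports on the line is exactly what makes a polynomial state space suffice: once we commit to the link covering the rightmost not-yet-covered element, the remaining subproblem lives entirely to its left and is independent, so a left-to-right sweep with a ``how many covered so far'' counter is sound. I would also need to briefly argue optimality of the DP (an exchange argument: in any optimal solution, the link covering $e_j$ can be chosen, and the rest of the solution restricted to elements left of its right endpoint is a feasible sub-solution of the corresponding subproblem), and to note that allowing $|B|$ to be larger than $i$ (when the cheapest way to cover $i$ elements happens to cover more) is fine and only helps, which is why we take $\ge i$ in the state and report the minimum over all $t \ge i$.
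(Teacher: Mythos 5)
Your proof is correct in overall approach but takes a genuinely different route from the paper's. Both proofs exploit the linear structure of the path via dynamic programming, but the DP formulations differ. The paper's state is $\DP[i,j]$ = minimum cost to cover $i$ elements from position $j$ to $n$ using only links not touching $\{1,\dots,j-1\}$; the transition is indexed by $k$, the first position \emph{not} covered, and the cost $c_{j,k-1}$ of covering the contiguous block $[j,k-1]$ is computed as a subroutine via linear programming, using that the interval-covering constraint matrix has the consecutive-ones property and is hence totally unimodular. Your DP works in the opposite direction (over prefixes) and is purely combinatorial: rather than invoking LP/TU for contiguous blocks, you branch on which link covers the current rightmost covered element, so the running time is a direct polynomial in $m$ and $|\mathcal L|$. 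Both are sound; yours avoids the TU/LP machinery entirely, while the paper's avoids iterating over individual links by lumping contiguous blocks into one LP call.

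One small bug in the ``cleaner'' version of your state: requiring $e_j$ to be covered in $\DP[j][t]$ forces the transition to land at $\DP[a-1][\cdot]$, which presumes $e_{a-1}$ is covered --- but the rightmost covered element left of the chosen link need not be $e_{a-1}$. You should either replace $\DP[a-1][\cdot]$ by $\min_{a' < a}\DP[a'][\cdot]$ (via a prefix-minimum table), or stick with your first formulation ($\DP[j][t]$ = cheapest way to cover at least $t$ of $e_1,\dots,e_j$, with the extra transition $\DP[j][t]\le\DP[j-1][t]$ for ``skip $e_j$''), where the transition as written is already correct. You flag state design as the subtle point, and this is indeed it; the fix is routine, so I consider the proof essentially complete.
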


\begin{shortversion}
    \begin{proof}[Sketch of proof of Lemma~\ref{lem:DP} (see full version)]
    We proceed by dynamic programming. We consider $\DP(\ell, [a, b])$, which denotes the cheapest cost to cover exactly $\ell$ elements among the (consecutive) elements labeled from $a$ to $b$. The base-case $\DP(b-a+1, [a,b])$ can be computed exactly, since the problem can be formulated as an LP with a totally unimodular constraint matrix~\cite[Chapter~$19$]{Schrijver_linearandintegerprogramming}. 
    \end{proof}
\end{shortversion}
    
\begin{fullversion}
\begin{proof}[Proof of Lemma~\ref{lem:DP}]
    Consider an instance of $\wpap$ with  elements $E$, where the elements are indexed from $1$ to $n$. For every $1\le i\le n$ and in $\text{poly}(n)$-time, we show how to compute a subset $A \subseteq E$ of size at least $i$, and a set of links in $L \subseteq \L$ that covers $A$, such that $\cost(L)$ is minimized.
    We handle this by dynamic programming. We define $\DP[i, j]$ to be the minimum cost to cover $i$ items from $j$ to $n$, only allowing links that do not contain elements from $1$ to $j-1$. If this is infeasible, set the value to $+\infty$. Clearly, when $i = n-j+1$, this can be solved exactly, since the constraint matrix is totally unimodular, since it has the consecutive ones property~\cite[Chapter~$19$]{Schrijver_linearandintegerprogramming}.
    To update the table for other values, we proceed as follows. Denote by $c_{j,k}$ the cheapest cost of covering all elements between $j$ and $k$ (including the endpoints) with sets not including any element after~$k$ (if $k < j$, this is $0$). Again, $c_{j,k}$ can be computed exactly by total unimodularity. We set
    $$\DP[i+1,j] := \min_{j \leq k \leq j+i} c_{j,k-1} + \DP[i-(k-j), k+1].$$
    Correctness follows from the observation that if the $k^{th}$ element $(i \leq k \leq n)$ is the first element not to be selected for a potential solution in $\DP[i+1,j]$, the cost of covering elements $i$ to $k-1$ is exactly $c_{j, k-1}$, and, the cost of selecting the remaining $i-(k-j)$ elements equals $\DP[i-(k-j), k+1]$, since any link starting earlier than $k+1$ includes $k$ and can be disregarded. The running time of this algorithm is $\mathsf{poly}(n)$ as $i$, $j$ and $k$ are at most $n$.
    \end{proof}
\end{fullversion}

The authors in~\cite{Meyerson05ParkingPermit, NaorUmbohWilliamson} give a 
deterministic $O(K)$-competitive algorithm for $\wpap$, where $K$ is the number of different cost values for the links. $K$ can be assumed to be of order $O(\log |E|)$. Furthermore, in~\cite{Meyerson05ParkingPermit}, they provide a \emph{fractional}, $O(\log K)$-competitive algorithm for parking permit. In this setting, one has to maintain a monotonically increasing, fractional solution covering all arriving elements, i.e.\,the sum of the fractional values assigned to links covering each arrived element needs to sum up to at least one. To convert such a fractional solution into an integral solution, they provide a randomized \emph{online rounding scheme} which only loses a constant factor (with respect to the fractional solution).
That is, they obtain a 
randomized $O(\log K)$-competitive algorithm for parking permit.

To use these results, we need to replace the dependence on $|E|$ with a dependence on $k$. To do so, we rely on the following technical lemma, which holds \emph{irrespective} of the knowledge of $k$.

\begin{lemma}\label{lem:laminarity_wpap}
    Consider an instance $(E, \L, \cost)$ of $\text{ }\wpap$. Up to losing a constant factor, one can assume that the links in $\L$ are \emph{laminar} (i.e., the elements covered by any two links are either disjoint or contained in one another). Furthermore, to get a $O(\log k)$ (resp. $O(\log \log k)$) competitive algorithm, one can assume that there are at most $O(\log k)$ (resp. \smash{$O(\log^2 k)$}) different cost classes.
\end{lemma}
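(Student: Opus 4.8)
The plan is to proceed in two stages: first establish laminarity, then reduce the number of cost classes. For laminarity, I would start from the observation that in \wpap, every link covers a consecutive interval of the path. The idea is to replace the given link set $\L$ by a structured sub-family organized in $O(\log|E|)$ \emph{levels}, where level $\ell$ consists of disjoint intervals of length roughly $2^\ell$ (a dyadic decomposition of the path), each priced as the cheapest link in $\L$ covering it (rounded up to a power of two, say). Any original link, covering an interval of length $L$, is contained in the union of $O(1)$ consecutive dyadic intervals at level $\lceil\log L\rceil$, so buying those dyadic links instead costs at most a constant factor more; conversely each dyadic link is not cheaper than the corresponding original link it is derived from, so the optimum can only increase by a constant factor. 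This is exactly the standard laminar/HST-style reduction used for the parking permit problem, and I would cite~\cite{Meyerson05ParkingPermit} for the analogous construction. The key point is that this transformation is oblivious to $k$ and to which elements actually arrive — it only depends on $(E,\L,\cost)$ — so it is legitimate to perform it in the offline phase of our framework.

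For the second stage, the goal is to collapse the $O(\log|E|)$ cost classes down to $O(\log k)$ (resp.\ $O(\log\log k)$ for the fractional/randomized variant). The natural approach is: among the links actually relevant to an arriving request sequence of length $k$, only a bounded range of cost scales matters. Concretely, once laminarity holds, the levels are linearly ordered; I would argue that links whose cost is much smaller than $\opt/k$ can be bought outright for all arriving elements at total cost $O(\opt)$ (there are at most $k$ arriving elements, each needs one cheap link, and each such link costs $\le \opt/k$), while links whose cost is much larger than $\opt$ are never used by any near-optimal solution and can be discarded. This leaves a window of $O(\log k)$ consecutive cost classes (costs between roughly $\opt/k$ and $\opt$, which are powers of two). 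The subtlety is that $\opt$ is not known online; the standard fix is a doubling guess on $\opt$ that restarts the algorithm, losing only a constant factor, and this is again compatible with the black-box framework. For the fractional algorithm the same argument gives a window of $O(\log k)$ classes, but one then exploits that the Meyerson-style fractional/rounding machinery is $O(\log K)$-competitive in the number $K$ of cost classes, so $K=O(\log k)$ yields $O(\log\log k)$; hence I would state the bound on cost classes as $O(\log k)$ in the deterministic case and $O(\log^2 k)$ in the randomized case only if a coarser but simpler windowing argument is used — more carefully, $O(\log k)$ suffices and feeding it into the $O(\log K)$ fractional bound gives $O(\log\log k)$.

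The main obstacle I anticipate is making the "$k$ is unknown" part rigorous without circularity: the reduction to $O(\log k)$ cost classes seems to presuppose knowing $k$, yet the lemma claims it holds irrespective of that knowledge. The resolution is that the \emph{statement} we need is only that \emph{there exists} a transformed instance with few cost classes on which a competitive algorithm exists; operationally, one runs the competitive algorithm for each of the $O(\log|E|)$ possible "guessed scales" in parallel or via the standard restart-on-doubling trick, and the analysis charges the realized cost against $\opt$ using the guess that is within a factor $2$ of the true $\opt$. I would therefore phrase the proof as: (i) laminarity via dyadic intervals, constant-factor loss, $k$-oblivious; (ii) for a guessed value $\Lambda$ of $\opt$, discard links costing $>\Lambda$ and pre-buy all links costing $<\Lambda/k$, leaving $O(\log k)$ classes; (iii) standard doubling over $\Lambda$ to remove the need to know $\opt$ (and implicitly $k$), losing $O(1)$; (iv) plug $K=O(\log k)$ into the deterministic $O(K)$-competitive and fractional $O(\log K)$-competitive algorithms of~\cite{Meyerson05ParkingPermit, NaorUmbohWilliamson}, then round the fractional solution online losing $O(1)$.
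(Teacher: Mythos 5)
Your plan hits the two right ingredients (structural transformation to laminarity, then truncating the range of cost classes using a doubling guess), but the details diverge from the paper in ways worth noting — and one of them is a genuine gap.

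On laminarity, your dyadic decomposition of the path is a legitimate alternative to the paper's construction. The paper instead works directly with the given links: within each (geometrically bucketed) cost class it removes nested links and short-cuts the rest to make them disjoint, then sweeps the cost classes from cheap to expensive, absorbing the two boundary links of the lower class into each link of the current class; since costs shrink geometrically, the cost of each modified link increases by only $O(1)$. Your dyadic/HST-style approach is simpler to state and should also give $O(1)$ loss, but you should check the cost comparison in both directions: you price each dyadic interval by the cheapest covering link, so a dyadic interval of length $2^\ell$ is no cheaper than the link it came from, and conversely any original link is covered by $O(1)$ dyadic intervals one or two levels up; this is fine, but it is a genuinely different construction than the paper's and requires its own (short) verification.

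The gap is in the cost-class step. You propose to discard links costing more than $\Lambda$ and \emph{pre-buy} links cheaper than $\Lambda/k$, concluding that $O(\log k)$ classes remain, and then you remove the dependence on unknown quantities by ``standard doubling over $\Lambda$.'' But doubling on $\Lambda \approx \opt$ does not bound the number of arriving elements; the pre-buy argument needs to know that the number of elements seen is smaller than the ratio between the current cost floor and $\opt$, and a guess on $\opt$ gives no handle on $k$. The paper resolves this differently: it maintains a doubling guess $\mathsf{guess}$ on $\log\log k$ (not on $\opt$), restarting whenever the guess is exceeded. Setting $C_i = 2^{2^{2\mathsf{guess}_i}}$ simultaneously (a) guarantees $C_i$ exceeds the number of elements seen in run~$i$ (so pre-buying links cheaper than $\opt/C_i$ costs at most $\opt$), and (b) makes the per-run costs $\mathsf{guess}_i \cdot \opt$ geometrically increasing, so the last run dominates. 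The price of this is $\log C_i = 2^{2\mathsf{guess}_i} \le \log^2 k$ cost classes, not $O(\log k)$. Your remark that ``more carefully, $O(\log k)$ suffices'' is unsupported in the online setting where $k$ is unknown: the $\log^2 k$ is not slack in the paper's argument but a direct consequence of the squaring incurred by doubling $\log\log k$. Of course, both $O(\log k)$ and $O(\log^2 k)$ classes yield $O(\log\log k)$ after taking a logarithm, so the final competitive ratio is unaffected; but the lemma as stated asserts $O(\log^2 k)$ for the randomized case, matching what the restart scheme actually delivers. If you want to keep your proof, replace the doubling on $\opt$ by a doubling on $\log\log k$ (for the randomized variant) or on $\log k$ (for the deterministic variant), with the restart and geometric-sum argument spelled out.
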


\begin{fullversion}
\begin{proof}[Proof of Lemma~\ref{lem:laminarity_wpap}]
    Note that a guess on $\opt$ can be maintained online, see e.g.~\cite{AlonAwerbuchBuchbinderNaor}. We now show that we can restrict to $O(\log^2 k)$ different cost classes in the setting where we use a $O(\log\log k)$-competitive for an instance of $\wpap$, irrespective of the knowledge of $k$. To this end, we maintain a $2$-approximation $\mathsf{guess}$ on $\log\log k$, that is
    $$\mathsf{guess} \leq \log\log k < 2\cdot \mathsf{guess},$$
    where $k$ denotes the number of elements seen so far. Whenever this exceeds $2\cdot\mathsf{guess}$, we double our guess, i.e.\,$\mathsf{guess} \leftarrow 2\cdot\mathsf{guess}$, and we re-run the algorithm from scratch. In each such run with guess $\mathsf{guess}_i$, the cost of the online algorithm is at most $\mathsf{guess}_i\cdot\opt$. Since the $\{\mathsf{guess}_i\}_{i\geq 1}$ are geometrically increasing, this is dominated by the last run (which uses the correct estimate for $\log\log k$), hence the total cost incurred is at most $O(\log\log k)\cdot \opt$. It remains to show that we may restrict to at most \smash{$O(\log^2 k)$} classes of links. To this end, set 
    $$C_i = 2^{2^{2\cdot \mathsf{guess}_i}},$$
    for each run. We claim that we may restrict to links with cost between $\opt$ and $\opt/C_i$. Indeed, all links of cost larger than $\opt$ can be disregarded, and all links of $\opt/C_i$ can be greedily chosen whenever an uncovered elements arrives, incurring total cost at most $\opt$. This is since $C_i$ is strictly larger than the total number of arriving elements in the $i^{\text{th}}$ run. Hence, by rounding the cost of a links to the next power of $4$, we may assume we have at most \smash{$\log(C_i) \leq \log^2 k$} different cost classes of links. The setting where we use a $O(\log k)$-competitive algorithm can be proved in an analogous way, though the number of different cost classes can then be restricted to $O(\log k)$.
    
   We now proceed to show laminarity. For each cost class, we may remove links that are contained in other links, and short-cut the remaining ones as to make them disjoint. This loses a factor of $2$ in the final approximation. Finally, to make the instance laminar, we proceed in increasing order of cost classes, from left to right. For any cost class, we start with the leftmost link, and add to it the (at most) two links from the lower cost class that intersect its left and right boundary. We then shorten the next link in that cost class to make it disjoint from the previous link and proceed with it. Note that this cost of these new links have increased by at most a factor of $2$: the cost of any resulting link at layer $j$ is at most that of the original link at that layer, plus the cost of two original links at layers $j-1, j-2, \ldots, 0$. Since the cost classes are geometrically decreasing, the claim follows. Finally, it remains to show that the resulting set of links is laminar. That is, for any two links $\ell_1, \ell_2$, if $\ell_1 \cap \ell_2 \neq \emptyset$, then either $\ell_1 \subseteq \ell_2$, or $\ell_2 \subseteq \ell_1$. To show this, we proceed by induction. For the basecase, i.e.\,the links of smallest cost class, this is immediate. Hence, assume all resulting links of cost class $j-1$ and lower are laminar. Once we start processing the leftmost link at layer $j$, we add to it the two resulting links crossing its left and right boundary from layer $j-1$. Since both of these links are laminar with respect to links of cost classes smaller than $j-1$, and their union is a consecutive interval, the resulting link contains all links from lower levels that it intersects. Since we shorten the next link on layer $j$ to make it disjoint from this newly created link, the next link on layer $j$ does not intersect any links the previous link contains. All in all, this results in a laminar set of links, with cost that is at most $O(1)$ times higher.
   
\end{proof}
\end{fullversion}

\begin{lemma}[implicit in~\cite{Meyerson05ParkingPermit}]\label{lem:rounding_wpap}
    Let $(E, \L, \cost)$ be an instance with a laminar set of links $\L$. Then, for any monotonically increasing fractional solution of cost $\cost_{\mathsf{frac}}$, there exists an online rounding scheme with expected cost at most $O(1)\cdot \cost_{\mathsf{frac}}$.
\end{lemma}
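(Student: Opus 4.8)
The plan is to implement Meyerson's tree-based online rounding, exploiting that a laminar family of links is naturally organized as a forest. For each element $e\in E$ the links containing $e$ are totally ordered by inclusion (laminarity), so they form a chain $\ell_1^e\subsetneq\ell_2^e\subsetneq\cdots$; more generally, the links strictly containing any fixed link $\ell$ form a chain, so $\ell$ has a well-defined immediate superset $p(\ell)$, its ``parent''. For a link $\ell$ and time $t$ write $Z_\ell(t):=\sum_{\ell'\supseteq\ell}y_{\ell'}(t)$ for the cumulative fractional mass on the path from the root down to $\ell$; then $Z_\ell(t)-y_\ell(t)=Z_{p(\ell)}(t)$, and for the smallest covering link $\ell_1^e$ of $e$ we have $Z_{\ell_1^e}(t)=\sum_{\ell\ni e}y_\ell(t)\ge 1$ once $e$ has arrived, by feasibility of the fractional solution. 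One may also assume, by rounding costs to powers of two as in the construction behind Lemma~\ref{lem:laminarity_wpap}, that the costs along any root-to-leaf chain form a geometrically increasing sequence.

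The rounding draws a single uniform threshold $\theta\sim U[0,1]$ per tree of the forest at the outset. Call a link $\ell$ \emph{active} at time $t$ if $\theta\in(Z_{p(\ell)}(t),Z_\ell(t)]$; along any arrived element's chain these intervals partition $(0,Z_{\ell_1^e}(t)]$, so exactly one link of that chain is active, and since every $Z_\ell(\cdot)$ is nondecreasing, the active link of a fixed chain only moves \emph{upward} in time. The scheme is: whenever the active link of some already-arrived element's chain is not yet bought, buy it (irrevocably). Feasibility is then immediate — when $e$ arrives, $Z_{\ell_1^e}\ge 1\ge\theta$ forces some link of $e$'s chain to be active, hence bought, hence covering $e$, and monotonicity of $y$ keeps it covered thereafter.

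For the cost, the first step is the ``static'' estimate $\Pr[\ell\text{ ever bought}]\le\Pr[\exists t:\ Z_{p(\ell)}(t)<\theta\le Z_\ell(t)]$; if the fractional solution never changed, this is exactly $\le y_\ell$ and gives expected cost $\le\cost_{\mathsf{frac}}$. The real work is the online correction: as $y$ grows, $\theta$ may enter $\ell$'s slab and later the slabs of its ancestors, so $\ell$ can be bought ``spuriously'' with probability exceeding its final $y_\ell$. The plan is to decompose $\Pr[\ell\text{ bought}]$ over the steps at which $Z_\ell$ increases, separating a fair-share term of total $O(y_\ell)$ from an excess term governed by the increments of $Z_{p(\ell)}$ (equivalently, of $\ell$'s ancestors). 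Every spurious purchase of $\ell$ coincides with the purchase of one of its proper ancestors, and since there is at most one ancestor per level and the cost classes along a chain are geometrically separated, the cheap spurious purchases telescope against the expensive ancestor purchases; summing these charges over all links yields total expected cost $O(1)\cdot\sum_\ell\cost(\ell)y_\ell=O(1)\cdot\cost_{\mathsf{frac}}$.

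The main obstacle is exactly this last accounting — showing that the irrevocable, monotone buying does not inflate the cost. On a \emph{static} laminar instance the single random shift loses nothing; in the online monotone setting a link can be forced by the growing fractional solution long after it was first threatened, and one must argue that the aggregate of these late/extra purchases is dominated by the honest fractional cost. This is where the geometric separation of cost classes supplied by Lemma~\ref{lem:laminarity_wpap} is essential: with arbitrary laminar costs (an expensive sub-link nested in a cheap one) this scheme can genuinely be off by an unbounded factor, so the structure of the laminar instance produced there is not merely convenient but load-bearing.
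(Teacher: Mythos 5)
Your rounding scheme is exactly the paper's: one uniform threshold per tree, the active link of any element's chain moves monotonically upward, and you buy the arriving element's active link if not yet bought. Your high-level intuition — a fair-share contribution of $\sum_\ell\cost(\ell)y_\ell$ plus an excess that must be absorbed by geometric separation of cost classes across the chain — is also the right one. But the cost accounting is the entire content of the lemma, and you stop precisely there, explicitly labelling it ``the main obstacle.'' Worse, the charging mechanism you sketch is not literally valid: you claim ``every spurious purchase of $\ell$ coincides with the purchase of one of its proper ancestors,'' but a spurious purchase of $\ell$ at a late arrival $e$ only requires $\theta\in(Z_{p(\ell)}(t),Z_\ell(t)]$ at that moment; it does not force $\theta$ to lie in $p(\ell)$'s own slab $(Z_{p(p(\ell))}(t'),Z_{p(\ell)}(t')]$ at a time when an element of $p(\ell)$ arrives. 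So there is no ancestor purchase to charge against, and the one-to-one charging you propose can break down.

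The paper's analysis avoids this altogether and is substantially simpler. By laminarity, every $e\in\ell$ shares the same ancestors $\ell^{(j)}$ for $j\ge i$, so the set of thresholds that ever cause $\ell$ (at level $i$) to be bought is contained in the single interval $\bigl(\sum_{j\ge i+1}x_j(e_{\mathsf{first}}^\ell),\ \sum_{j\ge i}x_j(e_{\mathsf{last}}^\ell)\bigr]$, where $e_{\mathsf{first}}^\ell,e_{\mathsf{last}}^\ell$ are the first and last arrivals inside $\ell$ (well-defined because the adversary is oblivious). Its length is $x_i(e_{\mathsf{last}}^\ell)+\sum_{j\ge i+1}\bigl(x_j(e_{\mathsf{last}}^\ell)-x_j(e_{\mathsf{first}}^\ell)\bigr)$: the first summand is the fair-share $\le y_\ell$, and the second is the ancestors' growth while $\ell$ was live. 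Summing $2^i$ times this bound over all links, interchanging the order of summation, and using that a link at level $j$ has at most one descendant at each level $i<j$ with cost $2^{j-i}$ times smaller, the ancestor-growth terms telescope into $O(1)\cdot\sum_j 2^j x_j(\cdot)=O(1)\cdot\cost_{\mathsf{frac}}$. No step-by-step decomposition of the purchase event over time, and no coupling of spurious purchases to ancestor purchases, is needed — this is the idea you should complete your proof with.
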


\begin{fullversion}
    \begin{proof}[Proof of Lemma~\ref{lem:rounding_wpap}]
    Up to rounding to the next power of $2$, we may assume that each link has weight being a power of $2$. This loses a factor of $2$ in the final approximation guarantee. For all such resulting weight classes of links, we remove the links which are entirely contained in another link. It follows then that each
    element $e$ is covered by at most one link in each layer. Denote by~$x$ the fractional, monotonically increasing solution to the instance of $\wpap$. Since the elements arrive one-by-one and each element $e$ is covered by at most one link in each layer in $L := \{0, 1, \ldots \}$ (each of cost $2^{i}$), we index this monotonically increasing solution by $x_i(e)$, where $i\in L$. For the rounding scheme, we proceed as follows. Before the arrival of the first element, we draw a number $\tau\sim[0,1]$ uniformly at random. Whenever element $e$ arrives, we update the solution $x$, and we find $i \in L$ such that
    $$\sum_{j \geq i+1}^{}x_j(e) < \tau, \text{  and,  } \sum_{j \geq i}^{}x_j(e) \geq \tau.$$
    We then buy the unique link at level~$i$ covering $e$. We claim that the expected cost (over the randomness of $\tau$) is proportional to the cost of the fractional solution. To see this, consider some link $\ell$ at layer~$i$. Denote by $e_{\mathsf{first}}^{\ell}$ the first element to appear that lies inside $\ell$, and denote by $e_{\mathsf{final}}^{\ell}$ the last element to appear that is covered by $\ell$. Note that this is well defined, since the adversary is oblivious, i.e.\,the input sequence is fixed. To buy link $\ell$ at some element $e \in \{e_\mathsf{first}^\ell, \ldots, e_{\mathsf{last}}^\ell\}$, it must hold that
    $$\sum_{j \geq i+1}^{}x_j(e_\mathsf{first}^\ell) \stackrel{(*)}{\leq} \sum_{j \geq i+1}^{}x_j(e) < \tau, \text{  and, }$$
    $$ \tau \leq \sum_{j \geq i}^{}x_j(e) \stackrel{(*)}{\leq} \sum_{j \geq i}^{}x_j(e_{\mathsf{last}}^\ell). $$
    Here, $(*)$ follows from laminarity and monotonicity. Hence, the probability of purchasing link $\ell$ at level~$i$ is at most 
    $$ \sum_{j \geq i}^{}x_j(e_{\mathsf{last}}^\ell) -  \sum_{j \geq i+1}^{}x_j(e_\mathsf{first}^\ell).$$
    Observing that the fractional value of a link only changes if the arriving element is contained in it, and denoting $\Delta x_j(e)$ as the \emph{increase} in $x_j(\cdot)$ upon the arrival of element $e$, we can rewrite this as
    $$ x_i(e_\mathsf{last}^\ell) + \sum_{j \geq i+1} \sum_{e \in \ell} \Delta x_j(e).$$ 
    The total expected cost is thus upper bounded by
    $$ \sum_{i \in L} \sum_{\{\ell \mid \cost(\ell) = 2^i\}} 2^i \left(x_i(e_\mathsf{last}^\ell) + \sum_{j \geq i+1}\sum_{e \in \ell} \Delta x_j(e)\right). $$
    Since each element is contained in at most one link per level, we can rewrite this as
    \begin{align*}
        \sum_{i \in L} \sum_{\{\ell \mid \cost(\ell) = 2^i\}} 2^i x_i(e_\mathsf{last}^\ell) + \sum_{j \in L} \sum_{\{\ell \mid \cost(\ell) = 2^j\}}\sum_{0 \leq i < j}2^{j-i}\underbrace{\sum_{e \in \ell}  \Delta x_j (e)}_{\leq x_j(e^\ell_{\mathsf{last}})} \\
        \leq O(1)\cdot \sum_{i \in L} \sum_{\{\ell \mid \cost(\ell) = 2^i\}} 2^i \cdot x_i(e_\mathsf{last}^\ell).
    \end{align*}
    The right-hand-side is the cost of the fractional solution $x_i(\cdot)$, up to a constant factor. The lemma follows.
    
\end{proof}
\end{fullversion}

\begin{theorem}\label{thm:k_comp_algorithm_wpap}
    There is a deterministic, $O(\log k)$-competitive algorithm for $\wpap$, as well as a randomized $O(\log\log k)$-competitive algorithm.
    \end{theorem}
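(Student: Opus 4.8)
The plan is to assemble the theorem from three ingredients already available: the laminarization/cost-class reduction of Lemma~\ref{lem:laminarity_wpap}, the online rounding scheme of Lemma~\ref{lem:rounding_wpap}, and the known online algorithms for $\wpap$ from~\cite{Meyerson05ParkingPermit, NaorUmbohWilliamson}. The only real design choice is how aggressively to collapse the cost classes in the deterministic versus the randomized case.

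For the deterministic bound, first invoke Lemma~\ref{lem:laminarity_wpap} to reduce, at the price of a constant factor, to an instance whose links are laminar and take only $K = O(\log k)$ distinct cost values; importantly, this reduction is valid without advance knowledge of $k$. On the reduced instance I would run the deterministic $O(K)$-competitive algorithm for $\wpap$ of~\cite{NaorUmbohWilliamson}. Since $K = O(\log k)$, this is $O(\log k)$-competitive, and composing with the constant-factor loss of the reduction finishes the deterministic case.

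For the randomized bound, apply Lemma~\ref{lem:laminarity_wpap} again, now using the weaker conclusion that the laminar instance may be taken to have only $K = O(\log^2 k)$ cost classes. On this instance, run the fractional online algorithm for $\wpap$ (the parking-permit version is from~\cite{Meyerson05ParkingPermit}, its extension to $\wpap$ from~\cite{NaorUmbohWilliamson}), which maintains a monotonically increasing fractional covering solution of cost $O(\log K)\cdot\opt = O(\log\log k)\cdot\opt$. Because the links are laminar, Lemma~\ref{lem:rounding_wpap} applies and converts this fractional solution, online, into an integral one of expected cost $O(1)$ times the fractional cost, i.e.\ $O(\log\log k)\cdot\opt$. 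Chaining the constant-factor losses of the two lemmas yields the randomized $O(\log\log k)$-competitive algorithm (against an oblivious adversary, as the rounding scheme requires).

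The main point requiring care — and the only genuine obstacle — is that none of these components may assume $k$ in advance. Lemma~\ref{lem:laminarity_wpap} is stated to hold irrespective of the knowledge of $k$, and internally maintains a doubling guess on $\log k$ (resp.\ $\log\log k$), restarting the computation whenever the guess is exceeded. I would therefore restart the whole pipeline on each such event: re-laminarize with the updated cost-class bound, re-initialize the (fractional) online algorithm, and, in the randomized case, re-draw the threshold of the rounding scheme. Since the successive guesses, hence the per-run competitive factors, grow geometrically, the summed cost over all runs is within a constant factor of the cost of the final run, which uses the correct estimate — exactly the bookkeeping already performed inside the proof of Lemma~\ref{lem:laminarity_wpap}. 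Everything else is a direct substitution into the cited black-box guarantees.
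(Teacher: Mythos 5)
Your proof is correct and follows essentially the same route as the paper: laminarize and collapse to $O(\log k)$ (resp.\ $O(\log^2 k)$) cost classes via Lemma~\ref{lem:laminarity_wpap}, run the deterministic $O(K)$-competitive algorithm for the first bound, and for the second bound maintain a fractional solution and round it online with Lemma~\ref{lem:rounding_wpap}. The one point of divergence is the choice of fractional algorithm: the paper views the laminar instance as a set-cover instance where each element lies in $d = O(\log^2 k)$ sets and invokes the $O(\log d)$-competitive Buchbinder--Naor online primal-dual algorithm, whereas you invoke the $\wpap$-specific fractional algorithm of Meyerson / Naor--Umboh--Williamson with $O(\log K)$ competitiveness; both yield the same $O(\log\log k)$ factor, and both produce the monotonically increasing fractional solution that Lemma~\ref{lem:rounding_wpap} requires. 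Your explicit treatment of the doubling guess on $k$ and the accompanying restarts is also consistent with how the paper handles the unknown horizon inside Lemma~\ref{lem:laminarity_wpap}.
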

    
    \begin{proof}[Proof of Theorem~\ref{thm:k_comp_algorithm_wpap}]
        For the deterministic algorithm, by Lemma~\ref{lem:laminarity_wpap}, we may assume that the instance of $\wpap$ is laminar and that there are at most $O(\log k)$ different cost classes. Since the deterministic algorithms of~\cite{Meyerson05ParkingPermit, NaorUmbohWilliamson} are $O(K)$-competitive with respect to the number of different cost classes, the claim follows. For the randomized algorithm, we proceed by randomized rounding. By the results in~\cite{BuchbinderNaor_online_primal_dual}, it is possible to maintain a monotonically increasing fractional solution to an instance of set cover, that is $O(\log d)$-competitive, where $d$ is the maximum number of sets each element is contained in. In our setting, links correspond to sets and thus \smash{$d = O(\log^2 k)$}. By Lemma~\ref{lem:rounding_wpap}, we can round this $O(\log\log k)$-competitive solution in an online fashion, incurring only constant factor loss. This yields the $O(\log\log k)$-competitive algorithm.
    \end{proof}

\begin{fullversion}

\begin{theorem}\label{thm:lower_bound_if_links_not_known}
    If $(E, \L, \cost)$ is not given in advance, there is a lower bound of $\Omega(k)$ on the competitive ratio of any deterministic or randomized online algorithm for $\wpap$.
\end{theorem}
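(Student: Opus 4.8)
The plan is to exhibit an oblivious adversary forcing any online algorithm to pay $\Omega(k)$ times the offline optimum, exploiting that the link family $\L$ (and its costs) is revealed only incrementally. Take the path to consist of $k$ edges $e_1,\dots,e_k$ (equivalently, $k$ well-separated marked edges of a longer path; the separation is immaterial). The instance has $k+1$ links: for each $i$, a singleton link $L_i$ covering only $e_i$, of cost $1$; and a global link $L^{\star}$ covering the whole path, also of cost $1$. Online, the requests arrive as $e_1,e_2,\dots,e_k$ in this order, and when $e_i$ arrives the adversary presents to the algorithm, among the links covering $e_i$, only the link $L_i$; the link $L^{\star}$ is never presented during the online phase. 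This is a legitimate strategy precisely because $(E,\L,\cost)$ is not given in advance: the algorithm sees links only as they are introduced, and each arriving request $e_i$ is covered by at least one introduced link (namely $L_i$), so the instance stays well posed.

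Next I would argue the lower bound on $\cost(\alg)$. When $e_i$ arrives, the only link ever shown to the algorithm that covers $e_i$ is $L_i$ itself — the previously shown links $L_1,\dots,L_{i-1}$ are singletons on distinct edges and cover neither $e_i$ nor were ever of any use for it. Since the algorithm must immediately and irrevocably pick a link covering $e_i$, it is forced to buy $L_i$ at cost $1$, and buying further links can only increase its cost; hence $\cost(\alg)\ge k$. In contrast, the offline optimum has the whole instance available and simply buys $L^{\star}$ once, covering every request at total cost $1$, so $\opt=1$. This already yields a competitive ratio of at least $k$ for any deterministic online algorithm.

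For the randomized case I would observe that the adversary above is oblivious: the request order and the schedule of link introductions are fixed in advance, independent of the algorithm's coin tosses. For any fixed outcome of the coins the argument above applies verbatim, because at step $i$ the algorithm has a unique feasible action (buy $L_i$); therefore $\cost(\alg)\ge k$ with probability $1$, so $\mathbb{E}[\cost(\alg)]\ge k=k\cdot\opt$, and the randomized competitive ratio is $\Omega(k)$ as well.

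The only delicate point — and the step I expect to be the main obstacle — is pinning down the precise meaning of ``$(E,\L,\cost)$ not given in advance'' so that the adversary's refusal to ever reveal $L^{\star}$ is admissible while $\opt$ is still allowed to use it. Once one adopts the natural model in which a link becomes visible to the algorithm only when it is introduced (under adversary control, subject to feasibility at each request), everything else — the forced purchases, the value of $\opt$, and the extension to randomized algorithms against oblivious adversaries — is routine.
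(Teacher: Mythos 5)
Your construction differs from the paper's, and the difference is not cosmetic: it hinges exactly on the modelling issue you yourself flag as ``the only delicate point,'' and the paper is constructed specifically so as \emph{not} to need that assumption.

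You build singleton links $L_1,\dots,L_k$ plus a global link $L^{\star}$, and your adversary simply never shows $L^{\star}$ to the algorithm while still letting $\opt$ buy it. This requires an information model in which the adversary may withhold links entirely, subject only to each request being coverable by \emph{something} shown so far. Such a model is uncomfortably strong: in essentially any covering problem that has per-request singleton covers, an adversary with that power gets an $\Omega(k)$ lower bound for free by hiding the cheap global cover, so the theorem would be a vacuous modelling artefact rather than a statement about $\wpap$. Moreover, under the weaker and arguably more natural model --- all links covering an arrived element are revealed, but the algorithm does not know how far unseen links extend to the right --- your instance collapses: as soon as $e_2$ arrives, the algorithm can recognize a link that covered both $e_1$ and $e_2$, distinguish $L^{\star}$ from the singletons, and buy it for $O(1)$ total cost.

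The paper avoids this entirely by using a \emph{nested prefix} family: link $i$ covers $\{1,\dots,i\}$, all of unit cost. When element $j$ arrives, the links that cover it --- namely links $j,j+1,\dots$ --- are pairwise indistinguishable to the algorithm (they agree on everything seen so far and differ only in their unseen right extent), so no link need be hidden; an adaptive relabelling ``you bought the shortest one'' suffices for deterministic algorithms. This is also why the paper's randomized argument is genuinely more work than yours: against a randomized algorithm you cannot relabel after the fact, so the paper fixes an oblivious doubling schedule of requested indices and bounds the probability $\prod_{i\ge 1}(1-2^{-i})\ge 1/4$ that each new request falls outside every previously bought link, giving expected cost $\Omega(k)$ versus $\opt=1$. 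Your randomized case, by contrast, is ``trivially verbatim'' only because the hidden-link model leaves the algorithm with a unique feasible move at every step --- which is precisely the symptom that the adversary model is too strong. In short: the high-level intuition (exploit that the link set is not known in advance) is right, but the instance you pick does not survive outside the hide-the-link model, whereas the paper's nested construction does, and that is the actual content of the proof.
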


\begin{proof}[Proof of Theorem~\ref{thm:lower_bound_if_links_not_known}]
    The lower bound instance looks the same for both the deterministic and randomized setting: Each link has unit cost. The first link only covers the first element, the second link only covers the first and second element, the third covers the first, second and third element, and so on. 
    
    In the deterministic setting, the adversary releases the elements one-by-one, in order. Up to reordering the links, among all eligible links, we may assume that the online algorithm picks the one of smallest index. This yields the competitive ratio of $k = |E|$. 

    In the randomized setting, the adversary releases $k = \log |E|$ elements indexed by $\lfloor (1-2^{-j})\cdot|E| \rfloor$, for $j \in \{1, 2, \ldots, \log |E|\}$, in that order. Note that whenever an uncovered element arrives, among all links covering it, the randomized online algorithm will select one such link uniformly at random. This is since all of the links remaining up to this point "look the same". We may also assume without loss of generality, that the online algorithm only buys a single link if an uncovered element appears. Hence, the probability that the \smash{$\text{j}^{\text{th}}$}-element is not covered by a link covering elements $1, \ldots, j-1$ can be lower bounded by
    $$(1-2^{-j})\cdot(2^{-1}-2^{-j})\cdot \, \cdots \, \cdot (2^{-j+1}-2^{-j}).$$
    Rearranging and simplifying, we can further lower bound the probability of the online algorithm having to add a link to cover $e$ by 
    $$\prod_{i \geq 1} (1-2^{-i}).$$
    Taking the logarithm, using that $\log(1-x) \geq 2\log(2)\cdot x$ and exponentiating, it follows that this is at least $1/4$. Hence, the lower bound of $\Omega(k)$ follows.
\end{proof}

\end{fullversion}

\subsection{Metric facility Location and Steiner variants}\label{sec:steiner}
The main result from this section is the following.

\begin{theorem}\label{thm:metric_facility_location}
    There is a polynomial-time, $O(\tfrac{\log\eta}{\log\log\eta})$-competitive learning-augmented algorithm for metric facility location.
\end{theorem}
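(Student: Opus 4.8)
The plan is to instantiate the general framework of Theorem~\ref{thm:poly_variant}. By that theorem it suffices to exhibit (i) a black-box online algorithm for metric facility location whose competitive ratio $\rho(k,\cdot)$ depends only on the number $k$ of arriving clients, and (ii) an efficient $(\alpha,\gamma)$-decomposition of the \emph{offline} problem with $\alpha = O(1)$ and, crucially, $\gamma = 1$, so that the additive $\log k$ term in Theorem~\ref{thm:poly_variant} vanishes. For (i) we simply invoke the near-optimal online algorithm for metric facility location of~\cite{Meyerson_online_facility_location, Fotakis_facility_location}, which is $O(\tfrac{\log k}{\log\log k})$-competitive (and whose ratio grows at most linearly in $k$, as the framework requires). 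Plugging $\alpha = O(1)$, $\gamma = 1$, $\rho(\eta,\cdot) = O(\tfrac{\log\eta}{\log\log\eta})$ into the bound $O(\alpha)\cdot\rho(\eta,\cdot) + O(\alpha\log^{-1}(\tfrac{\gamma}{\gamma-1})\log k)$ then gives exactly $O(\tfrac{\log\eta}{\log\log\eta})$.

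It remains to establish (ii). Unwinding Definition~\ref{def:decomposition}, efficient $(O(1),1)$-decomposability of offline metric facility location asks: given the predicted clients $\hat X$, the facilities $F$ with opening costs, the metric distances, and an integer $i$, compute in polynomial time a set $B \subseteq \hat X$ with $|B| \geq i$ together with a choice of open facilities and an assignment of $B$ to them of total cost at most a constant times $\min_{A \subseteq \hat X,\, |A| = i} c(\S_A)$. This is precisely \emph{robust} (outlier) metric facility location: open facilities and connect at least $i$ clients, minimizing opening plus connection cost. I would invoke the known polynomial-time constant-factor approximation for this problem (Charikar, Khuller, Mount and Narasimhan), set $B$ to be the at-least-$i$ clients its solution connects, and let $\S_B$ be the opened facilities together with the induced assignment. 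Since that guarantee connects \emph{at least} $i$ clients while paying $O(1)\cdot\opt_i$, and since the optimum serving exactly $i$ clients of $\hat X$ is certainly no cheaper than the robust optimum, Definition~\ref{def:decomposition} is met with $(\alpha,\gamma) = (O(1),1)$; restricting the instance to the client subset $\hat X$ preserves the metric and the approximation, so this is harmless.

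Combining (i) and (ii) through Theorem~\ref{thm:poly_variant} — equivalently through Theorems~\ref{thm:AnalysisGen} and~\ref{thm:DecompostionConstruction} — yields the desired polynomial-time, $O(\tfrac{\log\eta}{\log\log\eta})$-competitive learning-augmented online algorithm.

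The main obstacle is the precise strength demanded of ingredient (ii): the argument needs a genuine constant-factor approximation for outlier facility location, one that connects \emph{at least} $i$ clients rather than only $(1-\epsilon)i$. A merely bicriteria guarantee would force $\gamma = \tfrac{1}{1-\epsilon} > 1$, and Theorem~\ref{thm:poly_variant} would then only give $O(\tfrac{\log\eta}{\log\log\eta}) + O(\epsilon^{-1}\log k)\cdot\opt$, in which the $\log k$ term dominates precisely when the prediction is accurate; obtaining the clean statement therefore hinges on the exact-coverage version of the approximation. The remaining checks — that the algorithm of~\cite{Fotakis_facility_location} can be used purely as a black box, and that binary search over the target number of clients and over the budget keeps the decomposition polynomial — are routine.
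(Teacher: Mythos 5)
Your proposal is correct and follows essentially the same route as the paper: combine the $O(\tfrac{\log k}{\log\log k})$-competitive online algorithm of~\cite{Meyerson_online_facility_location, Fotakis_facility_location} with Theorem~\ref{thm:poly_variant}, and establish efficient $(O(1),1)$-decomposability via the constant-factor approximation for robust (outlier) facility location of Charikar, Khuller, Mount and Narasimhan. The concern you raise about needing a genuine (not bicriteria) guarantee is well-placed but already resolved by the paper's specific citation to~\cite[Theorem~$5.2$]{CharikarKhullerMountNarasimhan}, which gives a true $3$-approximation serving at least the prescribed number of clients, yielding a $(3,1)$-decomposition.
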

 
\begin{proof}[Proof of Theorem~\ref{thm:metric_facility_location}]
    By the results of~\cite{Meyerson_online_facility_location, Fotakis_facility_location}, there is a deterministic \smash{$O(\tfrac{\log k}{\log \log k})$}-competitive algorithm for facility location. Hence, this result follows in a black-box way from our framework, provided the problem is efficiently $(O(1), 1)$-decomposable. Decomposability follows from a problem that has been studied in the setting of \emph{facility location with outliers}. In particular, in~\cite[Theorem~$5.2$]{CharikarKhullerMountNarasimhan}, the authors give a $3$-approximation to problem of serving a prescribed fraction of the clients as cheaply as possible, meaning that a $O(3, 1)$-decomposition can be computed in polynomial-time.
\end{proof}
We now list other prominent problems in network design where our framework applies. 
\begin{fullversion}
    With the exception of (edge-weighted) Steiner tree, we do not know how to turn our approach polynomial, i.e. how to compute an $(O(1), 1)$-decomposition in polynomial time. Note that most of these problems generalise the set cover problem, hence this is impossible in polynomial-time. 
\end{fullversion}
\begin{theorem}[informal]
    For all the problems listed below, our framework provides a learning-augmented algorithm with competitive ratio $O(\rho(\eta, \cdot))$.
\end{theorem}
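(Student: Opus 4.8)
The plan is to derive the statement as a direct corollary of Theorem~\ref{thm:exp_variant}: for each problem in the list I would (i) check that it fits the definition of an online covering problem used by the framework, (ii) recall a black-box online algorithm with a competitive ratio $\rho(k,\cdot)$ depending on the number $k$ of arriving requests, and (iii) invoke Theorem~\ref{thm:exp_variant}, using that every covering problem is $(1,1)$-decomposable, to obtain the $O(\rho(\eta,\cdot))$ bound. Concretely, plug the chosen $\alg$ into $\ice$ and run the analysis of Theorem~\ref{thm:AnalysisGen} with the trivial $(1,1)$-decomposition, for which $g_{\alpha,\gamma}\equiv 1$, so that the additive term vanishes.

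The substantive point is (i), the covering closure property: if $\S_1$ is feasible for a request set $X_1$ and $\S_2$ is feasible for $X_2$, then $\S_1\cup\S_2$ is feasible for $X_1\cup X_2$. For all the network-design problems considered this holds because feasibility is a \emph{monotone connectivity} requirement relative to a fixed root. In (edge- or node-weighted, undirected or directed) Steiner tree, a request is a terminal and a partial solution is an edge/node set whose induced subgraph connects the served terminals to the root; adding edges/nodes never destroys connectivity, so the union of two partial solutions serves the union of the terminal sets. In the forest / prize-collecting variants, a request is a terminal pair $(s_i,t_i)$ and feasibility means $s_i,t_i$ lie in a common component, which is again closed under union. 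In group Steiner tree/forest, a request is a group and feasibility means the subgraph touches (resp.\ connects a pair from) the relevant group(s); once more, a union of feasible solutions stays feasible. Hence every problem in the list is a covering problem in the required sense, with requests being exactly the terminals/pairs/groups that arrive online.

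For (ii), each listed problem admits an online algorithm whose competitive ratio is a (poly)logarithmic function of the request count $k$: $O(\log k)$ for edge-weighted Steiner tree and Steiner forest~\cite{MoseleyXu, AzarPanigrahiTouitouOnlineGraphAlgo}, and $O(\log^{O(1)}k\,\log^{O(1)}|V|)$ for the node-weighted, directed, and group variants~\cite{NaorPanigrahiSingh, ChakrabartyEneKrishnaswamyPanigrahi}. Applying Theorem~\ref{thm:exp_variant} then yields a learning-augmented algorithm of competitive ratio $O(\rho(\eta,\cdot))$, i.e.\ $O(\log^{O(1)}\eta\,\log^{O(1)}|V|)$ in the polylogarithmic cases. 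For edge-weighted Steiner tree one can moreover make the framework run in polynomial time: a constant-factor approximation for quota (equivalently, prize-collecting) Steiner tree gives an efficient $(O(1),1)$-decomposition in the sense of Definition~\ref{def:decomposition}, so Theorem~\ref{thm:poly_variant} applies with $\gamma=1$ and again produces an $O(\rho(\eta,\cdot))$-competitive, polynomial-time algorithm.

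I expect the only place requiring care to be step (i): pinning down, for each problem, that the partial-solution objects are the monotone connectivity certificates (edge/node subsets) rather than, say, minimal trees, so that closure under union genuinely holds; and, relatedly, checking that the competitive ratios cited in step (ii) are stated as functions of the request count $k$ and not only of $|V|$, which they are in the referenced works. Once these routine verifications are in place, the theorem follows immediately from Theorems~\ref{thm:exp_variant} and~\ref{thm:poly_variant}.
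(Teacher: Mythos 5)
Your proposal matches the paper's approach: apply Theorem~\ref{thm:exp_variant} with the trivial $(1,1)$-decomposition (which exists for any covering problem) to the cited $\rho(k,\cdot)$-competitive online algorithms, and additionally, for edge-weighted Steiner tree, use a constant-factor approximation for the quota ($k$-Steiner tree) problem to obtain an efficient $(O(1),1)$-decomposition and invoke Theorem~\ref{thm:poly_variant} (the paper's Lemma~\ref{lem:k_Steiner_tree} gives a $4$-approximation via $k$-MST on the metric closure). One small caveat: quota Steiner tree and prize-collecting Steiner tree are related but not equivalent in the literal sense (the Lagrangian relation is delicate); you only need the quota version, so this aside does not affect correctness, but it would be cleaner to drop the parenthetical.
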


\begin{shortversion}

\begin{itemize}
    \item Steiner tree and Steiner forest: These problems have $O(\log k)$-competitive online algorithms,~\cite{ImaseWaxman, BermanCoulston} and are the only two Steiner variants considered previously in the literature in our learning-augmented setting~\cite{MoseleyXu, AzarPanigrahiTouitouOnlineGraphAlgo}. Results were discussed in the introduction. 
    \item Connected facility location: This is a variation of the facility location problem, where the facilities need to remain connected. As such, it generalises the online Steiner tree problem. The best competitive ratio is $O(\log k)$,~\cite{Umboh}.
    \item Group Steiner Tree and Forest: 
    Groups of terminals (terminal pairs) arrive one-by-one, and for each group, one needs to select one terminal (pair) to be included in the Steiner Tree (forest). For edge-weighted graphs, the best competitive ratio is of order \smash{$O(\log^7 k \log^5 |V|)$},~\cite{NaorPanigrahiSingh}.
    \item Node-weighted Steiner problems: 
    Defined analogously as in (edge weighted) Steiner problems, but the costs are on vertices (and edges). To use an edge, one needs to also have purchased its two incident vertices. The best achievable competitive ratio for Steiner problems is typically a poly-logarithmic factor higher than the edge-weighted setting,~\cite{NaorPanigrahiSingh, BorstEliasVenzin25}, and for generalizations such as the group-setting, the best-known running time is \emph{quasi-polynomial}, i.e.\,of the form \smash{$|V|^{\mathsf{poly}(\log |V|)}$}.
    \item Directed Steiner problems: This further generalizes the node- and edge-weighted setting. In polynomial-time, there is a \smash{$O(k^{\epsilon}\log^{O(1)}(|V|))$}- (\smash{$O(k^{1/2 + \epsilon}\log^{O(1)}(|V|))$}-) competitive algorithm for directed Steiner tree (forest), for any $\epsilon > 0$,~\cite{ChakrabartyEneKrishnaswamyPanigrahi}.

\end{itemize}

\end{shortversion}

\begin{fullversion}

\begin{itemize}
    \item Steiner tree and Steiner forest: These problems have $O(\log k)$-competitive online algorithms,~\cite{ImaseWaxman, BermanCoulston}. 
    These are the only two Steiner variants considered previously in the literature in our learning-augmented setting. Results were discussed in the introduction (\cite{MoseleyXu, AzarPanigrahiTouitouOnlineGraphAlgo}). 
    \item Connected facility location: This is a variation of the facility location problem, where the facilities need to remain connected. As such, it generalises the online Steiner tree problem. The best competitive ratio is $O(\log k)$,~\cite{Umboh}.
    \item Group Steiner Tree and Forest: The group setting is an important generalization of Steiner problems. Groups of terminals (terminal pairs) arrive one-by-one, and for each group, one needs to select one terminal (pair) to be included in the Steiner Tree (forest). For edge-weighted graphs, the best competitive ratio is of order \smash{$O(\log^7 k \log^5 |V|)$},~\cite{NaorPanigrahiSingh}.
    \item Node-weighted Steiner problems: This generalizes (edge-weighted) Steiner problems and set cover problems. Problems are defined analogously, but the costs are on vertices (and edges). To use an edge, one needs to also have purchased its two incident vertices. The best achievable competitive ratio for Steiner problems is typically a poly-logarithmic factor higher than the edge-weighted setting,~\cite{NaorPanigrahiSingh, BorstEliasVenzin25}, and for generalizations such as the group-setting, the best-known running time is \emph{quasi-polynomial}, i.e.\,of the form \smash{$|V|^{\mathsf{poly}(\log |V|)}$}.
    \item Directed Steiner problems: This further generalizes the node- and edge-weighted setting. In polynomial-time, there is a \smash{$O(k^{\epsilon}\log^{O(1)}(|V|))$}- (\smash{$O(k^{1/2 + \epsilon}\log^{O(1)}(|V|))$}-) competitive algorithm for directed Steiner tree (forest), for any $\epsilon > 0$,~\cite{ChakrabartyEneKrishnaswamyPanigrahi}.

\end{itemize}

In the remainder of the section, we show how to recover the results from~\cite{MoseleyXu} (and partially~\cite{AzarPanigrahiTouitouOnlineGraphAlgo}).

\begin{theorem}\label{thm:learning_augmented_steiner_tree}
    Our framework provides a polynomial-time learning-augmented $O(\log\eta)$-competitive online for online Steiner tree.
\end{theorem}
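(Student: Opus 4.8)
The plan is to obtain this as an immediate instance of the general framework. By Theorem~\ref{thm:poly_variant} together with Theorem~\ref{thm:DecompostionConstruction}, taking $\gamma = 1$ (so that the additive $\log k$ term vanishes and the resulting competitive ratio is $O(\alpha)\cdot\rho(\eta,\cdot)$), it suffices to supply two ingredients: an $O(\log k)$-competitive online algorithm for Steiner tree, and the fact that the \emph{offline} Steiner tree problem is efficiently $(O(1),1)$-decomposable in the sense of Definition~\ref{def:decomposition}. The first ingredient is the classical online Steiner tree algorithm of Imase and Waxman~\cite{ImaseWaxman}. Plugging $\alpha = O(1)$ and $\rho(k,\cdot) = O(\log k)$ into the framework then yields a polynomial-time online algorithm of competitive ratio $O(\log\eta)$, recovering the guarantee of~\cite{MoseleyXu}. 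So, exactly as in Lemma~\ref{lem:kSetCover} and Lemma~\ref{lem:DP}, everything reduces to the decomposability claim.

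The substantive step is therefore to establish efficient $(O(1),1)$-decomposability of offline Steiner tree. Unpacking Definition~\ref{def:decomposition} for $\gamma=1$: given the metric graph $G$, a terminal set $X$, and a target $i \in \{1,\dots,|X|\}$, I need to compute in polynomial time a subset $B \subseteq X$ with $|B| \ge i$ together with a Steiner tree servicing $B$ of cost at most $O(1)\cdot\min_{A\subseteq X,\,|A|=i} c(\text{tree for }A)$. I would observe that this is precisely a constant-factor approximation to the \emph{$k$-Steiner tree} (quota Steiner tree) problem: find a minimum-cost subtree of $G$ spanning at least $i$ terminals. Indeed, the optimum of the ``at least $i$'' version is at most $\min_{|A|=i} c(\text{tree for }A)$, since any tree spanning exactly $i$ terminals is feasible for it; hence any $\alpha$-approximate $k$-Steiner tree $\tilde T$ covers a set $B$ with $|B|\ge i$ while satisfying $c(\tilde T) \le \alpha\cdot\min_{|A|=i} c(\text{tree for }A)$ --- exactly the two required properties, with no pruning argument needed. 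Constant-factor approximations for $k$-Steiner tree are known, e.g.\ via Garg's $2$-approximation for $k$-MST~\cite{Garg05} together with the Lagrangean-relaxation reduction of Chudak, Roughgarden and Williamson~\cite{ChudakRoughgardenWilliamson}; any $O(1)$ factor is enough. Since the construction in Theorem~\ref{thm:DecompostionConstruction} invokes this primitive only on residual terminal sets $R_i \subseteq \hat X$, and these approximation algorithms are parametric in both the terminal set and the coverage target, this delivers the required efficient $(O(1),1)$-decomposition and hence the theorem.

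The crux --- and the only place where care is needed --- is checking that the output of an off-the-shelf $k$-Steiner tree approximation genuinely satisfies Definition~\ref{def:decomposition} for \emph{every} target size $i$ and for every residual terminal set arising in the peeling process of Theorem~\ref{thm:DecompostionConstruction}; this holds because such algorithms take the terminal set and the coverage target as explicit inputs and provide a worst-case multiplicative guarantee against the exact optimum. If one prefers a self-contained route that avoids a black-box $k$-Steiner tree approximation, an equivalent approach is to run a prize-collecting Steiner tree algorithm with a Lagrangean-multiplier-preserving guarantee and perform the standard parametric search over the penalty value to extract, within a constant factor, a tree covering the desired number of terminals; this is essentially the argument underlying the cited constant-factor bounds and produces the same $(O(1),1)$-decomposition.
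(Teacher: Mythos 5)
Your proposal follows essentially the same route as the paper: invoke Theorem~\ref{thm:poly_variant} with $\gamma=1$, use the classical $O(\log k)$-competitive online Steiner tree algorithm of Imase and Waxman, and reduce everything to showing that offline Steiner tree is efficiently $(O(1),1)$-decomposable by interpreting the decomposition primitive as an instance of the $k$-Steiner tree (quota) problem. You also correctly identify the one subtlety worth checking, namely that the primitive must apply to every residual terminal set $R_i$ in the peeling construction, which it does since the approximation algorithm is parametric in the terminal set and the quota.

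The only place where you diverge from the paper is in how you justify the constant-factor $k$-Steiner tree approximation. The paper's Lemma~\ref{lem:k_Steiner_tree} gives a short, self-contained argument: pass to the metric closure restricted to the terminals (losing a factor of $2$ for shortcutting Steiner nodes), run the known $2$-approximation for $k$-MST on the resulting terminal-only metric, and map back, yielding a $4$-approximation to $k$-Steiner tree. You instead appeal to Lagrangean-relaxation / prize-collecting machinery (à la Chudak--Roughgarden--Williamson, or an LMP prize-collecting Steiner tree with parametric search over the penalty). Both paths lead to the same $(O(1),1)$-decomposability conclusion, so your proof is correct; the paper's reduction is somewhat more elementary and gives an explicit constant ($\alpha=4$), whereas yours defers to heavier but standard literature. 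Either is fine for the $O(\log\eta)$ bound, since only $\alpha=O(1)$ matters.
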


We make use of the $k$-MST and $k$-Steiner tree problem.

\begin{definition}[$k$-MST Problem]
    In the $k$-Minimum Spanning Tree problem, we are given an edge-weighted graph $G = (V, E)$, and a designated root vertex $r \in V$. The objective is to compute a tree rooted at $r$ that spans at least $k$ vertices while minimizing the total cost.

\end{definition}

There exist several constant-factor approximation algorithms for the $k$-MST problem (see~\cite{AK00, AR98, BRV99}), culminating in a $2$-approximation algorithm presented in~\cite{G2005}.

\begin{definition}[$k$-Steiner Tree Problem]
    In the $k$-Steiner Tree problem, we are given an edge-weighted graph $G = (V, E)$, where the vertex set is partitioned into terminals $\{r\} \cup T$ and Steiner nodes $S$, i.e., $V = T \cup S$. The objective is to compute a tree rooted at $r$ that spans at least $k$ terminals while minimizing the total cost of its edges.
\end{definition}

\begin{lemma}\label{lem:k_Steiner_tree}
    There exists a $4$-approximation for the minimum $k$-Steiner tree problem.
\end{lemma}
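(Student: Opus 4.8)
The plan is to reduce the $k$-Steiner tree problem to the rooted $k$-MST problem, and then invoke the known $2$-approximation for the latter. First I would build an auxiliary graph $G'$ on vertex set $\{r\}\cup T$: delete all Steiner nodes $S$, and between every pair of surviving vertices put an edge whose weight equals their shortest-path distance in $G$. Then every edge of $G'$ corresponds to a path of $G$ of equal weight, and $G'$-edge weights satisfy the triangle inequality.

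The key inequality is $\opt_{k\text{-MST}}(G') \le 2\cdot\opt_{k\text{-Steiner}}(G)$. To see it, take an optimal $k$-Steiner tree $\tau^*$ in $G$; it is rooted at $r$ and contains at least $k$ terminals. Double every edge of $\tau^*$, take an Eulerian closed walk through it, and shortcut this walk (using $G'$-edges) to a tree spanning exactly the terminals that lay on $\tau^*$; by the triangle inequality its weight is at most $2\,c(\tau^*)$, it is rooted at $r$, and it spans at least $k$ vertices of $G'$. So it is a feasible rooted $k$-MST solution in $G'$ of the claimed cost.

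Next I would run the $2$-approximation for rooted $k$-MST (Garg's algorithm, \cite{G2005}) on $G'$ with target $k$, obtaining a tree $\tau'$ in $G'$, rooted at $r$, spanning at least $k$ vertices, with $c(\tau') \le 2\,\opt_{k\text{-MST}}(G') \le 4\,\opt_{k\text{-Steiner}}(G)$. Finally I would lift $\tau'$ back to $G$: replace each of its edges by the corresponding shortest path in $G$; the union is a connected subgraph of $G$ containing $r$ and all $\ge k$ terminals of $\tau'$ (plus possibly some Steiner nodes), of total weight at most $c(\tau')$. Taking any spanning tree of this subgraph gives a feasible $k$-Steiner tree of cost at most $4\,\opt_{k\text{-Steiner}}(G)$, which proves the lemma.

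The only delicate point is the bookkeeping around the root and the terminal count: one must check that the doubling-and-shortcutting step can always keep $r$ in the tree and never drops the terminal count below $k$, and that passing from $G'$ back to $G$ can only add vertices (so the terminal count cannot decrease). Both are routine once $G'$ is defined as the metric closure on $\{r\}\cup T$, and they are exactly what produces the two factors of $2$ — metric-closure spanning tree versus Steiner tree, and the $k$-MST approximation ratio — whose product is the final factor $4$.
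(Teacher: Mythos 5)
Your proposal is correct and follows essentially the same approach as the paper: pass to the metric closure on $\{r\}\cup T$, lose a factor $2$ there (you spell out the doubling/Eulerian-shortcutting argument that the paper delegates to a citation), apply Garg's $2$-approximation for rooted $k$-MST, and lift back. The bookkeeping you flag at the end is handled exactly as you say.
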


\begin{proof}
    We shortcut all Steiner nodes $S$, by passing to the \emph{metric closure} restricted to terminals in $\{r\} \cup T$. That is, we consider the complete graph on vertex set $\{r\} \cup T$, where the cost of an edge between any pair of vertices corresponds to the cost of their shortest path in the original instance. Any tree can be mapped to the original instance with no greater cost. Conversely, by passing to the metric closure, the cost of any tree on terminals is increased by at most a factor of $2$,~\cite{KouMarkowskyBerman}. Hence, using the $2$-approximation for $k$-MST with root $r$ on the metric closure, and mapping back to the original graph, connects at least $k$ terminals with the root, and has total cost at most $4$ times that of a minimum $k$-Steiner tree. 
\end{proof}

\begin{proof}[Proof of Theorem~\ref{thm:learning_augmented_steiner_tree}]
Lemma~\ref{lem:k_Steiner_tree} shows that this problem is efficiently $(4,1)$-decomposable. As there there exists a $O(\log k)$-approximation for the Online Steiner Tree Problem~\cite{IW91}, our framework provides a $O(\log\eta)$-competitive online algorithm for Steiner tree.
\end{proof}

\end{fullversion}

\section{Experimental evaluation}\label{sec:experimental_evaluation}
In this section, we evaluate the performance of our framework for the set cover problem.

\begin{figure}[h]
    \centering
    \begin{subfigure}[b]{0.49\textwidth}
        \centering
        \includegraphics[width=\textwidth]{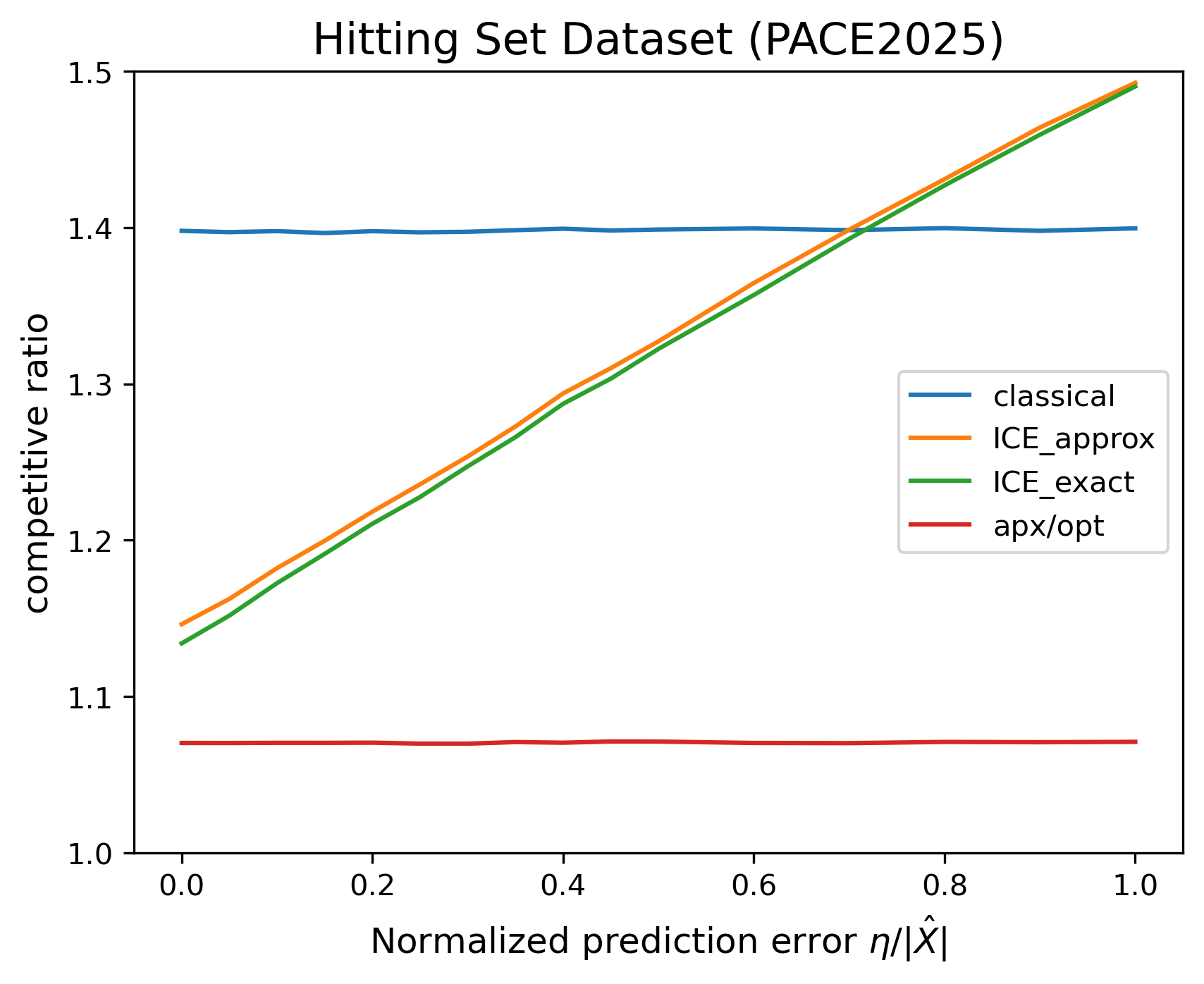}
    \end{subfigure}
    \hfill
    \begin{subfigure}[b]{0.49\textwidth}
        \centering
        \includegraphics[width=\textwidth]{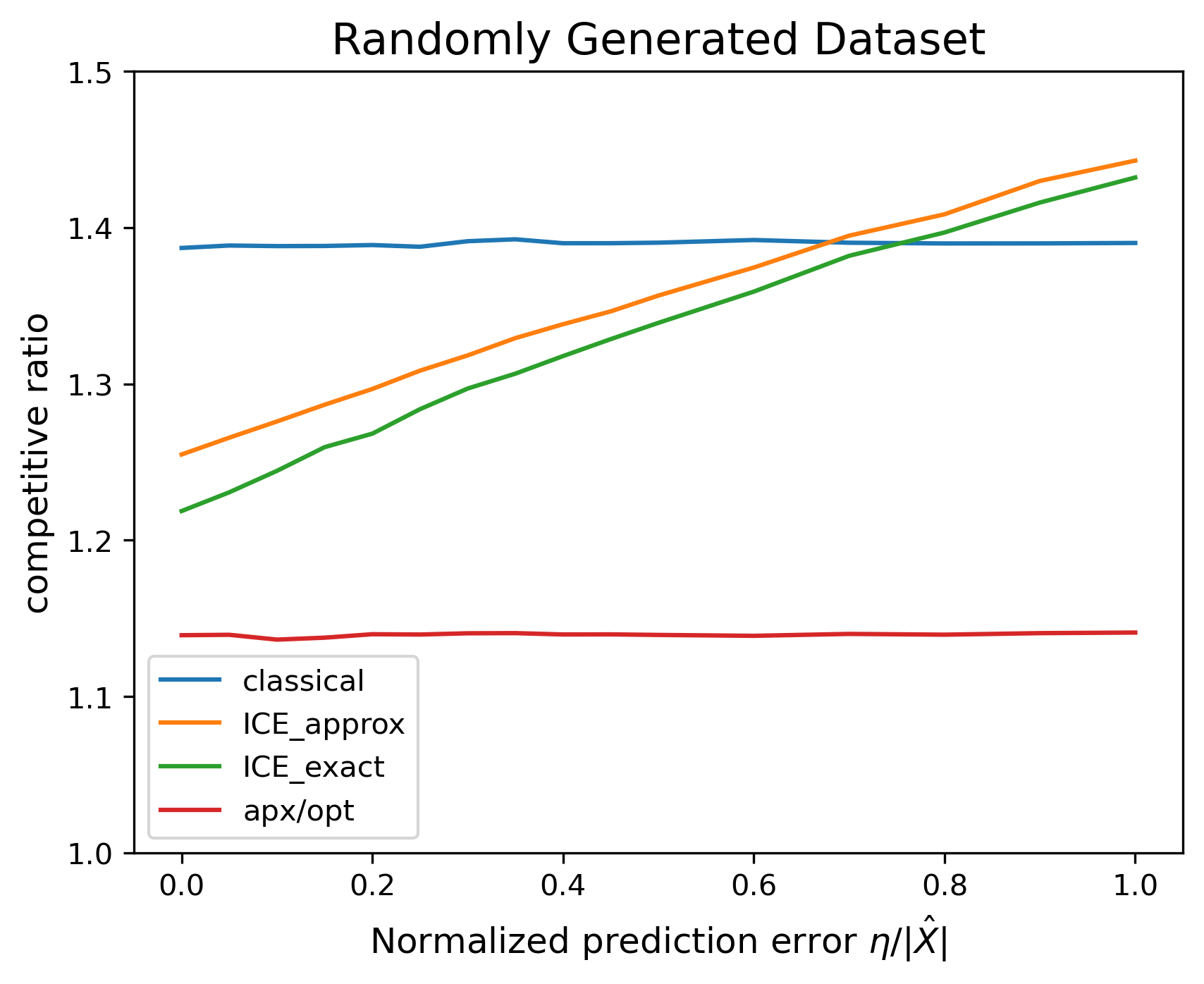}
    \end{subfigure}
\begin{fullversion}
\begin{minipage}{\textwidth}
\tiny
    \centering
        \begin{tabular}{lcccccccc}
\toprule
$\eta$ \% & 0 & 10 & 20 & 30 & 40 & 50 & 60 & 70  \\
\midrule
ICE\_exact (PACE) & 1.13 (0.04) & 1.17 (0.04) & 1.21 (0.05) & 1.25 (0.05) & 1.29 (0.06) & 1.32 (0.07) & 1.36 (0.08) & 1.39 (0.09) \\
ICE\_approx (PACE) & 1.15 (0.04) & 1.18 (0.04) & 1.22 (0.05) & 1.25 (0.06) & 1.29 (0.07) & 1.33 (0.08) & 1.36 (0.09) & 1.4 (0.1)  \\
classical (PACE) & 1.4 (0.06) & 1.4 (0.06) & 1.4 (0.06) & 1.4 (0.06) & 1.4 (0.06) & 1.4 (0.06) & 1.4 (0.06) & 1.4 (0.06) \\
ICE\_exact (random) & 1.22 (0.04) & 1.24 (0.05) & 1.27 (0.05) & 1.3 (0.05) & 1.32 (0.05) & 1.34 (0.06) & 1.36 (0.06) & 1.38 (0.06) \\
ICE\_approx (random) & 1.25 (0.05) & 1.28 (0.05) & 1.3 (0.05) & 1.32 (0.06) & 1.34 (0.05) & 1.36 (0.06) & 1.37 (0.06) & 1.39 (0.06) \\
classical (random) & 1.39 (0.04) & 1.39 (0.04) & 1.39 (0.04) & 1.39 (0.04) & 1.39 (0.04) & 1.39 (0.04) & 1.39 (0.04) & 1.39 (0.04) \\
\bottomrule
\end{tabular}
\end{minipage}
\end{fullversion}
    \caption{Competitive ratio for online set cover for varying prediction error $\eta$.}
    \label{fig:comp_ratio}

\end{figure}
\textbf{Dataset} We use two datasets. The first one is the public dataset from the ongoing PACE Challenge, from the exact track on the Hitting Set Problem,~\cite{PACE2025} (this is equivalent to the Set Cover problem). On average, these instances have $~2200$ elements, and $~1800$ sets. The second dataset is randomly generated. Each instance consists of $1000$ elements and $100$ sets, each set contains $50$ elements sampled uniformly from the groundset without replacement. Both datasets consist of $100$ instances. 

\textbf{Set-up and predictions} For every instance, we randomly fix $50\%$ of its elements to be included in the prediction \smash{$\hat{X}$}. We obtain the real instances $X$ from \smash{$\hat{X}$} by replacing an $\alpha$-fraction of the predicted elements by unpredicted elements, keeping the total number of arriving elements fixed. This results in the normalized prediction error \smash{$\eta / |\hat{X}| \, = 2\cdot \alpha$}. For each obtained instance $X$, we report the multiplicative gap between the solution of the greedy algorithm and that of an IP-solver (apx/opt).

\textbf{Evaluation} We use the the standard $O(\log k \log |\S|)$-competitive algorithm from~\cite{AlonAwerbuchBuchbinderNaor} to instantiate ICE (Algorithm~\ref{alg:ice}), and also use it as our baseline (classical). We compare the baseline to the (averaged) performance of our approach: we either use an exact decomposition for \smash{$\hat{X}$} (ICE\_exact), corresponding to Theorem~\ref{thm:exp_variant}, or an approximate decomposition (ICE\_approx), corresponding to Theorem~\ref{thm:poly_variant}. Whenever the baseline algorithm is about to make an arbitrary choice among several tied sets, we use the decomposition to break the tie. This does not affect our theoretical guarantees in any way. We achieve substantial improvements over the baseline, even for large prediction errors. The effects between using an exact or an approximate decomposition are negligible for the PACE dataset and quite small for the randomly generated instances, see Figure~\ref{fig:comp_ratio}. \begin{shortversion}
    Detailed statistics can be found in the full version and supplementary material.
\end{shortversion}

\begin{fullversion}
\textbf{Implementation} We use the Gurobi MIP-solver to compute the exact decomposition and the optima. They offer free academic licenses for researchers,~\cite{gurobi}. The rest is implemented in Python. The experiments were run on a High Performance Cluster, taking around $100$ hours to complete.
\end{fullversion}

\section{Acknowledgments}
L.S. and M.V. acknowledge support from 
the NWO VIDI grant VI.Vidi.193.087, and from
the FAIR - Future
Artificial Intelligence Research project (FAIR PE00000013 – CUP B43C22000800006).

\bibliographystyle{plainurl}
\bibliography{references_polished}
\end{document}